\documentclass[aps,pra,10pt,superscriptaddress,nofootinbib,twocolumn,floatfix]{revtex4-2}
\usepackage{graphicx}
\usepackage{amsmath,amsthm,amssymb,dsfont,enumitem}
\usepackage{multirow}
\usepackage{mdframed}
\usepackage{orcidlink}
\usepackage{appendix}
\usepackage{subcaption}
\usepackage{soul}
\usepackage{mathtools}
\newcommand\myeq{\stackrel{\mathclap{\normalfont\mbox{\tiny{NCHV}}}}{\leq}}
\newcommand\myeqq{\stackrel{\mathclap{\normalfont\mbox{\tiny{LHV}}}}{\leq}}
\usepackage{float}
\newcommand{\ket}[1]{\left| #1 \right\rangle}
\newcommand{\bra}[1]{\left\langle #1 \right|}

\newcommand{\braket}[2]{\langle #1|#2 \rangle}

\usepackage{hyperref}
\usepackage[capitalise,nameinlink]{cleveref}
\usepackage[authormarkuptext=name,commentmarkup=uwave]{changes}
\definechangesauthor[name={JRH},color=green!70!black]{jonte}

\newtheorem{theorem}{Theorem}
\definechangesauthor[name={EB},color=red!70!black]{Enrico}

\theoremstyle{definition} 
\newtheorem{definition}{Definition}[section]
\newtheorem{proposition}{Proposition}[section]
\newtheorem{assumpt}{Classical Assumption}
\begin{document}

\title{Warring Contextualities - Provably Classical vs Provably Nonclassical}
\author{Enrico Bozzetto}
\affiliation{DET, Politecnico di Torino, Corso Duca degli Abruzzi, 24, 10129 Torino, Italy}
\affiliation{Quantum Group, School of Computing, Newcastle University, 1 Science Square, Newcastle upon Tyne, NE4 5TG, UK}
\author{Jonte R. Hance\,\orcidlink{0000-0001-8587-7618}}
\email{jonte.hance@newcastle.ac.uk}
\affiliation{Quantum Group, School of Computing, Newcastle University, 1 Science Square, Newcastle upon Tyne, NE4 5TG, UK}

\begin{abstract}
 In the literature, there are two differing definitions of contextuality: Kochen and Specker's, and Spekkens' (or ``generalised''). However, researchers using one of these definitions rarely consider the other, meaning comparative analysis of these two notions is rare. In this paper, we advance the idea that Kochen-Specker contextuality provides a generalisation of the idea of system being fundamentally \emph{nonclassical}, while Spekkens' noncontextuality provides a generalisation of the idea of a system being \emph{classical}. This allows us to reconcile the two approaches, as different stages in a hierarchy of classicality/nonclassicality.
\end{abstract}

\maketitle

\section{Introduction}

Contextuality is one of the most peculiar feature of quantum mechanics. It is the principle that the value of an observable is determined by its surrounding context - how we choose to measure that observable, or what other observables we've measured on the system previously - rather than being intrinsic or measurement-independent, as we would expect physical properties to be.
It is an aspect of quantum systems that defies common sense, and for this reason it represents one of the most characteristic features of quantum systems. It also seems to be an aspect of quantum systems which is intimately tied to their practical usefulness, or potential quantum advantage~\cite{Raussendorf2013, Howard2014, shahandeh2021quantumcomputationaladvantage, Giordani2023Experimental, Flatt2026QAdvantage}. However, the question of how best to formally define contextuality has become somewhat contentious.

In the literature, there are two key definitions of (non)contextuality: Kochen-Specker (non)contextuality~\cite{KS1967,Budroni2022KSContextuality} (as has been built on by Larsson, Cabello~\cite{Cabello1996,CabelloSeveriniWinter2014,Cabello2015NecessaryAndSufficient,Cabello2018TIFS,Cabello2021ConvertingCtoNL,Cabello2012SI-CtoNonlocality}, Abramsky~\cite{Abramsky2011Sheaf}, Dzhafarov, etc), and Spekkens' ``generalised'' (non)contextuality~\cite{Spekkens2005,Spekkens2007Negativity} (as has been been built on by Leifer~\cite{Leifer2013}, Sainz, Selby, Schmidt~\cite{Schmid2021,Schmid2024structuretheorem,Schmid2025shadowssubsystemsof, Selby2023Incompat}, etc).
Given discussions around which of these notions of (non)contextuality should be used can often become somewhat polemical in papers from either of these two camps, researchers new to the area may struggle to understand the difference between the two, or their relative areas of applicability. 

In this paper, we advance the idea that Kochen-Specker contextuality provides a provable signature of a system being fundamentally \emph{nonclassical} in a way generalised from notions of Bell-nonlocality, while Spekkens' ``generalised'' noncontextuality provides a provable signature of a system being a generalised form of \emph{classical} (i.e., obeying some sort of tomographic completeness). 

This hopefully will allow us to reconcile the two approaches, as different stages in a hierarchy of classicality/nonclassicality (depending on which of these two directions one wishes to focus on). Such an understanding should help the two communities each move from viewing the alternative notion as a competitor, to instead viewing it as complementary, with each to be used within its own area of applicability.

This paper is laid out as follows. In Section~\ref{sec:reviewing}, we first review Kochen-Specker (Section~\ref{subsec: KS contextuality}) and Spekkens' ``generalised'' (non)contextuality (Section~\ref{subsec:Spekkens}), with specific focus on the Kochen-Specker noncontextual polytope (Section~\ref{subsubsec:KSnoncontextualpolytope}) and the simplex embeddability criterion for Spekkens' noncontextuality (Section~\ref{subsubsec:simplex}). In Section~\ref{sec:differences} then look at the differences between these two notions, and show that while Spekkens noncontextuality implies Kochen-Specker noncontextuality (and so, by contraposition, Kochen-Specker contextuality implies Spekkens contextuality), the inverse does not hold - Spekkens contextuality does not imply Kochen-Specker contextuality, and Kochen-Specker noncontextuality does not imply Spekkens noncontextuality (Section~\ref{subsec:KSCimpliesSpekkensC}). 

In Section~\ref{sec:provably}, we try to link this hierarchy of implication to ideas of classicality and quantumness. First, in Section~\ref{subsec: A classical system must be Spekkens noncontextual} we define sufficient criteria for a system to be ``classical'', and show that any system which meets these criteria must be Spekkens-noncontextual. Next, in Section~\ref{subsec:Bell}, we give Bell-nonlocality as a suitable sufficient condition for a system being nonclassical (and in Section~\ref{subsubsec:BellPolytope} introduce the Bell polytope as a geometric quantification of the border between Bell-locality and Bell-nonlocality). This allows us in Section~\ref{subsec:KSandBell} to use sheaf theory (Section~\ref{subsubsec:Sheaf}) to show Bell-nonlocality implies Kochen-Specker contextuality (Section~\ref{subsubsec:UsingSheafBellimpliesKS}), and to show that (in certain circumstances) Kochen-Specker contextuality implies Bell-nonlocality (Section~\ref{subsubsec:KSimpliesBellish}), allowing us to set Kochen-Specker contextuality as a sufficient condition for nonclassicality. In Section~\ref{subsec:SpekkensandBell} we then show that Spekkens contextuality does not imply Bell-nonlocality (Section~\ref{subsubsec:SpekkensnotimplyBell}) but that Bell-nonlocality implies Spekkens contextuality (Section~\ref{subsubsec:BellimpliesSpekkens}), setting Spekkens contextuality as a necessary but not sufficient condition for nonclassicality.

We then summarise all these relations between (non)classicality, Kochen-Specker (non)contextuality, Spekkens (non)contextuality, and Bell (non)locality in Section~\ref{subsec:RelationSummary}, before discussing the implications of these relations in Section~\ref{sec:discussion}.

\section{Reviewing Contextualities}\label{sec:reviewing}

\subsection{Kochen-Specker Contextuality}
\label{subsec: KS contextuality}
The concept of contextuality in quantum mechanics was first formalised by Bell in 1966 \cite{Bell1966} and then by Kochen and Specker in 1967 \cite{KS1967}. Their seminal work, now known as the (Bell-)Kochen-Specker theorem, demonstrates that quantum mechanics is incompatible with non-contextual hidden variable models that assume outcome determinism.

Consider indeed a $d$-dimensional Hilbert space $\mathcal{H}$ with a set of $d$ rank-1 projectors $\{P_i\}_{i=1}^d$ associated with an orthonormal basis. Assume these projectors satisfy the conditions of orthogonality and completeness:
\begin{equation}
    P_i P_j = 0 \quad \text{for} \,\, i \neq j, \quad \text{and} \quad \sum_{i=1}^d P_i = \mathds{1}.
\end{equation}
Given the above relations, projectors in the set obviously commute ($\forall i, j, \;[P_i,P_j]=0$).
These projectors are typically interpreted as a set of mutually exclusive logical propositions, such that we would expect them to be able to be assigned a Boolean truth value - i.e., 0 or 1. Two different propositions $Q_i,Q_j$ are exclusive, i.e. they cannot be simultaneously "true" for $i\neq j.$. Moreover $Q_1,...,Q_d$ cannot be all simultaneously "false"; one of them must be true.

A ``context'' is defined as a set of $d$ mutually commuting observables that can be measured simultaneously. Notably, for $d\geq3$, a given projector $P_\alpha$ can be in multiple different contexts - i.e., can be orthogonal to two other projectors $P_\beta$ and $P_\gamma$ which are not themselves orthogonal, and so do not commute ($[P_\alpha,P_\beta]=[P_\alpha,P_\gamma]=0$, but $[P_\beta,P_\gamma]\neq0$). We can treat the ``measurement'' of a context as simultaneously assigning a value to all projectors in that context - given their orthogonality, an assignment of value 1 to one of the projectors, and an assignment of value 0 to all the other projectors in that context.

The Kochen-Specker theorem explores this idea of projectors for $d\geq3$ being able to be in multiple different contexts. It was originally formulated in the simplest scenario where it can be formulated, i.e. $d=3$~\cite{KS1967}. The authors provided a physical interpretation of certain rank-1 projectors in $d=3$ as spin operators for a spin-1 particle. In this way they replaced complex vectors belonging to Hilbert space with vectors $\vec{v}\in \mathbb{R}^3$. The theorem states that it is impossible to assign a definite value (0 or 1) to all the projectors in this set in a way that is independent of the context in which they are measured.

\begin{theorem}[Kochen and Specker, 1967]
There exists a finite set $S \subset \mathbb{R}^3$ such that no value-assignment function $f: S \rightarrow \{0,1\}$ can satisfy
\begin{equation}
    f(\vec{u}) + f(\vec{v}) + f(\vec{w}) = 1
\end{equation}
for all triplets $(\vec{u}, \vec{v}, \vec{w})$ of mutually orthogonal vectors in $S$.
\end{theorem}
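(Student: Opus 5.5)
The plan is to prove the statement by exhibiting an explicit finite set $S\subset\mathbb{R}^3$ and showing by a short case analysis that no $\{0,1\}$-assignment satisfies the triad condition. Two consequences of the condition do all the work. \emph{Exclusivity}: if $\vec u\perp\vec v$ and the triad $(\vec u,\vec v,\vec u\times\vec v)$ lies in $S$, then $f(\vec u)$ and $f(\vec v)$ cannot both be $1$. \emph{Completion}: if two members of a triad receive $0$, the third must receive $1$. Only rays matter, since $f$ may be taken constant on $\pm\vec v$. So I will build $S$ to be closed under the cross products needed to complete every orthogonal pair I use into a triad.

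For the concrete set I would use Peres' $33$ rays. These are all rays whose coordinates, up to permutation and sign, are $(1,0,0)$, $(1,1,0)$, $(1,1,\sqrt2)$ or $(1,-1,\sqrt2)$, suitably restricted. This set contains the $16$ orthogonal triads listed by Peres. It is invariant under the $48$-element symmetry group of the cube, modulo the $\sqrt2$ placement, and that symmetry is what keeps the case analysis small.

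The argument proceeds by propagating values.
\begin{enumerate}[label=(\roman*)]
\item Among the coordinate triad $(1,0,0),(0,1,0),(0,0,1)$ exactly one ray gets value $1$. By symmetry I assume it is $(0,0,1)$.
\item In the triad $(1,1,0),(1,-1,0),(0,0,1)$ both $(1,\pm1,0)$ are then forced to $0$. Again by symmetry I assume $(1,0,1)$ receives $1$ in its triad with $(1,0,-1)$ and $(0,1,0)$ (the alternative case is handled similarly).
\item From these two value-$1$ rays, exclusivity forces $0$ on every ray of $S$ orthogonal to either of them. Completion then forces new $1$s, and these propagate further.
\item Following Peres' chain of triads through the $\sqrt2$-rays, the propagation eventually forces two mutually orthogonal rays of some triad to both equal $1$, or all three rays of a triad to equal $0$. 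Either outcome contradicts the triad condition.
\end{enumerate}

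The main obstacle is step (iv): choosing the order of triads so that the forced chain closes quickly, and checking that the symmetry reductions in (i)–(ii) really cover all cases. I would first verify the needed orthogonalities by dot products with entries in $\{0,\pm1,\pm\sqrt2\}$, which is routine.

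If the bookkeeping becomes unwieldy, I would fall back on the original Kochen–Specker gadget strategy. There, a fixed finite configuration shows that $f(\vec a)=1$ forces $f(\vec b)=1$ whenever the angle between $\vec a$ and $\vec b$ is small enough (for instance $\sin\theta\le 1/3$). Chaining finitely many such gadgets from $\vec a$ to a vector orthogonal to $\vec a$ forces $1$ on two orthogonal rays. That contradicts exclusivity, once the last pair is completed to a triad in $S$.
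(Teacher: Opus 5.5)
The paper gives no proof of this statement; it only cites Kochen and Specker's $117$-vector construction. Your argument therefore has to stand on its own, and with the set you chose it cannot work. The statement constrains $f$ only on orthogonal triads lying entirely in $S$. You note this yourself, since exclusivity needs $\vec u\times\vec v\in S$. But Peres' $33$ rays are not closed under the cross products this requires. These rays are the permutations and sign changes of $(0,0,1)$, $(0,1,1)$, $(0,1,\sqrt2)$ and $(1,1,\sqrt2)$; your list omits the twelve $(0,1,\sqrt2)$-type rays. Their $16$ triads form a hypertree:
\begin{itemize}
\item each axis lies in four triads, each face diagonal in two, and every $\sqrt2$-ray in exactly one;
\item the triads are connected, and $16\cdot 2=33-1$.
\end{itemize}
So a valid $f$ exists. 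Put $f=1$ on $(0,0,1)$, $(1,0,1)$ and $(0,1,1)$, then extend one triad at a time; each new triad meets the earlier ones in a single ray. Peres' contradiction instead uses exclusivity on the $24$ orthogonal pairs that lie in no triad. An example is $(1,1,\sqrt2)\perp(\sqrt2,0,-1)$, whose completion $(-1,3,-\sqrt2)$ is not in the set. Your step (iii), ``exclusivity forces $0$ on every ray of $S$ orthogonal to either of them'', makes exactly this unlicensed move. So step (iv) cannot close for this $S$.

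The repair is the one you announced but did not carry out: add the $24$ completing rays, giving $57$ rays. The cube symmetry survives because $R\vec u\times R\vec v=(\det R)\,R(\vec u\times\vec v)$ for orthogonal $R$. Step (iv) is then short, and it should be written out, since it is the entire content of the theorem:
\begin{itemize}
\item Take $f=1$ on $(0,0,1)$, $(1,0,1)$ and $(0,1,1)$, as your symmetry reductions allow.
\item The triads through $(1,0,0)$, $(0,1,0)$, $(1,0,-1)$, $(0,1,-1)$ and $(1,-1,0)$ each have a $0$ at that ray. Hence each has exactly one $1$ on its two $\sqrt2$-rays.
\item Split on which of $(1,1,\pm\sqrt2)$ receives the $1$, and propagate through the completed triads. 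In either case both $(1,-1,\sqrt2)$ and $(1,-1,-\sqrt2)$ are forced to $0$.
\item The triad $\{(1,1,0),(1,-1,\sqrt2),(1,-1,-\sqrt2)\}$ then sums to $0$, which is the contradiction.
\end{itemize}

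Your fallback has a similar hole. A gadget chain from $\vec a$ to some $\vec b\perp\vec a$ only yields $f(\vec a)=0$. To reach a contradiction you need such chains around all three members of one orthogonal triad (Kochen and Specker's $\Gamma_3$). The $\sin\theta\le 1/3$ gadget lemma must also be proved, with every orthogonality it uses completed inside $S$.
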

There are two ways to explain this result: the value assigned to an observable must either be non-existent prior to measurement, or must change depending on which other compatible observables are being measured alongside it.
In quantum mechanics, projectors onto the state of a system are used to determine the truth values of propositions regarding its physical properties. Therefore this theorem demonstrates that any assignment of these truth values is inherently dependent on the context in which the projectors are applied. If we assume that the properties of a physical system possess a value also before a measurement, then these values must  depend on the context in which they are measured. Properties are contextual.

While the original proof used 117 vectors~\cite{KS1967}, subsequent work has reduced this to much smaller sets, though the requirement for the dimension of the Hilbert space remains $d \geq 3$. This is because the theorem requires that there exist at least two orthogonal projectors for every projector, and so that a projector can belong to at least two different contexts, which is not satisfiable in a two dimensional Hilbert space.

The simplest example in which we can witness contextuality is the KCBS scenario \cite{Klyachko2008Simple}. This is defined by five measurements in a three dimensional Hilbert space $\{A_0,..,A_4\}$, all with outcomes $a_i\in \{-1,1\}$, where each pair $A_i$ and $A_{i+1}$ (modulo 5) are compatible. For a noncontextual hidden variable model (NCHV), where the results of any measurement reflect measurement-independent properties of the system (represented by a hidden variable $\lambda)$, the following inequality must be valid:
\begin{equation}
\label{eq: KCBS}
    \langle A_0A_1\rangle+\langle A_1A_2\rangle+\langle A_2A_3\rangle+\langle A_3A_4\rangle+\langle A_4A_0\rangle \geq -3
\end{equation}
where $\langle A_iA_j\rangle=\sum_{a_i,a_j}a_ia_jp(a_i,a_j)$.
However if one considers the state $\ket{\psi}=\ket{0}$ and measurement settings 
\begin{equation}
    \begin{split}
        &A_j=2\ket{v_j}\bra{v_j}-\mathds{1},\text{ where}\\
        &\ket{v_j}=\cos(\theta)\ket{0}+ \sin(\theta) \cos(\frac{4\pi j}{5})\ket{1}\\
        &\;\;\;\;\;\;\;\;\;\;\;+\sin(\theta) \sin(\frac{4\pi j}{5})\ket{2},\\
        &\text{ for }\cos^2(\theta)=\frac{\cos(\pi/5)}{1+\cos(\pi/5)},
    \end{split}
\end{equation}
the left-hand side of the inequality in Eq.~\eqref{eq: KCBS} can equal $5-4\sqrt{5}\thickapprox -3.94$. The violation of the KCBS equation is the simplest proof of the impossibility of having a NCHV model for quantum mechanics.

While the notion of a context as a set of compatible observables which we used above works well for theory, in practice it is harder to apply, as it is difficult experimentally to prove that two observables are compatible. Therefore, there are two main definitions of Kochen-Specker contexts in the literature: the OP (Observable Perspective) definition, and the EP (Effect Perspective) definition.
In the first case, the basic components remain observables, while a context becomes a set of observables that satisfy outcome repeatability, as well as statistical non-disturbance conditions~\cite{Heinosaari_2010} and their generalisation to arbitrary sequences.\\
In the EP definition instead a context is identified operationally, as consisting simply of a single measurement. The basic components are then the eigenstates of this measurement. This definition however leaves us with the problem of identifying the same effect in different measurements. In this view Spekkens's definition of contextuality (see Section~\ref{subsec:Spekkens}) plays an important role, given it identifies effects as being the same based on their observed statistics.

The OP definition can be expanded to treat also the cases of real, non-perfectly compatible measurements. 
An explanation of this approach is present in Appendix \ref{appendix: KS contextuality for non-perfectly commuting observables}.

\subsubsection{Kochen-Specker Noncontextual Polytope}\label{subsubsec:KSnoncontextualpolytope}
It is possible to represent geometrically the set of all possible behaviours that can be achieved within a
Kochen-Specker noncontextual model. This set forms a polytope - the Kochen-Specker noncontextual polytope ($\mathcal{P}_{KS}$)~\cite{Budroni2022KSContextuality}. 
One of the fundamental properties of a polytope is that it can be defined both using its vertices and its facets. We can define the noncontextual polytope as the convex hull of all deterministic noncontextual assignments of events, or as a finite collection of facet inequalities, i.e. noncontextuality inequalities.\\
Noncontextuality inequalities provide bounds obeyed by noncontextual hidden-variable models, in analogy with Bell inequalities~\cite{Bell1966} that provide bounds for local hidden-variable models. Ref.~\cite{Badziag2009} proved that each Kochen-Specker set (a set of vectors which does satisfy Kochen-Specker theorem) can be converted into an inequality with the following structure
\begin{equation}
    \sum_i\langle A_i \rangle-\frac{1}{2}\sum_{(i,j)}\langle A_iA_j\rangle\leq B_{NC}
\end{equation}
where $\{A_i\} $ is a set of observables and $B_{NC}$ is the maximum value allowed by the noncontextual hidden-variable model. In particular $B_{NC}=n(d-2)-2$ where $n$ is the number of observables and $d$ the number of different contexts~\cite{Badziag2009}.

Given a specific set of $n$ measurement settings and $n$ outcomes, we have $n$ inequalities that must be respected by the system to be KS noncontextual. Each of these inequalities form a face of the noncontextual polytope. This polytope will be useful later for proving relations between Kochen-Specker (non)contextuality and other properties of systems.

\subsection{Spekkens' Generalised Contextuality}
\label{subsec:Spekkens}
A nominally-broader notion of contextuality was introduced by Spekkens in 2005 \cite{Spekkens2005}. This framework generalises the Kochen-Specker definition by shifting from a logic of projectors to an operational representation of experiments. In this model, the primitive elements are experimental procedures: preparations ($P$), transformations ($T$), and measurements ($M$).

Central to this definition is the concept of Operational Equivalence. Two preparation procedures $P$ and $P'$ are considered (operationally) equivalent if they yield identical statistics for all possible measurements:
\begin{equation}
    P \sim P'\leftrightarrow p(k|P,M) = p(k|P',M) \quad \forall\  M, k.
\end{equation}
Similarly, two measurements $M$ and $M'$ are equivalent if they yield the same statistics for all possible preparations:
\begin{equation}
    M\sim M'\leftrightarrow p(k|P,M)=p(k|P,M') \quad \forall\ P,k
\end{equation}
Finally, two transformations are equivalent if they yield the same statistics for all possible preparations and measurements:
\begin{equation}
    T\sim T'\leftrightarrow p(k|P,T,M)=p(k|P,T',M) \quad \forall\  P,M,k
\end{equation}

An ontological model of an operational theory is defined as a map from the operational procedures of a scenario (preparations, transformations, measurements) to a deeper, underlying reality, whose attributes are defined regardless of what anyone knows about them.
In particular an ontological model is defined by three primary elements: a space of ontic states $\lambda \in\Lambda$ which contains all possible states the system can be in, a function that assigns each quantum state $\rho$ to a probability measure $\mu_\rho$ over this ontic state space, and a function that assigns each quantum observable to a Markov kernel for each ontic state in this state space.\footnote{A Markov kernel from $(X,\Sigma_X)$ to $(Y,\Sigma_Y)$ is a map that assigns to every $x \in X$ a probability measure $K(x,\cdot)$ on $(Y,\Sigma_Y)$ such that for every measurable set $A \in \Sigma_Y$, the function $x \mapsto K(x,A)$ is $\Sigma_X$-measurable.} The necessary condition for the validity of the ontological model is that the model must reproduce the same statistical predictions as the operational theory we are considering \cite{Tezzin2025Luders}.
The probability of an outcome is therefore given by the law of total probability:
\begin{equation}
    p(k|P,T,M) = \int_\Lambda d\lambda d\lambda'\, \mu_P(\lambda) \Gamma_T(\lambda',\lambda) \xi_{M,k}(\lambda')
    \label{equation ontological procedures representation}
\end{equation}
where $\mu_P(\lambda):\Lambda\rightarrow[0,1]$ is the probability distribution associated with the preparation $P$, $\Gamma_T(\lambda',\lambda):\Lambda\times \Lambda\rightarrow[0,1]$ is the transformation matrix which represents the probability of moving from the ontic state $\lambda$ to the state $\lambda'$ and $\xi_{M,k}(\lambda'):\Lambda\rightarrow[0,1]$ is the probability of obtaining the outcome $k$ performing the measurement $M$, given the ontic state $\lambda'$.

In the ontological model then, the condition of equivalence between two preparations can be expressed as follows
\begin{equation}
\label{eq: prep equivalence}
\begin{aligned}
    P\sim P' \leftrightarrow    & \int_\Lambda \xi_{M,k}(\lambda)\,\mu_{P}(\lambda)d\lambda
    =\\&
    \int_\Lambda \xi_{M,k}(\lambda)\,\mu_{P'}(\lambda)d\lambda
    \quad \forall\,M,k
\end{aligned}
\end{equation}
The condition of equivalence between two measurement instead can be expressed as
\begin{equation}
\label{eq: meas equivalence}
\begin{aligned}
    M\sim M' \leftrightarrow     &\int_\Lambda \xi_{M,k}(\lambda)\,\mu_P(\lambda)d\lambda
    =\\&
    \int_\Lambda \xi_{M',k}(\lambda)\,\mu_P(\lambda)d\lambda
    \quad \forall\,P, k
    \end{aligned}
\end{equation}
The condition of equivalence between two transformations at the end, can be expressed as
\begin{equation}
\label{eq: transf equivalence}
\begin{aligned}
    &T\sim T' \leftrightarrow\\
    &\int_\Lambda d\lambda d\lambda'\, \mu_P(\lambda) \Gamma_{T}(\lambda',\lambda) \xi_{M,k}(\lambda')=\\&\int_\Lambda d\lambda d\lambda'\, \mu_P(\lambda) \Gamma_{T'}(\lambda',\lambda)\xi_{M,k}(\lambda')\quad \forall P,M,k
\end{aligned}
\end{equation}

It is possible to define an equivalence class of procedures as a set of equivalent procedures and a context as the set of features that are not specified by specifying the equivalence class. Hence an ontological model is defined noncontextual if every experimental procedure depends only on its equivalence class, and not on its context. It is possible to rephrase the previous statement to the following equivalent one: a model is defined as noncontextual if operationally equivalent procedures have identical ontological representations. This means that
\begin{equation}
\begin{split}
    &P\sim P'\Rightarrow\mu_P=\mu_{P'},\\& T\sim T'\Rightarrow \Gamma_T=\Gamma_{T'},\\& M\sim M'\Rightarrow\xi_{M,k}=\xi_{M',k}
    \end{split}
\end{equation}
In particular a system is defined as (Spekkens) preparation contextual if the first condition does not hold, (Spekkens) transformation contextual if the second condition does not hold and (Spekkens) measurement contextual if the third condition does not hold. Moreover a system is defined as (Spekkens) universally contextual if all the three previous conditions do not hold.

Quantum mechanics has been proven to be Spekkens contextual since it does not admit a noncontextual ontological model~\cite{Spekkens2005}.

\subsubsection{Simplex embeddability}\label{subsubsec:simplex}

Simplex embeddability was introduced in Ref.~\cite{Schmid2021} as a necessary and sufficient geometric condition for defining a system as Spekkens noncontextual. To describe it we must firstly define the concept of generalised probabilistic theory (GPT). A GPT is a theory that we can obtain from an operational theory by discarding information about those experimental procedures which can be varied without affecting the operational statistics. A GPT associates to a system a convex set of states, $\Omega$, which live in a inner product space $(V,\langle\_,\_\rangle)$ one dimension higher than the affine space of $\Omega$. A GPT also associates to a system a set of effect vectors $\mathcal{E}$ such that the probability of obtaining an effect $e \in \mathcal{E}$ given the state $s\in \Omega$ is given by the scalar product $\langle e,s\rangle$. If one defines the dual of $\Omega$, denoted $\Omega^*$, as the set of vectors in $V$ whose inner product with all state vectors in $\Omega$ is between 0 and 1, then we require that $\mathcal{E}\subseteq\Omega^*$. The GPT must also satisfy the assumption of \textbf{Tomographic Completeness}: the GPT states and GPT effects must be uniquely identifiable by the probabilities that they produce. This means that if two states give the same probability as each other for every possible effect, then the two states must be the same; and if two effects give the same probability if applied to any state, they must be the same effect. Mathematically
\begin{equation}
    (\langle\textbf{e},\textbf{s}_1\rangle=\langle\textbf{e},\textbf{s}_2\rangle\ \forall \textbf{e} \in \mathcal{E})\leftrightarrow (\textbf{s}_1=\textbf{s}_2)
\end{equation}
for the states, and
\begin{equation}
\left(\langle\textbf{e}_1,\textbf{s}\rangle=\langle\textbf{e}_2,\textbf{s}\rangle\ \forall \textbf{s} \in \Omega\right)\leftrightarrow (\textbf{e}_1=\textbf{e}_2)
\end{equation}
for the effects. 

A GPT $G$ is therefore defined by the quadruple
\begin{equation}
G:=(V,\langle\_,\_\rangle,\Omega,\mathcal{E})    
\end{equation}
satisfying these constraints.

A GPT can be associated to an operational theory $T$ specifying two maps: 
\begin{equation}
    \textbf{s}\_:\{P\}\rightarrow\Omega\quad  \quad \textbf{e}\_:\{E\}\rightarrow\mathcal{E}
\end{equation}
where $\{P\}$ represents the set of possible preparations described by the operational theory and $\{E\}$ the set of possible effects described by the operational theory. These maps satisfy the constraint
\begin{equation}\label{eq.GPTProbRule}
    p([k|M],P)=\langle\textbf{e}_{[k|M]},\textbf{s}_P\rangle \ \forall P\in\{P\} ,E\in\{E\}
\end{equation}
where the left hand side is the probability of obtaining a result $k$ when performing measurement $M$ on a system prepared with preparation $P$ in the operational theory.
The Tomographic Completeness assumption for the GPT ensures
\begin{equation}
    P\simeq P'\iff \textbf{s}_P=\textbf{s}_{P'}
\end{equation}
and 
\begin{equation}
    [k|M]\simeq [k'|M']\iff \textbf{e}_{[k|M]}=\textbf{e}_{[k'|M']}
\end{equation}
Previously in this section, we defined the concept of an ontological model of an operational theory. This concept can be extended to GPTs.
An ontological model of a GPT associates to each GPT state vector $\textbf{s}\in \Omega$ a normalised probability distribution over the set of all ontic states $\Lambda$, denoted $\tilde{\mu}_\textbf{s} \in \mathcal{D}[\Lambda]$, and to each GPT effect vector $\textbf{e}\in \mathcal{E}$ a response function on $\Lambda$, denoted $\tilde{\xi}_\textbf{e} \in \mathcal{F}[\Lambda]$. Equivalently, the ontological model defines a pair of maps
\begin{equation}
    \tilde{\mu}\_:\Omega \rightarrow\mathcal{D}[\Lambda] \quad \text{and}\quad \tilde{\xi}\_:\mathcal{E}\rightarrow\mathcal{F}[\Lambda]
\end{equation}
where $\mathcal{D}[\Lambda]$ is the set of possible probability distribution associated to a preparation $P$ of the operational theory and $\mathcal{F}[\Lambda]$ is the set of response functions associated to a measurement $M$ with output $k$ of the operational theory. The two maps must satisfy the condition that the ontological model reproduces the probability rule of the GPT, (Eq.~\ref{eq.GPTProbRule}), so
\begin{equation}
    \langle\textbf{e},\textbf{s}\rangle=\sum_{\lambda\in \Lambda}\tilde{\xi}_\textbf{e}(\lambda)\tilde{\mu}_\textbf{s}(\lambda)
\end{equation}

From this, Ref.~\cite{Schmid2021} showed that:
\begin{proposition}
\label{prop: non contextual GPT}
    \textit{There exists a generalised-noncontextual ontological model of an operational theory $T$ describing prepare-measure experiments on a system iff there exists an ontological model of the GPT $G$ that $T$ defines.}
\end{proposition}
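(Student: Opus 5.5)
The plan is to prove both directions by pushing ontological representations through the maps $\textbf{s}\_$ and $\textbf{e}\_$ that associate the GPT $G$ to $T$. The whole argument rests on the tomographic-completeness identities stated above:
\[
P\simeq P' \iff \textbf{s}_P=\textbf{s}_{P'}, \qquad [k|M]\simeq[k'|M'] \iff \textbf{e}_{[k|M]}=\textbf{e}_{[k'|M']}.
\]
These say that the equivalence classes of procedures of $T$ are in bijection with the images of $\textbf{s}\_$ and $\textbf{e}\_$. We take $\Omega$ and $\mathcal{E}$ to consist exactly of these images; the paper's definition of the GPT obtained from $T$ suggests this.

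\emph{($\Leftarrow$) From a GPT model to a noncontextual model of $T$.} Suppose $(\tilde{\mu}\_,\tilde{\xi}\_)$ is an ontological model of $G$. Define
\[
\mu_P:=\tilde{\mu}_{\textbf{s}_P}, \qquad \xi_{M,k}:=\tilde{\xi}_{\textbf{e}_{[k|M]}}.
\]
The probability rule follows by composing two identities. The GPT model reproduces $\langle \textbf{e},\textbf{s}\rangle$, and Eq.~\eqref{eq.GPTProbRule} gives $p([k|M],P)=\langle\textbf{e}_{[k|M]},\textbf{s}_P\rangle$. Together,
\[
\sum_\lambda \xi_{M,k}(\lambda)\,\mu_P(\lambda)=\langle\textbf{e}_{[k|M]},\textbf{s}_P\rangle = p([k|M],P).
\]
Noncontextuality is then automatic. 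If $P\sim P'$, then $\textbf{s}_P=\textbf{s}_{P'}$, so $\mu_P=\mu_{P'}$; the same argument handles effects.

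\emph{($\Rightarrow$) From a noncontextual model of $T$ to a GPT model.} Suppose $T$ has a generalised-noncontextual model $(\mu_P,\xi_{M,k})$. For each $\textbf{s}\in\Omega$, choose any $P$ with $\textbf{s}_P=\textbf{s}$ and set $\tilde{\mu}_\textbf{s}:=\mu_P$; define $\tilde{\xi}_\textbf{e}$ analogously. These choices must not depend on the representative. If $\textbf{s}_P=\textbf{s}_{P'}$, tomographic completeness gives $P\sim P'$, and preparation noncontextuality then gives $\mu_P=\mu_{P'}$. The same reasoning, using measurement noncontextuality, handles effects. The probability rule $\langle\textbf{e},\textbf{s}\rangle=\sum_\lambda\tilde{\xi}_\textbf{e}(\lambda)\tilde{\mu}_\textbf{s}(\lambda)$ then follows from the same chain of equalities as before, run in reverse.

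The main obstacle is the well-definedness step in ($\Rightarrow$). It needs two things. First, the operational equivalences of $T$ must coincide exactly with equality of GPT vectors; this holds only because $G$ is defined as the quotient of $T$, so it is worth stating explicitly. Second, $\Omega$ and $\mathcal{E}$ must contain no vectors beyond the images of procedures. If the GPT is instead taken to include convex mixtures or other extra vectors, I would extend $\tilde{\mu}$ and $\tilde{\xi}$ by convex-linearity. This extension is consistent because noncontextuality forces $\mu$ to respect mixtures: a mixed preparation is operationally equivalent to the corresponding mixture of preparations. That issue, rather than the algebra, is where the care lies.
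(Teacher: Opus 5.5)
The paper gives no proof of this proposition. It imports it from Ref.~\cite{Schmid2021}, and your factor-through-the-quotient argument is essentially the proof given there, so the overall route is correct. What you underweight is linearity, which is where the real content of the equivalence sits. In Ref.~\cite{Schmid2021} the maps $\tilde{\mu}$ and $\tilde{\xi}$ of a GPT ontological model are required to be (convex-)linear. Ontological models of $T$ must also represent mixtures of procedures by the corresponding mixtures of distributions; the paper itself uses this in Appendix~\ref{appendix: proof of theorem mixed qubit prep context}. The paper's wording drops the linearity requirement. Without it, every GPT, quantum theory included, has the model $\Lambda=\Omega$, $\tilde{\mu}_{\textbf{s}}=\delta_{\textbf{s}}$, $\tilde{\xi}_{\textbf{e}}(\lambda)=\langle\textbf{e},\lambda\rangle$. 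The proposition would then be vacuous if models of $T$ need not respect mixtures, and false by Spekkens' theorem if they must.

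This has consequences for both directions. In ($\Leftarrow$), checking the probability rule and noncontextuality is not enough. You must also check that the composed maps form a legitimate ontological model of $T$: that $\sum_k\xi_{M,k}(\lambda)=1$ for each $\lambda$, and that mixed preparations are represented by mixed distributions. Both are exactly what linearity of the GPT model delivers, and your write-up skips them. In ($\Rightarrow$), your mixture argument is not an optional device for extending beyond the image of $P\mapsto\textbf{s}_P$. It is what proves that the induced maps are convex-linear on $\Omega$ and $\mathcal{E}$ themselves, as a GPT model must be. It would be cleaner to justify it by the mixture-preservation assumption on ontological models combined with noncontextuality, rather than by saying a mixed preparation is ``operationally equivalent to the corresponding mixture''.

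Finally, well-definedness of $\tilde{\xi}$ needs measurement noncontextuality at the level of outcome events: $[k|M]\simeq[k'|M']\Rightarrow\xi_{M,k}=\xi_{M',k'}$. This is the notion used in Ref.~\cite{Schmid2021}, and it is presupposed by the identification $[k|M]\simeq[k'|M']\iff\textbf{e}_{[k|M]}=\textbf{e}_{[k'|M']}$. The whole-measurement version $M\sim M'$ stated in Section~\ref{subsec:Spekkens} would not suffice.
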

This sufficient and necessary condition can be transformed into a geometrical condition, respected by every noncontextual operational theory. This can be done using the concept of simplex\footnote{A simplex is the generalisation of the triangle (from 2-dimensional space) and tetrahedron (from 3-dimensional space) to any given $n$-dimensional space, representing the simplest possible polytope with $n+1$ vertices.} embeddability.
 
\begin{definition}
    \textit{A GPT describing a prepare-measurement experiment, $G=(V,\langle\_,\_\rangle_V,\Omega,\mathcal{E})$ is simplex-embeddable iff there exists:
    \begin{enumerate}[label=\roman*).]
        \item an inner product space $(W,\langle\_,\_\rangle_W)$ of some dimension d that contains a (d-1)-dimensional (hence d-vertex) simplex $\Delta_d$ (whose affine span does not contain the origin) and its dual hypercube $\Delta_d^*$; and,
        \item a pair of linear maps $\iota,\kappa:V\rightarrow W$ satisfying
    \begin{equation}
        \iota(\Omega)\subseteq \Delta_d,
    \end{equation}
    \begin{equation}
        \kappa (\mathcal{E})\subseteq \Delta^*_d,
    \end{equation}
    \begin{equation}
\langle\textbf{e},\textbf{s}\rangle_V=\langle\kappa(\textbf{e}),\iota(\textbf{s})\rangle_W\quad\forall \textbf{e}\in \mathcal{E},\textbf{s}\in \Omega.
    \end{equation}
    \end{enumerate}
    }
\end{definition}

This leads to Ref.~\cite{Schmid2021}'s key result:
\begin{theorem}
\label{theorem: simplex embeddability}
    \textit{A GPT describing a prepare-measure experiment admits an ontological model over an ontic space $\Lambda$ of finite cardinality iff that GPT is simplex-embeddable.}
\end{theorem}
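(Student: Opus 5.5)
The plan is to prove both directions by building explicit linear maps between the GPT space $V$ and the space $W=\mathbb{R}^{|\Lambda|}$ of functions on a finite ontic space. In $W$ we take the standard inner product, so the probability simplex $\Delta_d=\mathcal{D}[\Lambda]$ and the unit hypercube $\Delta_d^*=\mathcal{F}[\Lambda]=[0,1]^{|\Lambda|}$ are in duality, with $d=|\Lambda|$.

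\emph{($\Rightarrow$)} Suppose we are given an ontological model $(\tilde\mu,\tilde\xi)$ with $|\Lambda|=d$. We set $\iota(\textbf{s})=\tilde\mu_{\textbf{s}}$ and $\kappa(\textbf{e})=\tilde\xi_{\textbf{e}}$, viewed as vectors in $\mathbb{R}^d$. By construction $\iota(\Omega)\subseteq\Delta_d$ and $\kappa(\mathcal{E})\subseteq\Delta_d^*$. The probability rule $\langle\textbf{e},\textbf{s}\rangle=\sum_\lambda\tilde\xi_{\textbf{e}}(\lambda)\tilde\mu_{\textbf{s}}(\lambda)$ is exactly $\langle\kappa(\textbf{e}),\iota(\textbf{s})\rangle_W$.

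The nontrivial point is that $\iota$ and $\kappa$ must be \emph{linear} maps on all of $V$, whereas they are only given on $\Omega$ and $\mathcal{E}$. This is the main obstacle. The key input is that the model is a model of the GPT, hence noncontextual by Proposition~\ref{prop: non contextual GPT}. We argue as follows. First, the map $\textbf{s}\mapsto\tilde\mu_{\textbf{s}}$ must preserve convex mixtures; this is the standard requirement that a mixture of preparations is represented by the mixture of distributions, and its well-definedness on $\Omega$ comes from tomographic completeness. A convex-linear map on $\Omega$, whose affine span excludes the origin, extends uniquely to a linear map on $\mathrm{span}(\Omega)$. Tomographic completeness also lets us take $V=\mathrm{span}(\Omega)$ without loss of generality. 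The same argument applies to $\mathcal{E}$, using convexity of effects and mixing with the zero effect. If convex-linearity is not assumed outright, we would instead derive it. Suppose $\textbf{s}=\sum_i p_i\textbf{s}_i$. Then $\tilde\mu_{\textbf{s}}$ and $\sum_i p_i\tilde\mu_{\textbf{s}_i}$ give identical statistics for all effects. One then uses the effects that span $W$, after quotienting $\Lambda$ by operationally indistinguishable ontic states, to conclude the two distributions coincide.

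\emph{($\Leftarrow$)} Suppose instead that $\iota,\kappa$ are given. Take $\Lambda$ to be the vertex set of $\Delta_d$, with vertices $\textbf{v}_\lambda$, and let $\{\textbf{f}_\lambda\}$ be the dual basis, so $\langle\textbf{f}_\lambda,\textbf{v}_{\lambda'}\rangle=\delta_{\lambda\lambda'}$. Define
\begin{equation}
\tilde\mu_{\textbf{s}}(\lambda)=\langle\textbf{f}_\lambda,\iota(\textbf{s})\rangle_W,\qquad \tilde\xi_{\textbf{e}}(\lambda)=\langle\kappa(\textbf{e}),\textbf{v}_\lambda\rangle_W .
\end{equation}
We then check three things. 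Because $\iota(\textbf{s})\in\Delta_d$, it is a convex combination of the vertices, so $\tilde\mu_{\textbf{s}}$ is a normalised probability distribution. Because $\kappa(\textbf{e})\in\Delta_d^*$, we have $\tilde\xi_{\textbf{e}}(\lambda)\in[0,1]$. Finally, expanding $\iota(\textbf{s})=\sum_\lambda\tilde\mu_{\textbf{s}}(\lambda)\textbf{v}_\lambda$ gives $\langle\kappa(\textbf{e}),\iota(\textbf{s})\rangle_W=\sum_\lambda\tilde\xi_{\textbf{e}}(\lambda)\tilde\mu_{\textbf{s}}(\lambda)$, which equals $\langle\textbf{e},\textbf{s}\rangle_V$. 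The resulting model is automatically linear, hence noncontextual, and $\Lambda$ is finite.

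The delicate step is therefore the linear extension in ($\Rightarrow$). We would first try to justify it via convex-linearity of the representation maps together with tomographic completeness. If that proves insufficient, we would fall back on the noncontextuality argument via Proposition~\ref{prop: non contextual GPT}.
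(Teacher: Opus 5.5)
The paper gives no proof of this theorem; it imports it from Ref.~\cite{Schmid2021}, and your two constructions are essentially the argument given there. Your ($\Leftarrow$) direction is complete. The $d$ vertices of $\Delta_d$ are linearly independent because the affine span of $\Delta_d$ avoids the origin, so the dual basis $\{\textbf{f}_\lambda\}$ exists and the $\langle\textbf{f}_\lambda,\iota(\textbf{s})\rangle_W$ are barycentric coordinates.

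Your ($\Rightarrow$) direction is correct \emph{provided} the maps $\textbf{s}\mapsto\tilde\mu_{\textbf{s}}$ and $\textbf{e}\mapsto\tilde\xi_{\textbf{e}}$ are convex-linear. This is how Ref.~\cite{Schmid2021} sets things up: it inherits the property from the requirement that a mixture of procedures be represented by the corresponding mixture. There are two small points in that direction:
\begin{itemize}
\item You need not assume $V=\mathrm{span}(\Omega)$. A linear map on $\mathrm{span}(\Omega)$, or on $\mathrm{span}(\mathcal{E})$, can be extended arbitrarily on a complement.
\item To make $\kappa$ linear rather than merely affine you need $\tilde\xi_{\textbf{0}}=0$. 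You can arrange this by discarding ontic states outside $\bigcup_{\textbf{s}}\mathrm{supp}\,\tilde\mu_{\textbf{s}}$.
\end{itemize}

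The genuine problem is your fallback of \emph{deriving} convex-linearity from the probability rule. Quotienting $\Lambda$ by ontic states with identical responses does not make the $\tilde\xi_{\textbf{e}}$ span $\mathbb{R}^{|\Lambda|}$. For example, take three ontic states where the third always responds with the average of the first two; then $(1,1,-2)$ is orthogonal to every response function. So $\tilde\mu_{\textbf{s}}$ is only fixed up to such vectors, and that freedom can be used nonlinearly.

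In fact no such derivation can exist. Under the paper's literal definition (maps on $\Omega$ and $\mathcal{E}$ that merely reproduce $\langle\textbf{e},\textbf{s}\rangle$), the forward implication is false. Take the square-bit GPT, with $\Omega=\{(1,x,y):|x|,|y|\le 1\}$ and $\mathcal{E}=\Omega^*$. Let $\Lambda$ be the four vertices $\textbf{v}_i$, and set $\tilde\xi_{\textbf{e}}(\textbf{v}_i)=\langle\textbf{e},\textbf{v}_i\rangle$. For each $\textbf{s}$, pick \emph{some} convex decomposition $\textbf{s}=\sum_i p_i\textbf{v}_i$ and set $\tilde\mu_{\textbf{s}}=p$. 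This reproduces every probability with $|\Lambda|=4$. Yet the square bit is not simplex-embeddable. The effects $\tfrac12(1,1,0)$ and $\tfrac12(1,0,1)$ force the barycentric supports of $\iota(\textbf{v}_1),\dots,\iota(\textbf{v}_4)$ to be pairwise disjoint. That contradicts $\iota(\textbf{v}_1)+\iota(\textbf{v}_3)=\iota(\textbf{v}_2)+\iota(\textbf{v}_4)$.

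Noncontextuality via Proposition~\ref{prop: non contextual GPT} does not rescue this. Any model defined on GPT states, including the one above, is noncontextual by construction; linearity enters only through the separate mixture-preservation requirement. So state convex-linearity of $\tilde\mu$ and $\tilde\xi$ explicitly as part of the definition of an ontological model of a GPT (the paper's wording omits it), and drop the hedge.
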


Theorem~\ref{theorem: simplex embeddability} and Prop.~\ref{prop: non contextual GPT} together provide a necessary and sufficient geometric condition to establish whether an operational theory (described by a set of preparations and measurements which respect the Tomographic Completeness assumption) is Spekkens noncontextual; which is embedded in the following theorem
\begin{theorem}
For a prepare-measure experiment, the operational theory describing it admits of a generalised-noncontextual ontological model on an ontic state space of finite cardinality if and only if the GPT describing it is simplex-embeddable.
\end{theorem}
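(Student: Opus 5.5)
The statement is obtained by chaining Prop.~\ref{prop: non contextual GPT} with Theorem~\ref{theorem: simplex embeddability}. The one thing to watch is that the finite-cardinality qualifier on the ontic space survives the chaining. I would therefore first make Prop.~\ref{prop: non contextual GPT} cardinality-preserving: whenever a generalised-noncontextual ontological model of $T$ exists on an ontic space $\Lambda$, an ontological model of the GPT $G$ exists on the same $\Lambda$, and conversely. Once that is in hand, Theorem~\ref{theorem: simplex embeddability} supplies the equivalence between finite-$\Lambda$ models of $G$ and simplex embeddability, and both directions of the theorem follow.

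\emph{Forward direction.} Take a noncontextual model $(\mu_P,\xi_{[k|M]})$ of $T$ on a finite $\Lambda$.
\begin{enumerate}[label=\roman*).]
\item Noncontextuality gives $P\simeq P'\Rightarrow \mu_P=\mu_{P'}$. Tomographic completeness gives $P\simeq P'\iff \mathbf{s}_P=\mathbf{s}_{P'}$.
\item Hence the rule $\tilde{\mu}_{\mathbf{s}_P}:=\mu_P$ is well defined on $\Omega$, i.e.\ independent of the choice of $P$ representing a given $\mathbf{s}$.
\item The same argument with effects defines $\tilde{\xi}_{\mathbf{e}_{[k|M]}}:=\xi_{[k|M]}$.
\item The probability rule Eq.~\eqref{eq.GPTProbRule} together with Eq.~\eqref{equation ontological procedures representation} (no transformation in a prepare-measure scenario) gives $\langle \mathbf{e},\mathbf{s}\rangle=\sum_\lambda \tilde{\xi}_{\mathbf{e}}(\lambda)\tilde{\mu}_{\mathbf{s}}(\lambda)$.
\end{enumerate}
This is an ontological model of $G$ on the same finite $\Lambda$, so Theorem~\ref{theorem: simplex embeddability} makes $G$ simplex-embeddable.

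\emph{Converse.} Suppose $G$ is simplex-embeddable. Theorem~\ref{theorem: simplex embeddability} gives an ontological model $(\tilde{\mu},\tilde{\xi})$ of $G$ on a finite $\Lambda$. Concretely, one can take $\Lambda$ to be the $d$ vertices of $\Delta_d$, set $\tilde{\mu}_{\mathbf{s}}$ to be the barycentric coordinates of $\iota(\mathbf{s})$, and set $\tilde{\xi}_{\mathbf{e}}(\lambda)=\langle\kappa(\mathbf{e}),v_\lambda\rangle_W$. Then pull back to $T$ by defining $\mu_P:=\tilde{\mu}_{\mathbf{s}_P}$ and $\xi_{[k|M]}:=\tilde{\xi}_{\mathbf{e}_{[k|M]}}$. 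This reproduces the operational statistics, since $p([k|M],P)=\langle\mathbf{e}_{[k|M]},\mathbf{s}_P\rangle$. It is noncontextual by construction, because operationally equivalent procedures are sent to the same GPT vector and hence to the same ontic representation.

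\emph{Main obstacle.} The substantive step is the one hidden inside the converse of Theorem~\ref{theorem: simplex embeddability}: checking that the barycentric construction produces legitimate objects. One must verify three things:
\begin{enumerate}[label=\roman*).]
\item $\iota(\Omega)\subseteq\Delta_d$ makes each $\tilde{\mu}_{\mathbf{s}}$ a normalised probability distribution.
\item $\kappa(\mathcal{E})\subseteq\Delta_d^*$ makes each $\tilde{\xi}_{\mathbf{e}}$ take values in $[0,1]$.
\item For a full measurement $M$, the response functions $\tilde{\xi}_{\mathbf{e}_{[k|M]}}$ summed over $k$ equal $1$ at every vertex.
\end{enumerate}
Item iii) needs linearity of $\kappa$ together with the fact that the outcome effects of $M$ sum to the unit effect. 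It also needs that $\kappa$ maps the unit effect to the all-ones functional on $\Delta_d$, which I would get from the inner-product preservation condition combined with $\iota(\Omega)$ lying in the simplex. Since Theorem~\ref{theorem: simplex embeddability} is assumed, I would cite it for this rather than redo it, and only remark that finiteness of $\Lambda$ is exactly what allows the simplex description.
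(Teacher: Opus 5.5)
Your proposal is correct and follows the paper's route: the paper obtains this theorem simply by chaining Prop.~\ref{prop: non contextual GPT} with Theorem~\ref{theorem: simplex embeddability}. Your explicit push-forward/pull-back construction on the same $\Lambda$ just makes visible the finiteness bookkeeping the paper leaves implicit.

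There is one minor slip in a non-load-bearing aside. Inner-product preservation only gives $\langle\kappa(\mathbf{u}),\iota(\mathbf{s})\rangle_W=1$ for $\mathbf{s}\in\Omega$, where $\mathbf{u}$ is the unit effect. Provided $\kappa(\mathbf{u})\in\Delta_d^*$ (e.g.\ if $\mathbf{u}\in\mathcal{E}$), this forces $\kappa(\mathbf{u})$ to equal $1$ only at vertices in the support of some $\iota(\mathbf{s})$, not to be the all-ones functional on $\Delta_d$. Unreached vertices must therefore be discarded from $\Lambda$. Since you defer this step to Theorem~\ref{theorem: simplex embeddability}, the argument stands.
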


To test whether an operational model of a scenario obeys this criterion, we first determine the set of GPTs that are compatible with the data obtained from prepare-measure experiments for that scenario. We then just need to test whether these GPTs are simplex-embeddable. If all of them are simplex-embeddable, the operational theory is Spekkens noncontextual; if at least one of them is not simplex-embeddable, the operational theory is Spekkens contextual.

Notably, no lower limit had originally been set for the possible size of the simplex. This issues was solved in Ref.~\cite{Schmid2024structuretheorem} where the authors proved the lower bound to be exactly the dimension of the GPT itself.

The only assumption this simplex embeddability condition requires (above and beyond those required generally for ontological models \cite{pusey2012reality,Oldofredi2020,Hermens2021,Hance2022,Hance2022Ensemble,Carcassi2024,Tezzin2025Luders}) is Tomographic Completeness. However in Appendix~\ref{appendix: relaxing Tomography} it is shown that the assumption of Tomographic Completeness is in fact not necessary to demonstrate Spekkens noncontextuality via simplex embeddability.

\section{Differences between Contextualities}\label{sec:differences}

The two notions of contextuality described in the previous section differ in several important respects.
The Kochen–Specker notion is formulated within quantum mechanics, and concerns the impossibility of assigning noncontextual, deterministic values to projective measurements in Hilbert spaces of dimension greater than two. In its original version it presupposes outcome determinism, and applies specifically to sharp (i.e., strong projective) measurements. However Ref. ~\cite{Budroni2022KSContextuality} shows that a generalisation to nondeterministic outcomes is possible given that nondeterministic response functions can be always transformed into deterministic functions of a new hidden variable. Moreover a generalisation to non-perfectly compatible measurements is possible, as shown in Appendix~\ref{appendix: KS contextuality for non-perfectly commuting observables}.

In contrast, Spekkens’ notion is formulated within the general framework of operational theories and ontological models. It does not assume determinism a priori and it applies to arbitrary operational procedures, including preparations, transformations, and unsharp measurements.

For the case of unsharp measurements or POVMs (positive operator-valued measures), we can always evaluate whether a scenario is Spekkens contextual since no assumptions on ideal projective measurements have been made. Kochen-Specker in its original form, instead, cannot be applied in this case. However in \cite{hance2025externalquantumfluctuationsselect} it is shown that, using Naimark dilations, we can extend the definition of Kochen-Specker contextuality to include unsharp measurements or POVMs. In Appendix~\ref{sec: appendix POVMs for KS} we provide a brief review of this approach.

Another difference is that while Kochen–Specker contextuality cannot arise in two-dimensional Hilbert spaces (single qubits can be described by a Kochen-Specker NCHV model), Spekkens’ generalised notion can show contextual features even for qubit systems. Ref.~\cite{Spekkens2005} shows any maximally-mixed qubit is contextual in this sense.

However Johansson and Larsson proposed a model, Quantum Simulation Logic (QSL) \cite{Johansson2019QSL}, which mimics the behaviour of a (pure or maximally mixed) qubit, despite being entirely classical (using two bits). In Appendix \ref{appendix:qubit}, we present the QSL model for the qubit, and show it is noncontextual both in the Kochen-Specker and Spekkens sense. 

By the Kochen-Specker definition, both the models (qubit in a two dimensional Hilbert space, and QSL model of a qubit) are noncontextual - which aligns with Bell's earlier result that a qubit could always be represented by a noncontextual hidden variable model~\cite{Bell1966}. However, by the Spekkens definition, while a qubit can be contextual, its quantum simulation logic representation is instead always noncontextual. This reinforces the idea that a system being Spekkens-noncontextual can provide a signature of that system being definitely classical.

\subsection{Kochen-Specker Contextuality implies Spekkens Contextuality}\label{subsec:KSCimpliesSpekkensC}
In this section we prove that the notion of Spekkens contextuality is a generalisation of the Kochen-Specker notion of contextuality, insofar as a system being Kochen-Specker contextual implies it is also Spekkens contextual. We show this by geometrically comparing the Spekkens noncontextuality simplex with the Kochen-Specker noncontextuality polytope, and showing, for any scenario consisting of sharp measurements, the polytope always contains the simplex.

\begin{theorem}
\textit{If a generalised probabilistic theory $G=(V,\langle\_,\_\rangle_V,\Omega,\mathcal{E})$ is simplex-embeddable, then for any scenario consisting of sharp measurements, the generated behaviour $\mathcal{P}$ belongs to the Kochen-Specker noncontextual polytope $\mathcal{P}_{KS}$.}
\end{theorem}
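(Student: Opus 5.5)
Starting from simplex embeddability, I will invoke Theorem~\ref{theorem: simplex embeddability} to turn the embedding $(\iota,\kappa)$ into an explicit finite ontological model, then show this model is a Kochen-Specker noncontextual hidden variable model for the sharp scenario. Concretely, let $\Lambda=\{1,\dots,d\}$ index the vertices of $\Delta_d$. Define $\mu_\textbf{s}(\lambda)$ as the barycentric coordinates of $\iota(\textbf{s})$ and $\xi_\textbf{e}(\lambda)=\langle\kappa(\textbf{e}),v_\lambda\rangle_W$, with $v_\lambda$ the vertices. Since $\iota(\Omega)\subseteq\Delta_d$ and $\kappa(\mathcal{E})\subseteq\Delta_d^*$, these are a probability distribution and response functions in $[0,1]$, and they reproduce $\langle\textbf{e},\textbf{s}\rangle_V$. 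Two structural facts follow from the construction. First, each effect gets a \emph{single} response function depending only on its GPT vector, so the same sharp effect appearing in two contexts receives the same $\xi$; this is exactly Kochen-Specker noncontextuality of the assignment. Second, $\kappa$ is linear, so for a measurement $\{\textbf{e}_k\}$ with $\sum_k\textbf{e}_k=\textbf{u}$ (the unit effect) we get $\sum_k\xi_{\textbf{e}_k}(\lambda)=1$ for every $\lambda$. Normalisation requires $\kappa(\textbf{u})$ to be the all-ones functional on the simplex, which follows from $\langle\textbf{u},\textbf{s}\rangle=1$ and the vertices being reachable. I expect this point to need care if $\iota(\Omega)$ does not span $\Delta_d$. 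In that case I would restrict $\Lambda$ to the face spanned by the image, or appeal to the minimal-dimension result of Ref.~\cite{Schmid2024structuretheorem}.

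The next task is to handle joint measurements within a context. For a context $C=\{A_1,\dots,A_m\}$ of compatible sharp observables, the joint outcome effects $\textbf{e}_{a_1\dots a_m|C}$ are GPT effects. Their coarse-grainings give the marginal effects of each $A_i$, independently of the context. Linearity of $\kappa$ then gives $\xi_{a_i|A_i}(\lambda)=\sum_{\text{other }a_j}\xi_{a_1\dots a_m|C}(\lambda)$, so the marginals of the ontic response are context-independent. This gives the behaviour the form $p(a_C|C)=\sum_\lambda\mu(\lambda)\,\xi_{a_C|C}(\lambda)$: a mixture of nondisturbing (sheaf-compatible) local response functions.

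The main obstacle is that Spekkens-style response functions need not be deterministic, and $\mathcal{P}_{KS}$ is the convex hull of \emph{deterministic} noncontextual assignments. Here I will use sharpness. In the standard route, outcome determinism for sharp measurements follows from preparation noncontextuality together with the existence of states on which a sharp effect has probability $1$. I will instead argue directly that, within a context, each $\xi_{\cdot|C}(\lambda)$ factorises and can be refined. As recalled from Ref.~\cite{Budroni2022KSContextuality}, nondeterministic noncontextual response functions can always be rewritten as deterministic functions of an enlarged hidden variable $(\lambda,\lambda')$. The point needing care is that this dilation must preserve the identification of the same effect across contexts. I plan to build it observable-by-observable: for each observable $A_i$, sample its value from $\xi_{\cdot|A_i}(\lambda)$ via an independent uniform $\lambda'_i$. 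Sharpness and repeatability ensure that the joint statistics in a context match the product of these deterministic assignments, since for projective compatible measurements the joint effect is the product of marginals. Each fixed $(\lambda,\lambda')$ is then a deterministic noncontextual assignment, i.e.\ a vertex of $\mathcal{P}_{KS}$. Averaging over $\mu_\textbf{s}$ and $\lambda'$ expresses $\mathcal{P}$ as a convex combination of vertices, so $\mathcal{P}\in\mathcal{P}_{KS}$.

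If the product-of-marginals step fails for genuinely nondeterministic $\xi$, I would first try to prove determinism of $\xi$ on sharp effects. The argument: a sharp effect $\textbf{e}$ and its complement are each saturated by some state, and embedding into the simplex forces $\kappa(\textbf{e})$ to take values $0$ or $1$ on the vertices in the support of those states. That makes the construction deterministic from the outset.
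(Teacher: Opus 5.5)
Your construction of the finite model from $(\iota,\kappa)$, your observation that each effect gets a single context-independent response function, and normalisation via linearity of $\kappa$ match the first half of the paper's proof. The gap is in the determinism step, and your primary route does not close it. ``The joint effect is the product of marginals'' is a statement about operator products. But $\textbf{e}\mapsto\langle\kappa(\textbf{e}),\textbf{v}_\lambda\rangle_W$ is only linear, so nothing forces the joint response at $\lambda$ to be the product of the marginal responses unless those are $0$ or $1$. (Repeatability is also not expressible in a prepare-measure GPT.) For example, suppose the three orthogonal projectors of a KS context each have response $1/3$ at some $\lambda$. Independent sampling then gives exactly one projector true with probability only $4/9$, while the actual joint effect enforces this with certainty. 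Nor can you stop at ``a mixture of nondisturbing local response functions''. That describes the nondisturbance polytope, which strictly contains $\mathcal{P}_{KS}$: a KCBS box with every context perfectly anticorrelated and uniform marginals is nondisturbing and gives $-5<-3$. So everything rests on determinism, equivalently factorisability as in the global-section proposition, at every relevant $\lambda$.

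Your fallback is the right idea, and it is exactly the step the paper takes by citing Spekkens' result that preparation noncontextuality forces $\{0,1\}$-valued responses for sharp measurements. As stated, however, it proves too little. Saturation only gives $\xi_{\textbf{e}}(\lambda)\in\{0,1\}$ on the supports of the saturating states, not on every vertex charged by the prepared state, so ``deterministic from the outset'' does not follow. For a counterexample, take a simplex GPT with one vertex per global KCBS assignment, plus one vertex carrying the anticorrelated box above. Every joint effect and its complement are saturated at deterministic vertices, yet the extra vertex violates the KCBS inequality.

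The missing piece is a covering argument that uses the preparation noncontextuality built into the embedding. Suppose $\textbf{s}_0=\sum_k p_k\textbf{s}_k$ with all $p_k>0$ and $\textbf{s}_k$ saturating outcome $k$. Linearity of $\iota$ and uniqueness of barycentric coordinates give $\mathrm{supp}\,\iota(\textbf{s}_0)=\bigcup_k\mathrm{supp}\,\iota(\textbf{s}_k)$, and the measurement is deterministic on this set. If also $\textbf{s}_0=p\textbf{s}+(1-p)\textbf{s}'$ with $p>0$, then $\mathrm{supp}\,\iota(\textbf{s})\subseteq\mathrm{supp}\,\iota(\textbf{s}_0)$. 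In quantum theory the maximally mixed state serves as $\textbf{s}_0$ for all projective measurements at once.

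This argument needs such states to lie in $\Omega$. The paper imports that hypothesis silently through its citation of Spekkens, and without it the statement fails. For instance, $\Omega=\{\ket{0}\bra{0}\}$ with the KCBS projectors is simplex-embeddable with $d=1$, yet violates the KCBS inequality.
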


\begin{proof}
 Consider a set of sharp measurements in $G$, grouped into contexts $C$. Each measurement outcome corresponds to an effect $\textbf{e} \in \mathcal{E}$. The behaviour $\mathcal{P}$ is given by the probability rule:
\begin{equation}
p(e|s) = \langle\textbf{e},\textbf{s}\rangle_V
\end{equation}
 where $\textbf{s} \in \Omega$ is the prepared state. By hypothesis, $G$ is simplex-embeddable, meaning there exist linear maps $\iota$ and $\kappa$ such that $\iota(\textbf{s}) \in \Delta_d$ and $\kappa(\textbf{e}) \in \Delta^*_d$.

Following the geometric properties of the simplex, we can decompose the mapped state into a unique convex combination of the $d$ vertices of the simplex, $\iota(\textbf{s}) = \sum_{\lambda=1}^d q_\lambda\textbf{v}_\lambda$, where each vertex $\textbf{v}_\lambda$ represents an ontic state. The probability rule becomes:
\begin{equation}
p(e|s) = \sum_{\lambda=1}^d q_\lambda\langle\kappa(\textbf{e}),\textbf{v}_\lambda\rangle_W
\end{equation}

In the Kochen-Specker paradigm, we restrict our attention to sharp measurements. 

For any sharp effect $\textbf{e}$ and any ontic state $\textbf{v}_\lambda$, the inner product evaluates strictly to Boolean values:
\begin{equation}
\langle\kappa(\textbf{e}),\textbf{v}_\lambda\rangle_W \in \{0, 1\}
 \end{equation}
This comes from the result obtained in Ref.~\cite{Spekkens2005} in which the author proved outcome determinism for sharp measurements, i.e., showed that preparation noncontextuality forces any projective measurement to be represented by $\{0,1\}$-valued response functions.

Let us define this deterministic assignment as $v(e, \lambda) \equiv \langle \kappa(\textbf{e}), \textbf{v}_\lambda \rangle_W$. Crucially, this value depends only on the specific effect $\textbf{e}$ and the ontic state $\lambda$. It is completely independent of the context $C$ (the set of other co-measurable effects) in which $\textbf{e}$ is evaluated.

Because the probabilities within any valid measurement context $C$ must sum to $1$, the deterministic assignments must also satisfy $\sum_{\textbf{e} \in C} v(e, \lambda) = 1$ for every ontic state $\lambda$. Thus, each ontic state $\lambda$ defines a global, context-independent, deterministic value assignment across all measurements.

Substituting $v(e, \lambda)$ back into our probability expression yields:
\begin{equation}
p(e|s) = \sum_{\lambda=1}^d q_\lambda v(e, \lambda)
\end{equation}

This equation demonstrates that the overall behaviour $\mathcal{P}$ is geometrically constrained to be a convex mixture of global, deterministic, noncontextual assignments. By definition, the set of all such convex combinations forms the Kochen-Specker noncontextual polytope. Therefore, $\mathcal{P} \in \mathcal{P}_{KS}$.
\end{proof}

Note obviously $\mathcal{P} \in \mathcal{P}_{KS}$ does not imply $\mathcal{P} \ni \mathcal{P}_{KS}$, so this result does not mean that Kochen-Specker noncontextuality implies Spekkens noncontextuality.

\section{``Provably Classical'' vs ``Provably Nonclassical''}\label{sec:provably}

In this section we formalise the intuition that Spekkens-noncontextuality provides some signature of a system being classical, while Kochen-Specker contextuality provides a signature that a system is nonclassical. To being able to link the concept of (non)contextuality to the concept of (non)classicality we need to define first of all what a classical and a nonclassical system are. Finding a strict dividing line between classical and nonclassical is one of the key interpretational problems in quantum mechanics, as no clear boundary (or ``Heisenberg cut'') has been universally agreed upon. However, there are several distinct features typically associated with quantum systems versus classical systems.

The first major distinction is determinism. Classical mechanics is fundamentally deterministic: if one knows the initial position and momentum of an object and the forces acting upon it, it is always possible to predict its future behaviour. In classical mechanics, the role of probability simply reflects a lack of knowledge about the exact microstate of the system. This means that classical probabilistic-ness is inherently epistemic and not ontological. In quantum mechanics, however, this is not true. Even at a fundamental level, given a complete description of a quantum state, the theory is generally at least instrumentally only able to yield probabilities for finding specific results of a measurement, as dictated by the Born rule. 

A second, related difference concerns the nature of the state space. If we define a state space of different values of properties possessed by a system, classical systems always occupy a definite state (a specific point in phase space). Conversely, unless we allow the state in state space to be context-dependent, quantum systems can exist in a simultaneous superposition of multiple states within a Hilbert space.

This divergence in state space leads directly to a third distinction regarding measurement, preparation limits, and the commutativity of observables. In classical mechanics, all observable properties commute, meaning infinitely precise preparation and measurement procedures can, in principle, be performed simultaneously. In quantum mechanics, this is fundamentally prohibited for conjugate variables (such as position $\hat{x}$ and momentum $\hat{p}$) which do not commute ($[\hat{x}, \hat{p}] = i\hbar$). This gives rise to the Heisenberg uncertainty principle, which bounds the precision with which these properties can be simultaneously defined:
\begin{equation}
    \Delta x \Delta p \geq \frac{\hbar}{2}
\end{equation}
It is vital to distinguish this fundamental quantum uncertainty from classical ignorance. Even in areas of classical physics where preparing a system in a given state with infinite precision is practically impossible (e.g., statistical mechanics and thermodynamics), one never questions that the system is in a definite state with all properties simultaneously defined. Our uncertainty about their values is purely epistemic. In quantum mechanics, however, as we proved above, the act of assuming that all properties possess simultaneously well-defined, measurement-independent values inherently leads to logical contradictions. 

Finally, classical and quantum systems differ fundamentally in how they compose. Classical composite systems can always be completely described by the individual states of their constituent parts. Quantum mechanics, however, allows for entangled states, where the state of the overall system is completely well-defined, yet the states of the individual subsystems are not. This leads to non-local correlations that violate classical bounds, further demonstrating that quantum systems cannot be underpinned by classical, locally real hidden variables.

Therefore, to formalise our intuition above, we first need to demonstrate that a system with the properties we associate with classicality must always be Spekkens noncontextual.
It is important to note the following: Spekkens' definition of contextuality requires the presence of an ontological model, which describes the operational theory we are dealing with. Given it is operational, talking about an ontological model of a classical theory also makes sense, because an ontological model goes beyond the difference between a classical and a quantum system. An ontological model corresponds to the idea of the properties of the system we are considering being defined regardless our knowledge about them. This idea can clearly be applied to classical physics. An example can be given by thermodynamics where we are able to measure the properties of a system such as the temperature, while in reality all these properties are defined by all the particles that compose the system and to which we don't have direct access. A detailed analysis of this can be found in Ref.~\cite{Bittner2018FormalOntology}.

\subsection{A classical system must be Spekkens noncontextual}
\label{subsec: A classical system must be Spekkens noncontextual}
We can model the state of a classical system as a point in phase space \(\Phi\), whose points are \(\gamma = (\vec r,\vec p) \in \Phi\). We assume the existence of an ontological model such that to each of these points in phase space \(\gamma\in\Phi\) there is associated unambiguously an ontic state \(\lambda\) in the ontic state space \(\Lambda\). Preparations are represented by probability distributions \(\mu_P\) on \((\Lambda,\Sigma)\), where \(\Sigma\) is the \(\sigma\)-algebra of measurable subsets of \(\Lambda\)\footnote{A $\sigma$-algebra on a set $X$ is a nonempty collection $\Sigma$ of subsets of $X$ closed under complement, countable unions and countable intersections.}. Measurements are represented by response functions and transformations by transition kernels.

We introduce the assumptions that characterise what we mean here by a classical system.

\begin{assumpt}\label{assumptclass1}
    \textit{The sets of preparations, transformations and measurements are closed under convex combination and coarse-graining.}
\end{assumpt}

\begin{assumpt}\label{assumptclass2}
    \textit{The ontic state space \((\Lambda,\Sigma)\) is a phase space (or more generally a standard measurable space) with its
    \(\sigma\)-algebra.}
\end{assumpt}

\begin{assumpt}\label{assumptclass3}
    \textit{For every measurable set \(\omega \in \Sigma\) there exists a
    two-outcome (i.e., dichotomous) measurement \(M_\omega\) with outcomes \(k=0,1\) such that
    the response functions in the ontological model are
    \begin{equation}\label{eq:responseindicatorclassical}
        \xi_{M_\omega,1}(\lambda) = \mathbb I_\omega(\lambda),
        \qquad
        \xi_{M_\omega,0}(\lambda) = 1 - \mathbb I_\omega(\lambda),
    \end{equation}
    where \(\mathbb I_\omega\) is the indicator function of \(\omega\),
    \begin{equation}
        \mathbb I_\omega(\lambda) =
        \begin{cases}
            1, & \lambda \in \omega, \\
            0, & \lambda \notin \omega.
        \end{cases}
    \end{equation}
We call this the \textbf{Dichotomic Response Function} assumption.}
\end{assumpt}
\begin{assumpt}\label{assumptclass4}
    \textit{For every measurable set $\omega \in \Sigma$ with a finite, non-zero measure (volume) $V(\omega) = \int_\omega d\lambda$, there exists a valid preparation procedure $P_\omega$. This preparation is represented by a uniform distribution over the set $\omega$:
    \begin{equation}
    \label{eq: indicator functions for prep}
        \mu_{P_\omega}(d\lambda) = \frac{\mathbb{I}_\omega(\lambda)}{V(\omega)} d\lambda,
    \end{equation}
    where $\mathbb{I}_\omega(\lambda)$ is the indicator function of the set $\omega$.
    We call this assumption the \textbf{Uniform Preparation} assumption.}
\end{assumpt}
We define a system which satisfies \textbf{Classical Assumptions~\ref{assumptclass1}, \ref{assumptclass2}, \& \ref{assumptclass3}, \ref{assumptclass4}} as \textbf{Classical}.

Notably, these assumptions imply that the system is tomographically complete, as per Ref.~\cite{Schmid2021}: the physical state is uniquely determined by the statistics of a set of measurements.

If we have the possibility of preparing a state in a perfectly precise state, the \textbf{Uniform Preparation} assumption (\textbf{Classical Assumption~\ref{assumptclass4}}) can be replaced by the following stronger assumption.
\begin{assumpt}\label{assumptclass5}
\textit{For every ontic state \(\lambda_0 \in \Omega\), there exists a valid preparation procedure \(P_{\lambda_0}\) that perfectly localises the system. Its associated ontic probability measure is the Dirac measure \(\delta_{\lambda_0}\) defined as 
\begin{equation}
    \delta_{\lambda_0}(\lambda)=
    \begin{cases}
        1\quad \text{if}\ \lambda=\lambda_0\\
        0 \quad \text{otherwise}
    \end{cases}
\end{equation}
We call this assumption \textbf{Point Preparation} assumption.}
\end{assumpt}
Note that the \textbf{Point Preparation} assumption implies the \textbf{Uniform Preparation} assumption, in the sense that a system that satisfies the former must also satisfy the latter. 

Moreover if we have the possibility of measuring a state in a perfectly precise state, the \textbf{Dichotomic Response Function} assumption (\textbf{Classical Assumption~\ref{assumptclass3}}) can be replaced by the following stronger assumption.
\begin{assumpt}\label{assumptclass6}
\textit{For every ontic state \(\lambda_0 \in \Omega\), there exists a valid measurement procedure \(M_{\lambda_0}\) that perfectly identifies the state of the system. Its associated ontic probability measure is the Dirac measure \(\delta_{\lambda_0}\) defined as 
\begin{equation}
    \delta_{\lambda_0}(\lambda)=
    \begin{cases}
        1\quad \text{if}\ \lambda=\lambda_0\\
        0 \quad \text{otherwise}
    \end{cases}
\end{equation}
We call this assumption \textbf{Point Measurement} assumption.}
\end{assumpt}
Note that, as before, the \textbf{Point Measurement} assumption implies the \textbf{Dichotomic Response Function} assumption, in the sense that a system that satisfies the former must also satisfy the latter. 

We define a system that satisfies \textbf{Classical Assumptions~\ref{assumptclass1}, \ref{assumptclass2},  \ref{assumptclass5}, \& \ref{assumptclass6}} as \textbf{Strongly Classical}.

We can now show that any \textbf{Classical} system is Spekkens preparation noncontextual.

\begin{theorem}
\label{theorem: classical system is univ NC}
    A \textbf{Classical} system is universally noncontextual.
\end{theorem}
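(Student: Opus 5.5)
The plan is to verify the three noncontextuality conditions (preparation, measurement, transformation) one at a time. In each case the classical assumptions supply a family of ``probe'' procedures rich enough that operational equivalence forces equality of the ontological representations. The structure follows the statement: since the system is \textbf{Classical}, we may use Classical Assumptions~\ref{assumptclass1}--\ref{assumptclass4}.

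\emph{Preparation noncontextuality.} Suppose $P\sim P'$. By Eq.~\eqref{eq: prep equivalence}, the integrals $\int\xi_{M,k}\,d\mu_P$ and $\int\xi_{M,k}\,d\mu_{P'}$ agree for every $M,k$. I apply this to the dichotomic measurements $M_\omega$ given by the \textbf{Dichotomic Response Function} assumption (Classical Assumption~\ref{assumptclass3}). Using Eq.~\eqref{eq:responseindicatorclassical}, this yields $\mu_P(\omega)=\mu_{P'}(\omega)$ for every $\omega\in\Sigma$. Two measures that agree on the whole $\sigma$-algebra are equal, so $\mu_P=\mu_{P'}$. No uniqueness or extension theorem is needed here, because the probes range over all of $\Sigma$ (Classical Assumption~\ref{assumptclass2} is used only to make $\Sigma$ well behaved).

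\emph{Measurement noncontextuality.} Suppose $M\sim M'$, so $\int\xi_{M,k}\,d\mu_P=\int\xi_{M',k}\,d\mu_P$ for all $P$. I use the \textbf{Uniform Preparation} probes $P_\omega$ of Classical Assumption~\ref{assumptclass4}. This gives $\int_\omega(\xi_{M,k}-\xi_{M',k})\,d\lambda=0$ for every $\omega$ of finite positive volume. The function $f=\xi_{M,k}-\xi_{M',k}$ is bounded and measurable. Take $\omega=\{f>0\}\cap K_n$ with $K_n$ an exhausting sequence of finite-volume sets, which exists by $\sigma$-finiteness of phase-space volume. This shows $f\le 0$ almost everywhere, and symmetrically $f\ge0$, so $f=0$ almost everywhere.

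This is the step I expect to be the main obstacle. We only obtain equality of response functions up to Lebesgue-null sets, not pointwise. I would handle it in one of two ways. The first is to identify response functions modulo null sets, which is harmless since every preparation allowed by Assumption~\ref{assumptclass4} (and its convex closure under Assumption~\ref{assumptclass1}) is absolutely continuous, so no prediction can detect the difference. The second is to note that the representation can be chosen canonically, by fixing a representative of each equivalence class, which yields a noncontextual model. Under the \textbf{Strongly Classical} assumptions, Dirac preparations (Classical Assumption~\ref{assumptclass5}) give pointwise equality directly, and I would remark on this.

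\emph{Transformation noncontextuality.} Suppose $T\sim T'$. Combine both families of probes in Eq.~\eqref{eq: transf equivalence}: take $P=P_{\omega}$ and $M=M_{\omega'}$. This gives
\begin{equation*}
\int_\omega d\lambda\,\Gamma_T(\omega'|\lambda)=\int_\omega d\lambda\,\Gamma_{T'}(\omega'|\lambda)
\end{equation*}
for all $\omega'\in\Sigma$ and all finite-volume $\omega$. By the argument of the previous step, for each fixed $\omega'$ we get $\Gamma_T(\omega'|\lambda)=\Gamma_{T'}(\omega'|\lambda)$ for almost every $\lambda$. A single null set has to work for all $\omega'$ simultaneously, and this is where standardness of $(\Lambda,\Sigma)$ (Classical Assumption~\ref{assumptclass2}) enters. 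Because $\Sigma$ is countably generated, I can take a countable $\pi$-system generating it, obtain agreement off a countable union of null sets, and then use the $\pi$--$\lambda$ theorem to conclude that the kernels $\Gamma_T(\cdot|\lambda)$ and $\Gamma_{T'}(\cdot|\lambda)$ coincide for almost every $\lambda$. Combining the three cases, equivalent procedures receive identical representations (modulo the null-set identification above), so the \textbf{Classical} system is universally noncontextual.
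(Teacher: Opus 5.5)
Your proposal is correct and follows the paper's proof in Appendix~\ref{app: proof of theorem NC for classical system} essentially step by step. The dichotomic measurements $M_\omega$ probe preparations, the uniform preparations $P_\omega$ probe measurements, and the pair $(P_\omega, M_{\omega'})$ probes transformations, with the resulting almost-everywhere equalities accepted via the same appeal to every preparation being macroscopic (absolutely continuous). Your extras tighten the paper's argument without changing its route: exact equality of $\mu_P$ and $\mu_{P'}$ as measures, the exhaustion by finite-volume sets (needed because Classical Assumption~\ref{assumptclass4} only supplies $P_\omega$ when $0<V(\omega)<\infty$), and the countable generating $\pi$-system, which justifies the paper's unargued passage from a null set for each $\omega'$ to a single null set for the whole kernel.
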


Proof can be found in Appendix \ref{app: proof of theorem NC for classical system}.

We want to highlight that the assumptions made are clearly not valid for a quantum mechanical system, for the following reasons.
First, we assumed the ability to prepare a system uniformly over an arbitrarily defined measurable set $\omega\in \Sigma$. In quantum mechanics, Heisenberg's Uncertainty Principle sets a lower bound on the phase-space volume of any state. One cannot prepare a state $\rho$ that corresponds to an indicator function $\mathbb{I}_\omega$ if the volume of $\omega$ is smaller than $\hbar^n$.
Second, we assumed the existence of a complete set of dichotomic measurements. In quantum mechanics, most measurements are disturbing, and many states are non-orthogonal. Two different ontic states $\lambda_1,\lambda_2$ cannot be perfectly distinguished by a single measurement unless they correspond to orthogonal quantum states. This means that no single dichotomic measurement can in general distinguish between two different quantum states.

We focus now on the reverse implication - whether Spekkens noncontextuality implies \textbf{Classicality} (per our definition above).

To show that Spekkens noncontextuality doesn't imply \textbf{Classicality}, we would need a counterexample - a system which is not \textbf{Classical} (by our definition above), but which is still noncontextual by Spekkens' definition. An example of such a system is Gaussian quantum mechanics~\cite{brask2022gaussianstatesoperations}.
Gaussian quantum mechanics is the part of quantum mechanics that focuses on quantum states and operations that are fully characterised by their first and second moments (mean and covariance). These states are described as ``Gaussian'' because their Wigner functions~\cite{Wigner1932WignerFunctions} are positive Gaussian distributions. In Ref.~\cite{Bartlett2012GaussianQM} the authors reconstruct Gaussian quantum mechanics from Liouville mechanics and prove that Gaussian quantum mechanics is Spekkens noncontextual. However it is possible to show that Gaussian quantum mechanics violates both \textbf{Classical Assumptions \ref{assumptclass2} and \ref{assumptclass3}}. Proof of this can be found in Appendix \ref{app: Gaussian QM violates classical assumptions}. This shows that Spekkens noncontextuality does not imply classicality. It is important to notice also that again in Ref.~\cite{Bartlett2012GaussianQM} Gaussian quantum mechanics is also proved to be Bell-local and Kochen-Specker noncontextual, implying \textbf{Classicality} is also not a necessary condition for either of these properties.

Alternative definitions of classicality are obviously possible. For example, Ref.~\cite{Liu_2026} presents a new framework for Kochen-Specker contextuality, which is similar to the simplex embeddability framework for Spekkens contextuality. By their approach, a finite general system is only considered classical if it can be embedded into a classical system (a Boolean algebra) and its state is noncontextual. They define a state within this system as noncontextual if and only if it can be represented as a convex combination of deterministic states. They show then that contextuality (by their definition) is a sufficient condition for nonclassicality (by their definition), but it is not a necessary one. This means, despite using differing definitions, they arrive at the same hierarchy of implication as us - e.g., they show Kochen-Specker contextuality implies nonclassicality, but is not necessary for nonclassicality, which we can see from e.g., Gaussian quantum mechanics being nonclassical but Kochen-Specker noncontextual.

\subsection{Bell nonlocality}\label{subsec:Bell}
We now need to identify a sufficient criterion for a system to be inherently nonclassical. A natural contender for this is Bell nonlocality (i.e., ability of a system to violate a Bell inequality \cite{Bell1966}). Given Bell inequalities are formulated using our minimal classical assumptions, and so a system violating a Bell inequality must violate one of these assumptions, Bell-nonlocality provides us with a ``gold standard'' sufficient criterion for a system being nonclassical.
We give the derivation for one of the most widely-used Bell inequalities, the CHSH inequality~\cite{Clauser1969Proposed}, in Appendix~\ref{appendix: derivation of CHSH inequality}.

\begin{theorem}
\label{Bell's theorem}
A physical system is said to be nonlocal in the Bell sense if it can violate a Bell inequality, e.g., the CHSH inequality
\begin{equation}\label{eq:CHSH}
    |E(\vec{a},\vec{b})-E(\vec{a},\vec{d})|+|E(\vec{c},\vec{b})+E(\vec{c},\vec{d})|\leq2
\end{equation}
where $E(\vec{m},\vec{n})$ is the correlation function of two measurements performed by two parties, defined by
$E(\vec m,\vec n)=\sum_{A,B=\pm1} AB \, p(A,B \mid \vec m,\vec n)$,
with $A,B\in\{\pm1\}$ denoting the measurement outcomes for the two parties, in the measurement settings $\vec{m},\vec{n}$.
\end{theorem}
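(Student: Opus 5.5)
The statement is a definition dressed as a theorem. Its content is that any local hidden-variable (LHV) model must satisfy the CHSH inequality Eq.~\eqref{eq:CHSH}, so that a violation certifies nonlocality. I will therefore prove the following: if the correlations admit a local model
\begin{equation}
p(A,B\mid\vec m,\vec n)=\int_\Lambda d\lambda\,\mu(\lambda)\,p(A\mid\vec m,\lambda)\,p(B\mid\vec n,\lambda),
\end{equation}
with $\mu$ independent of the settings (measurement independence), then the CHSH expression is bounded by $2$. The structure mirrors the ontological-model language of Section~\ref{subsec:Spekkens}: preparations are distributions $\mu$, and each party's measurement is a response function depending only on its own setting and $\lambda$.

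The first step is to define local expectations $\bar A(\vec m,\lambda)=\sum_A A\,p(A\mid\vec m,\lambda)\in[-1,1]$ and $\bar B(\vec n,\lambda)\in[-1,1]$. Factorisation then gives
\begin{equation}
E(\vec m,\vec n)=\int d\lambda\,\mu(\lambda)\,\bar A(\vec m,\lambda)\bar B(\vec n,\lambda).
\end{equation}
Next, I use linearity and the triangle inequality to bound the left-hand side of Eq.~\eqref{eq:CHSH} by
\begin{equation}
\int d\lambda\,\mu(\lambda)\Bigl[|\bar A(\vec a,\lambda)|\,|\bar B(\vec b,\lambda)-\bar B(\vec d,\lambda)|+|\bar A(\vec c,\lambda)|\,|\bar B(\vec b,\lambda)+\bar B(\vec d,\lambda)|\Bigr].
\end{equation}
This in turn is at most $\int d\lambda\,\mu(\lambda)\bigl(|x-y|+|x+y|\bigr)$ with $x,y\in[-1,1]$. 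Finally, $|x-y|+|x+y|=2\max(|x|,|y|)\le 2$, and normalisation of $\mu$ yields the bound $2$.

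As an alternative, I could first invoke Fine's argument, as in Section~\ref{subsec:KSCimpliesSpekkensC}, that stochastic response functions can be absorbed into deterministic functions of an enlarged hidden variable. It then suffices to check the $16$ deterministic assignments $A_{\vec a},A_{\vec c},B_{\vec b},B_{\vec d}\in\{\pm1\}$. For each, one of $B_{\vec b}\pm B_{\vec d}$ vanishes and the other is $\pm2$. This gives the vertices of the local polytope, and convexity finishes the argument.

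The main obstacle is conceptual rather than computational. Outside the LHV model there is no content, so I must state the assumptions explicitly: locality (factorisation) and measurement independence, with the settings $\vec a,\vec c$ and $\vec b,\vec d$ compatible within each party's own context. The bound only holds for models satisfying all of these. The contrapositive then defines nonlocality: a violation of Eq.~\eqref{eq:CHSH}, for example the quantum value $2\sqrt2$ for a singlet, shows that no such model exists.
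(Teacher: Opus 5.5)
Your proof is correct and follows essentially the same route as the paper's CHSH derivation in the appendix: factorise $E_\lambda$ into the local expectations $\bar A(\vec m,\lambda)\bar B(\vec n,\lambda)$, bound $|\bar A|\le 1$, use $|r-s|+|r+s|\le 2$, and average over $\lambda$. You also make explicit two things the paper leaves implicit, namely pulling the absolute values inside the integral via the triangle inequality and stating measurement independence. One small fix: Alice's settings $\vec a,\vec c$ (and Bob's $\vec b,\vec d$) need not be mutually compatible, and in the quantum violation they are not. The only compatibility involved is between each of Alice's settings and each of Bob's, which factorisation already encodes, so that clause in your list of assumptions should be reworded or dropped.
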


It is well known that quantum mechanics violates this theorem \cite{Freeman1972BellViolation, Aspect1982BellViolation, hensen2015loophole}, while every classical theory does not \cite{Fine1982},\cite{Brunner2014}. This allows us to define the violation of Bell's theorem as a sufficient condition for a system to be nonclassical.
However it is important to notice Bell-nonlocality cannot be a necessary condition for Kochen-Specker contextuality, because it can only be applied to composite systems: a single $n$-dimensional quantum system (e.g., a qutrit) cannot demonstrate Bell-nonlocality, even if it is Kochen-Specker contextual.

\subsubsection{Bell polytope}\label{subsubsec:BellPolytope}
The set of all possible behaviours which satisfy a Bell inequality is referred to as the Bell-local set~\cite{Scarani2019BellNLBook}.

\begin{definition}\label{Def:Local}
    \textit{A process is defined as Bell-local if it is of the form 
        \begin{equation}
\label{eq: Bell locality assumption}
p^{AB}_{\lambda}(\vec{a},\vec{b};\alpha,\beta)
=
p^{A}_{\lambda}(\vec{a};\alpha,*) \,
p^{B}_{\lambda}(\vec{b};*,\beta)
\end{equation}
    A behaviour is defined as Bell-local if it can be written as a convex combination of Bell-local processes. A behaviour that cannot be written that way is called Bell-nonlocal. 
    }
\end{definition}

From Def.~\ref{Def:Local}, the first property we can identify for the Bell-local set is convexity. This means that, for every pair of behaviours $\mathcal{P}_{LV,1}, \mathcal{P}_{LV,2}$ in the Bell-local set $\mathcal{L}$, their convex sum must also belong to the set, i.e. $\mathcal{P}=\alpha\mathcal{P}_{LV,1}+(1-\alpha)\mathcal{P}_{LV,2}\ \in \mathcal{L}\quad \forall \alpha \in [0,1]$.
The proof of this is based on Fine's representation of local behaviours \cite{Fine1982}: a behaviour $\mathcal{P}$ is local if and only if it is a convex mixture of local deterministic (LD) processes
\begin{equation}
\label{eq: Fine local behaviors}
p(a,b|x,y)=\sum_{j}\sum_kq_{jk}\delta_{a=f_j(x)}\delta_{b=g_k(y)}
\end{equation}
with $\lambda\equiv(j,k)$ and $\sum_{j,k}q_{jk}=1.$
Using this representation for $\mathcal{P}_{LV,1},\mathcal{P}_{LV,2}$ it is clear that also $\mathcal{P}$ has the same form, with $q_{jk}=\alpha q_{jk,1}+(1-\alpha)q_{jk,2}$. Since probabilities are bounded, the local set is also compact. This is due to the Heine-Borel theorem which says that any closed and bounded set over the field $\mathbb{R}$ is a compact set~\cite{Dugac1989HeineBorelTheorem}. A compact convex set can then be defined as the convex hull of its extremal points \cite{rudin1991functional}, i.e. those points that cannot be written as a convex sum of other points in the set. The extremal points of $\mathcal{L}$ are all the local deterministic behaviours and only those. This can be proved again using Fine's deterministic representation of a local behaviour.
Indeed it is easy to show that only behaviours which can be written in that form can be extremal. We suppose \textit{per absurdo} that a LD behaviour can be decomposed as convex combination 
\[
P_{LD,?} = \alpha P_{LD,1} + (1 - \alpha) P_{LD,2}
\]
with $0 < \alpha < 1$ and $P_{LD,1} \neq P_{LD,2}$. The latter two behaviours must differ for at least one pair of inputs $(x,y)$: for these inputs, there are $(\alpha_1, \beta_1) \neq (\alpha_2, \beta_2)$ such that
\[
P_{LD,1}(\alpha_1, \beta_1 \mid x,y) = 1,
\]
and
\[
P_{LD,2}(\alpha_2, \beta_2 \mid x,y) = 1.
\]
But then
\[
P_{LD,?}(\alpha_1, \beta_1 \mid x,y) = \alpha
\quad \text{and} \quad
P_{LD,?}(\alpha_2, \beta_2 \mid x,y) = 1 - \alpha.
\]
So $P_{LD,?}$ is not deterministic, contradicting the assumption that it was LD. This means that the local set is a bounded polytope embedded in $\mathbb{R}^n$. The $(n-1)$-dimensional hyperplanes that delimit it are called facets and are of finite number. To be a facet, a hyperplane must satisfy two properties. The first is that at least $n$ linearly independent extremal points must lie on the hyperplane. The second is that all the extremal points that do not lie on the hyperplane must be found on the same side of it. This is equivalent to saying that, if $v\cdot P=f$ is the equation of the points $P$ of the facet, then
\begin{equation}
    v\cdot P\leq f\quad \forall P \in \mathcal{L}
\end{equation}
This means a polytope is fully determined by listing either its extremal points or its facets. Bell inequalities are then the equations of the facets of the local polytope, which take the general form
\begin{equation}
    \mathcal{I}(P)\equiv \sum_{a,b,x,y}v_{abxy}p(a,b|x,y) \leq \mathcal{I}_L
\end{equation}

Given this definition of nonclassicality through Bell-nonlocality, we now need to identify the relation between Bell-(non)locality and (non)contextuality. Let us first do this for Kochen-Specker contextuality. Using Abramsky's sheaf-theoretic framework, we will show below that Bell-nonlocality implies Kochen-Specker contextuality~\cite{Abramsky2011Sheaf}.

\subsection{Relating Kochen-Specker contextuality and Bell-nonlocality}\label{subsec:KSandBell}

\subsubsection{Sheaf Theory}\label{subsubsec:Sheaf}

Sheaf theory is the part of mathematics that study sheaves: mathematical constructions concerned with passages from local properties to global ones. Sheaves have played a fundamental role in the development of many areas of modern mathematics, such as algebraic geometry, differential geometry, and topology~\cite{tennison1975sheaf, dimca2004sheaves, maclane1994sheaves}. Here, they present a way to study contextuality and nonlocality in a unified way. 

In this treatment we begin by defining a measurement scenario $(X,O,\mathcal{M})$ where $X$ is a finite set of measurements, $O$ is a finite set of outcomes and $\mathcal{M}$ is a family of measurement contexts $C \subseteq X$ representing sets of measurements that can be performed jointly. Let $P(X)$ denote the poset (partially ordered set) of subsets of $X$, ordered by inclusion, i.e. $X_{\leq (x)}=\{y \in X|\text{dim}(y)\leq \text{dim}(x)\}$ and regarded as a category in the usual way, i.e. $x\rightarrow y$ iff $x\subseteq y$ (there exists a morphism between two objects if and only if the first is a subset of the second). For each $C \in \mathcal{M}$ a distribution $O^C$ on the events is specified.

\begin{definition}
\textit{A presheaf on $P(X)$ is a functor}
\begin{equation}
    \mathcal{E}:P(X)^{\text{op}}\rightarrow \textbf{Set}
\end{equation}
\end{definition}
For each subset $C \subseteq X$, the presheaf $\mathcal{E}$ assigns a set $\mathcal{E}(C)$ and for each inclusion $C' \subseteq C$, a restriction map $\mathcal{E}(C)\rightarrow\mathcal{E}(C')$. The elements of $\mathcal{E}(C)$ are called sections over $C$ and they represent assignments of outcomes to the measurements in $C$. If $X$ itself is included as an object, a section over $X$ is called a global section. Moreover we can define the event presheaf as $\mathcal{E}(C):=O^C$, i.e. the set of all outcomes assignments to the measurements in $C$. A global section of this presheaf therefore corresponds to a deterministic assignment of outcomes to all measurements in $X$.

\begin{definition}
    \textit{A presheaf $\mathcal{E}$ on $P(X)$ is a sheaf if for any family of subsets $\{C_i\}$ with $C=\bigcup_iC_i$ and any family of sections $s_i\in \mathcal{E}(C_i)$ satisfying the compatibility condition:}
    \begin{equation}
        s_i|_{C_i\cap C_j}=s_j|_{C_i\cap C_j} \quad \forall i,j,
    \end{equation}
    \textit{there exists a unique section $s \in \mathcal{E}(C)$ such that $s|_{C_i}=s_i$ for all i}.
\end{definition}
This means that a sheaf is a presheaf with the additive condition that local data determine global data uniquely whenever they are mutually compatible.
To describe empirical models quantitatively, we specify a probability distribution over the assignments $\mathcal{E}(C)$ for each context $C \in \mathcal{M}$. To do so we define the distributor functor $D_R:\textbf{Set}\rightarrow\textbf{Set}$ which maps a set to the set of $R$-distributions over it, where a $R$-distribution is a function $d:X\rightarrow R$ which has finite support and with $\sum_{x \in X} d(x)=1$ and $R$ a semiring (a ring with no negative elements). The composition $D_R\mathcal{E}$ is of course a presheaf because the distributor functor goes from a set to a set so the composition $D_R\mathcal{E}:P(X)^{\text{op}}\rightarrow \textbf{Set}$ which is the definition of a presheaf.

\begin{definition}
    \textit{An empirical model $e$ over a measurement scenario $(X,O,\mathcal{M})$ is a family of distributions }
    \begin{equation}
        \{e_C\in D_R\mathcal{E}(C)\}_{C\in \mathcal{M}},
    \end{equation}
    \textit{assigning a distribution to each measurement context.}
\end{definition}
Such families must satisfy the compatibility condition to represent a physical model, which is
\begin{equation}
    e_{C}|_{C\cap C'}=e_{C'}|_{C\cap C'} \quad \forall C,C' \in \mathcal{M}
\end{equation}
which means that distributions associated with different contexts agree on their overlaps.

\subsubsection{Using Sheaf Theory to show Bell-Nonlocality implies Kochen-Specker Contextuality}\label{subsubsec:UsingSheafBellimpliesKS}

In the sheaf-theoretic framework, both contextuality and nonlocality can be characterised in terms of the existence of global sections of a distribution presheaf. In this approach, classicality is identified with the possibility of consistently extending locally observed statistics to a single global probabilistic model defined over all measurements simultaneously. Contextuality and nonlocality arise precisely when such an extension is impossible. Let $X$ denote the set of all measurements in a given scenario, and let $\mathcal{E}$ be the event sheaf assigning to each subset $C \subseteq X$ the set $\mathcal{E}(C)=O^C$ of outcome assignments for measurements in $C$. A global section $s \in \mathcal{E}(X)$ specifies a definite outcome for every measurement, independently of the context in which it is performed. Such global assignments therefore correspond to noncontextual hidden variable models, while a contextual hidden variable model is represented by a distribution $ b \in D_R\mathcal{E}(C)$. Each noncontextual assignment $s\in \mathcal{E}(X)$ induces a Dirac distribution $\delta_s \in D_R\mathcal{E}(X)$, defined by $\delta_s(s)=1$ and $\delta_s(s')=0$ for $s'\neq s$.
The restriction of $\delta_s$ to a context $C$ yields the deterministic distribution $\delta_s|_C\in D_R\mathcal{E}(C).$ Given a distribution $d$ over global sections, the empirical statistics for each context $C$ are recovered by
\begin{equation}
    e_C(s)=d|_C(s)=\sum_{\substack{s'\in \mathcal{E}(X)\\s'|_C=s}}d(s')=\sum_{s'\in \mathcal{E}(X)}\delta_{s'|C}(s)d(s')
\end{equation}
Since noncontextual hidden variables assign outcomes to each measurement independently of the context in which it appears, for each deterministic assignment $s$ and context $C$, the induced distribution factorises over individual measurements
\begin{equation}
    \delta_s|_C(s')=\prod_{x\in C}\delta_{s|\{x\}}(s'|\{x\})
\end{equation}
So one gets the following implication:
\begin{proposition}
\textit{The existence of a global section for an empirical model implies the existence of a non-contextual hidden-variable model which realises it.}
\end{proposition}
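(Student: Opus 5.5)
The plan is to build the hidden-variable model directly from the global section, taking the ontic space to be the set of global deterministic assignments. Here "global section for an empirical model" means a distribution $d \in D_R\mathcal{E}(X)$ whose restriction to every context reproduces the data, $d|_C = e_C$ for all $C\in\mathcal{M}$. So set $\Lambda := \mathcal{E}(X) = O^X$, which is finite because $X$ and $O$ are finite. The hidden-variable distribution is then $\mu(\lambda) := d(\lambda)$. This is a valid $R$-distribution by assumption, and for $R=\mathbb{R}_{\geq 0}$ it is an ordinary probability distribution.

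Next I will specify the response functions. For a measurement $x\in X$ and outcome $o\in O$, put $\xi_{x,o}(\lambda) := \delta_{\lambda|_{\{x\}}}(o)$, i.e.\ $1$ if $\lambda(x)=o$ and $0$ otherwise. This function depends only on $x$ and $\lambda$ and not on any context $C\ni x$, which is exactly the Kochen--Specker noncontextuality requirement. It is also outcome-deterministic and normalised, since $\sum_o \xi_{x,o}(\lambda)=1$.

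The remaining task is to check that the model reproduces the data. For a context $C$ and a joint outcome $s\in O^C$, the model predicts
\begin{equation}
p(s|C)=\sum_{\lambda\in\Lambda} d(\lambda)\prod_{x\in C}\xi_{x,s(x)}(\lambda).
\end{equation}
By the factorisation identity $\delta_\lambda|_C(s)=\prod_{x\in C}\delta_{\lambda|\{x\}}(s|\{x\})$ stated just above, the product equals $\delta_{\lambda|_C}(s)$. The sum therefore equals $\sum_{\lambda}\delta_{\lambda|_C}(s)\,d(\lambda)=d|_C(s)=e_C(s)$, using the marginalisation formula given above. So the model realises $e$ in every context, and this gives the proposition.

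The main obstacle is an interpretive one rather than a calculation: I need to make the meaning of ``global section'' precise, as a distribution on $\mathcal{E}(X)$ rather than a single deterministic section $s\in\mathcal{E}(X)$. A single deterministic section is the special case $d=\delta_s$, and the argument above covers it directly. A secondary point is that I will restrict to $R=\mathbb{R}_{\geq0}$ so that $d$ is an honest probability measure. For a general semiring the same construction goes through verbatim, but it only gives an $R$-valued ``hidden-variable model''.
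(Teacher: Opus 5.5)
Your proposal is correct and takes essentially the same approach as the paper. The paper's appendix proof of the equivalence between global sections and factorisable hidden-variable models uses the same construction: take $\Lambda=\mathcal{E}(X)$ weighted by $d$, assign the deterministic responses $\delta_{\lambda|_C}$ (which factorise over single measurements and so do not depend on the context), and marginalise to recover $d|_C=e_C$.
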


The problem of determining whether an empirical model admits a global section is reformulated as a linear feasibility question. Let $\mathcal{M}$ be a measurement cover of $X$, i.e. a collection of subsets of $X$ that represent maximal sets of jointly performable measurements. Consider the disjoint union of local sections, $\bigsqcup_{C\in \mathcal{M}} \mathcal{E}(C)$ and enumerate its elements as $\{s_i\}_{i=1}^p$, where $p$ is the total number of local sections across all contexts. Similarly we enumerate all global sections of $\mathcal{E}$ as $\{t_j\}_{j=1}^q \subseteq \mathcal{E}(X)$, where $q$ is number of global assignments. In general $p\neq q$. The incidence matrix \textbf{M} is the $p\times q$ Boolean matrix defined by
\begin{equation}
    \textbf{M}[i,j]=\left\{
\begin{array}{c}
1,\quad t_j|C=s_i\ (s_i\in \mathcal{E}(C))  \\
0,\quad \text{otherwise}
\end{array}
\right.
\end{equation}
So each element $M_{ij}$ is assigned taking the global assignment $t_j$, restricting it to the context to which $s_i$ belongs to and verifying if it matches the local assignment $s_i$.
Each column of \textbf{M} corresponds to a global assignment and each row identifies the set of global assignments compatible with a given local section. Equivalently, \textbf{M} provides a matrix representation of the restriction map
\begin{equation}
    \mathcal{E}(X)\longrightarrow\prod_{C \in \mathcal{M}}\mathcal{E}(C),\quad s\mapsto (s|_C)_{C\in \mathcal{M}}.
\end{equation}
The construction depends only on the measurement cover $\mathcal{M}$ and the event presheaf $\mathcal{E}$. Given an empirical model $\{e_C\}_{C\in\mathcal{M}}$ valued in $D_R$, each local section $s_i\in \mathcal{E}(C)$ is assigned a weight $e_C(s_i)$ in the underlying semiring $R$. If we consider all the contexts we get a vector $\textbf{v}\in R^p$ with components $\textbf{v}[i]=e_C(s_i)$. A global distribution is represented instead by a vector $\textbf{x}\in R^q$ whose entries assign weights to the global sections $\{t_j\}$. The requirement that the global distribution reproduces the empirical distributions on every context is encoded then by the linear system $\textbf{Mx= v}$. The existence of a solution \textbf{x} to this system characterises whether the empirical model admits a global section.
The following propositions holds:
\begin{proposition}
    \textit{Solutions of the augmented system} \textbf{Mx}=\textbf{v} \textit{in R are in one-to-one correspondence with global sections realising the empirical model.}
\end{proposition}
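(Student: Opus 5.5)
The plan is to establish the bijection explicitly, by matching vectors $\textbf{x}\in R^q$ with $R$-distributions on $\mathcal{E}(X)$ and checking that the condition ``$d$ restricts to $e_C$ on every $C\in\mathcal{M}$'' is exactly the linear system $\textbf{Mx}=\textbf{v}$. Since the global sections have been enumerated as $\{t_j\}_{j=1}^q$, a function $d:\mathcal{E}(X)\to R$ is the same data as the vector $\textbf{x}$ with $\textbf{x}[j]=d(t_j)$; this identification is already a bijection $R^q\cong R^{\mathcal{E}(X)}$. Finite support is automatic because $X$ and $O$ are finite, so $\mathcal{E}(X)=O^X$ is finite. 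The only remaining issues are normalisation and the realisation condition.

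\emph{Row-by-row computation.} Fix a local section $s_i\in\mathcal{E}(C)$. By the definition of $\textbf{M}$,
\begin{equation}
(\textbf{Mx})[i]=\sum_{j}\textbf{M}[i,j]\,\textbf{x}[j]=\sum_{t_j|_C=s_i}d(t_j)=d|_C(s_i),
\end{equation}
which is the pushforward formula for $e_C(s)$ given earlier in this subsection. Hence $\textbf{Mx}=\textbf{v}$ holds if and only if $d|_C(s)=e_C(s)$ for every $C\in\mathcal{M}$ and every $s\in\mathcal{E}(C)$, that is, if and only if $d|_C=e_C$ for all contexts. This is precisely the statement that $d$ is a global section of $D_R\mathcal{E}$ realising the empirical model.

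\emph{Normalisation.} A solution $\textbf{x}$ must define an honest $R$-distribution, so $\sum_j \textbf{x}[j]=1$ has to follow. Pick any single context $C_0\in\mathcal{M}$ (assume $\mathcal{M}$ is nonempty). Every global section restricts to exactly one local section over $C_0$, so summing the rows of $\textbf{M}$ belonging to $C_0$ gives the all-ones row. Therefore
\begin{equation}
\sum_j \textbf{x}[j]=\sum_{s\in\mathcal{E}(C_0)}e_{C_0}(s)=1,
\end{equation}
because $e_{C_0}$ is itself a distribution. Conversely, any realising distribution $d$ yields a solution $\textbf{x}$ by the row computation above. The two assignments $d\mapsto\textbf{x}$ and $\textbf{x}\mapsto d$ are mutually inverse, which gives the one-to-one correspondence.

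\emph{Main obstacle.} The only delicate point is terminological. The word ``augmented'' and the phrase ``solutions in $R$'' must be read as solutions with entries in the semiring $R$. For $R=\mathbb{R}_{\geq0}$ this builds nonnegativity in, so no extra constraint is needed. If one instead allows solutions over $\mathbb{R}$, the correspondence is with signed quasi-distributions rather than genuine global sections. To make this explicit, I would add the all-ones normalisation row to $\textbf{M}$ (the ``augmented'' system) and note that it is redundant by the argument above. Finally, I would remark that uniqueness of the correspondence does not require a unique solution: different solutions correspond to different global sections realising the same model.
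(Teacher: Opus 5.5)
Your proof is correct and follows the same approach as the paper. The paper gives no separate proof: it only remarks, before stating the proposition, that requiring a global distribution $\textbf{x}$ to reproduce every $e_C$ is encoded by $\textbf{Mx}=\textbf{v}$. Your row identity $(\textbf{Mx})[i]=d|_C(s_i)$ makes that remark explicit, and your observation that the rows of any single context sum to the all-ones vector (so $\sum_j\textbf{x}[j]=1$ holds automatically) supplies a step the paper leaves unstated.
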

We can now relate hidden variables to global sections. Let $\Lambda $ denote a set of hidden variables. A hidden-variable model specifies, for each $\lambda \in \Lambda$ and each context $C \in \mathcal{M}$, a distribution $h_C^\lambda \in D\mathcal{E}(C)$, together with a context-independent distribution $h_\Lambda \in D(\Lambda)$ over hidden variables. For each fixed $\lambda$, the family $\{h_C^\lambda\}_{C\in \mathcal{M}}$ is required to satisfy the compatibility condition
\begin{equation}
    h_C^\lambda|_{C\cap C'}=h_{C'}^\lambda|_{C\cap C'}
\end{equation}
A hidden-variable model $h$ is said to realise an empirical model $e$ if the empirical probabilities are recovered by averaging over the hidden variables. Explicitly, for all $C\in \mathcal{M}$ and all $s \in \mathcal{E}(C)$,
\begin{equation}
    e_C(s)=\sum_{\lambda\in \Lambda}h_C^\lambda(s)\cdot h_\Lambda(\lambda).
\end{equation}
Moreover a hidden-variable model $h$ is said to be factorisable if, for every context $C\in\mathcal{M}$ and every section $s\in \mathcal{E}(C)$,
\begin{equation}
    h_C^\lambda(s)=\prod_{m\in C}h_C^\lambda|\{m\}(s|\{m\}).
\end{equation}
This condition states that the probability assigned to a joint outcome factors into probabilities assigned to individual measurement outcomes.
The following proposition can be proven to be right:
\begin{proposition}
\label{prop: Sheaf, global section = NC}
    \textit{Let $e$ be an empirical model defined on a measurement cover $\mathcal{M}$ for a distribution functor $D_R$. The following are equivalent:
    \begin{enumerate}
        \item $e$ has a realisation by a factorisable hidden-variable model.
        \item $e$ has a global section
    \end{enumerate}
    }
\end{proposition}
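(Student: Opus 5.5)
The plan is to prove the two implications separately, reading the equivalence as the Abramsky--Brandenburger result. Throughout, the global-section side is understood via the linear system $\textbf{Mx}=\textbf{v}$ from the preceding proposition. In that reading, ``$e$ has a global section'' means there is a distribution $d\in D_R\mathcal{E}(X)$ with $d|_C=e_C$ for every $C\in\mathcal{M}$.

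(2)$\Rightarrow$(1). Given such a $d$, I take the hidden variables to be the global assignments themselves, $\Lambda:=\mathcal{E}(X)$, with $h_\Lambda:=d$. For each $\lambda=t\in\mathcal{E}(X)$ and each context $C$ I set $h^\lambda_C:=\delta_{t|_C}$. Compatibility holds automatically, since both sides equal $\delta_{t|_{C\cap C'}}$ on overlaps. Factorisability is exactly the product identity $\delta_t|_C(s')=\prod_{x\in C}\delta_{t|\{x\}}(s'|\{x\})$ already noted above. Realisation is the computation $e_C(s)=\sum_{t}\delta_{t|_C}(s)d(t)=d|_C(s)$, which was also written above.

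(1)$\Rightarrow$(2). Let $(\Lambda,h_\Lambda,\{h^\lambda_C\})$ be a factorisable realisation. The first step is to show that, for each fixed $\lambda$, the single-measurement marginals $p^\lambda_x:=h^\lambda_C|_{\{x\}}$ do not depend on the context $C\ni x$. This follows from the compatibility condition, since any two contexts containing $x$ overlap in at least $\{x\}$. I then define a global distribution by the product
\begin{equation}
d(t):=\sum_{\lambda\in\Lambda}h_\Lambda(\lambda)\prod_{x\in X}p^\lambda_x(t(x)),\qquad t\in O^X .
\end{equation}
This is finitely supported, because $X$ and $O$ are finite and $h_\Lambda$ has finite support. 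It is normalised, because each product distribution sums to one using only the semiring operations. To check $d|_C=e_C$, I marginalise the product over the variables in $X\setminus C$. Each factor $p^\lambda_x$ with $x\notin C$ sums to $1$, which leaves $\prod_{x\in C}p^\lambda_x(s(x))$. By factorisability this equals $h^\lambda_C(s)$, and averaging over $\lambda$ gives $e_C(s)$.

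The main obstacle is the (1)$\Rightarrow$(2) step, specifically justifying that the marginalisation and the swapping of finite sums and products are legitimate in a general semiring $R$. This uses only distributivity and the finiteness of $X$, $O$ and the support of $h_\Lambda$, so I would state these finiteness conditions explicitly. A secondary point is the context-independence of the marginals $p^\lambda_x$, which needs the compatibility condition imposed on each $h^\lambda$ individually and not merely on the average $e$. I would emphasise this, since it is where the ``noncontextual'' content of the hidden-variable model enters.
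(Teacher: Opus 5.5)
Your proposal is correct and follows essentially the same argument as the paper. For (2)$\Rightarrow$(1) you take $\Lambda=\mathcal{E}(X)$, $h_\Lambda=d$ and Dirac local distributions $h^\lambda_C=\delta_{\lambda|_C}$. For (1)$\Rightarrow$(2) you define $d$ as the $h_\Lambda$-average of products of single-measurement marginals and then marginalise out $X\setminus C$. The paper's appendix does exactly this, though with the two direction labels swapped. Your explicit check that each singleton marginal $h^\lambda_C|_{\{x\}}$ is context-independent, using the compatibility of each $h^\lambda$, fills in a step the paper uses tacitly when it writes $h^\lambda_{\{x\}}$.
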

The proof can be found in Appendix \ref{app: Proof of NC = global section}.

An empirical model $\{e_C\}_{C\in \mathcal{M}}$ is said to be extendable to a global section if there exists a distribution $d \in D_R\mathcal{E}(X)$ such that $d|C=e_C$ for all $C\in \mathcal{M}$. 
In Bell scenarios, extendability to a global section is equivalent to Bell-locality, while in general measurement scenarios it coincides with Kochen-Specker noncontextuality. From this viewpoint, nonlocality appears as a subset of contextuality, related to the case in which spatially separated measurements are performed.

This demonstrates that Bell-nonlocality implies Kochen-Specker contextuality.

\subsubsection{Kochen-Specker Contextuality implies Bell-Nonlocality (in certain scenarios)}\label{subsubsec:KSimpliesBellish}

We would like now to demonstrate the reverse implication: that Kochen-Specker contextuality (in certain scenarios) implies Bell-nonlocality. Obviously in general this is not true, since Bell-nonlocality requires a multipartite system, while KS contextuality can arise in single systems (e.g., a single qutrit). Instead, what we want to prove is that, for a bipartite system where the individual parts are too small to manifest KS-contextuality (e.g., a system of multiple qubits), any KS-contextuality in the system implies the system is Bell-nonlocal. We will start with the example of a bipartite qubit system, i.e. a system in which one qubit is possessed by Alice, and another by Bob, who are spatially separated. We can consider two types of measurements on such a system: global and local. In a global measurement, both qubits are measured simultaneously as a single operation on the global state (e.g., in a Bell measurement, such as that used in teleportation, where the eigenstates of the measurement operator are the Bell states). In a local measurement however, one of the two qubits is measured locally (i.e., with a single-qubit measurement operator).

This distinction can be demonstrated using the Peres-Mermin square~\cite{PERES1990,Peres1991,Mermin1993}. The Peres-Mermin square is a square matrix whose elements are observables on a two-qubit Hilbert space $\mathcal{H}_2\otimes\mathcal{H}_2$:
\begin{equation}
\begin{bmatrix}
\sigma_z\otimes \mathds{1}_2 & \mathds{1}_2\otimes\sigma_z & \sigma_z\otimes\sigma_z\\
\mathds{1}_2\otimes\sigma_x & \sigma_x\otimes \mathds{1}_2 & \sigma_x\otimes\sigma_x\\
\sigma_z\otimes \sigma_x & \sigma_x\otimes\sigma_z & \sigma_y\otimes\sigma_y
\end{bmatrix}
\end{equation}
where $\mathds{1}_2$ is the identity operator for the two dimensional Hilbert space of one particle.
This array of measurements is one of the simplest examples which demonstrates KS-contextuality for a quantum-mechanical system. Every column/row forms a separate context, in the sense that all of its operators mutually commute. However, operators not in the same column/row do not necessarily commute, meaning we can define some noncontextuality inequality which together these operators violate.

Notably, the contextuality of the Peres-Mermin square is valid regardless of the initial state of the system's two qubits. However we can interpret the measurements of each observable in the last row and last column as being either two single-qubit local measurements (with four outcomes) or as a global (dichotomic) measurement. If they are performed as local measurements, one has in the last row a measurement of six incompatible single-qubit observables, which cannot form a context, making the scheme not more valid. Therefore, we can intuitively say that either a global (Bell-style) measurement, or a global entangled state, are necessary to prove the contextuality of the system.

We can formalise this intuition through the following proposition.

\begin{proposition}
\label{prop: state dep context implies Bell NL}
\textit{Any set of measurements exhibiting State-Dependent Contextuality or State-Independent Contextuality can be mapped to a Bell inequality that is violated by a maximally entangled state.}
\end{proposition}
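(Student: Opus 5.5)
The natural route is the standard construction of Cabello and co-workers, which maps a KS-contextuality scenario to a bipartite Bell scenario. Take a set of observables $\{A_i\}$ on $\mathbb{C}^d$ with compatibility structure (contexts) $\mathcal{M}$, together with a noncontextuality inequality $\sum_{C\in\mathcal{M}}\sum_{s} w_{C,s}\, e_C(s)\le B_{NC}$ violated by some state $\rho$. In the state-independent case any $\rho$ will do.

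The plan has four steps.

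\textbf{Step 1 (the two parties).} Double the system to $\mathbb{C}^d\otimes\mathbb{C}^d$ and prepare the maximally entangled state $\ket{\phi_d}=\frac{1}{\sqrt d}\sum_k\ket{kk}$.
\begin{itemize}
\item Alice's inputs are the contexts $C\in\mathcal{M}$, and she measures all of $C$ jointly.
\item Bob's inputs are single observables $A_j$, and he measures the transpose $A_j^{T}$.
\end{itemize}
The identity $(A\otimes \mathds{1})\ket{\phi_d}=(\mathds{1}\otimes A^{T})\ket{\phi_d}$ gives perfect correlations. Whenever $A_j\in C$, Alice's outcome for $A_j$ coincides with Bob's with probability one, whatever $C$ is.

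\textbf{Step 2 (Bell expression).} Define a Bell functional $\mathcal{I}$ with two parts:
\begin{itemize}
\item the noncontextuality expression evaluated on Alice's marginals (or, symmetrically, on Alice--Bob joint terms when $A_j\in C$);
\item a penalty term $-K\sum_{C}\sum_{A_j\in C}p(a_j\neq b_j\mid C,A_j)$ with $K$ large.
\end{itemize}

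\textbf{Step 3 (local bound).} Use Fine's representation (Eq.~\eqref{eq: Fine local behaviors}): it suffices to bound $\mathcal{I}$ on local deterministic strategies $a=f(C)$, $b=g(j)$. There are two cases.
\begin{itemize}
\item If $f(C)|_{A_j}=g(j)$ for every $A_j\in C$, then Alice's assignments are determined by Bob's context-independent function $g$. The strategy is therefore a global section in the sense of Prop.~\ref{prop: Sheaf, global section = NC}, i.e.\ a deterministic noncontextual assignment, so the first term is at most $B_{NC}$.
\item Otherwise, the penalty subtracts at least $K$. Choosing $K$ greater than the maximal algebraic gain of the first term keeps the value below $B_{NC}$.
\end{itemize}
Since extremal points suffice and $\mathcal{I}$ is linear, $\mathcal{I}\le B_{NC}$ holds on the whole local polytope.

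\textbf{Step 4 (quantum violation).} On $\ket{\phi_d}$ the penalty vanishes. Alice's reduced state is $\mathds{1}/d$, so her context statistics are those of the maximally mixed state.
\begin{itemize}
\item For state-independent contextuality (Peres--Mermin, for instance), the violation of the NC inequality holds for $\mathds{1}/d$, so $\mathcal{I}>B_{NC}$ immediately.
\item For state-dependent contextuality (KCBS, for instance), the violating state $\ket{\psi}$ must instead be steered. Let Bob first perform a filtering or heralding measurement containing the projector onto $\ket{\psi^*}$, or, equivalently, add Bob inputs whose outcome conditions Alice onto $\ket{\psi}$. Then write the Bell expression conditioned on that outcome, multiplying through by its probability $1/d$ to keep it linear in $p$.
\end{itemize}

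\textbf{Main obstacle.} The hardest step is the state-dependent case in Step 4. The conditioning has to stay linear: the weights must be rescaled so that the local bound in Step 3 is still obtained by the same case split, with the heralded term included in the deterministic analysis. I would try the construction of Cabello's ``converting contextuality into nonlocality'', where Bob's heralding measurement is simply added as an extra context-compatible observable.

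A secondary check is that $K$ can be chosen finite. This holds because the number of contexts and outcomes is finite, so the gap between the NC term and its algebraic maximum is bounded.
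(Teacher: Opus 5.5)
Your route is genuinely different from the paper's. The paper, following Cabello, never heralds. It first enlarges the state-dependent set into a critical state-independent one by adding bases and true-implies-false sets. It then uses the SI-C weight criterion to write the graph inequality with bound $\alpha(G,w)$. Finally it has Alice measure the $\Pi_i$ and Bob the $\Pi_j^*$ on the maximally entangled state, so that the Bell expression reproduces the noncontextuality expression on $\mathds{1}/d$. Your closing description of that construction as "adding a heralding measurement" therefore does not match it. Your context-versus-observable game with a consistency penalty, in the spirit of the magic-square game, is a legitimate alternative. For state-independent contextuality it is complete: perfect correlations kill the penalty, consistent deterministic strategies are global sections, and Alice's reduced state is $\mathds{1}/d$.

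The gap is the state-dependent case, which you flag but leave open, and the linearisation you do state fails in its natural reading. Rescaling the bound by the constant $1/d$, i.e.\ claiming $\sum_{C,s}w_{C,s}\,p(s,h{=}1|C,y_0)\le B_{NC}/d$ (plus penalty), is not a Bell inequality. A local model can fix Bob's herald deterministically. With $h=1$ and Alice following an optimal noncontextual assignment it reaches $B_{NC}$; with $h=0$ it reaches $0$. One of these exceeds $B_{NC}/d$ unless $B_{NC}=0$. The herald probability must enter as a variable:
\[
\mathcal{I}=\sum_{C,s}w_{C,s}\,p(s,h{=}1|C,y_0)-B_{NC}\,p(h{=}1|y_0)-K\sum_{C}\sum_{A_j\in C}p(a_j\neq b_j|C,A_j)\le 0 .
\]
For deterministic strategies there are two cases. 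If $g(y_0)=0$, the first two terms vanish and $\mathcal{I}\le 0$. If $g(y_0)=1$, the bound reduces exactly to your case split, with $K\ge\max_f\sum_C w_{C,f(C)}-B_{NC}$. Now take $\ket{\phi_d}$ with Bob's herald $\{\ket{\psi^*}\bra{\psi^*},\mathds{1}-\ket{\psi^*}\bra{\psi^*}\}$. The penalty vanishes and $\mathcal{I}=\frac{1}{d}(Q_\psi-B_{NC})>0$, where $Q_\psi$ is the value of the NC expression on $\ket{\psi}$.

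With this repair, your argument handles state-dependent contextuality directly and works for an arbitrary NC inequality. It bypasses the paper's enlargement of the SD-C set to an SI-C set, whose delicate step (choosing the TIFS nodes so the critical set is actually SI-C) the paper only cites. The cost is a tuned constant $K$, joint multi-outcome context measurements for Alice, an extra Bob setting, and a violation diluted by $1/d$. The paper's route instead gives a symmetric single-projector Bell inequality with combinatorial local bound $\alpha(G,w)$, whose quantum value equals the NC violation of $\mathds{1}/d$.
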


Prop.~\ref{prop: state dep context implies Bell NL} can be demonstrated following Ref~\cite{Cabello2021ConvertingCtoNL}.\footnote{Note Ref.~\cite{Wright2023contextualityin} shows a similar result to Ref~\cite{Cabello2021ConvertingCtoNL} - that an entangled multi-qubit state can demonstrate Kochen-Specker contextuality using local (i.e., unentangled) measurements if and only if the state can be used (with appropriate local measurements) to violate a Bell inequality.}
We give a proof of this in Appendix \ref{Appendix: Cabello state dep context implies Bell NL}.

This shows that in certain scenarios (where maximal entanglement is present) Kochen-Specker contextuality implies Bell nonlocality.

\subsection{Relating Spekkens contextuality and Bell-nonlocality}\label{subsec:SpekkensandBell}

We now want to identify the relation between Spekkens contextuality and Bell nonlocality.

First of all we need the following result. Spekkens demonstrated that a maximally-mixed qubit state is preparation contextual~\cite{Spekkens2005}. However this result can be generalised to the following theorem~\cite{Banik2014}:
\begin{theorem}
\label{theorem: mixed qubit implies prep contextuality}
    \textit{Any mixed state of a qubit is preparation contextual.}
\end{theorem}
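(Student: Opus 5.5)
The plan is to generalise Spekkens' original argument for the maximally mixed qubit. The key fact is that a mixed qubit state admits infinitely many inequivalent convex decompositions into pure states. Preparation noncontextuality forces all of these decompositions to share one ontic distribution, and this will clash with the representation of pure states under projective measurements.

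Let $\rho=\frac{1}{2}(\mathds{1}+\vec r\cdot\vec\sigma)$ with $0\le|\vec r|<1$. The first step is to construct at least two operationally equivalent but procedurally distinct preparations of $\rho$, in the style of Spekkens' hexagon construction. Pick three pairs of antipodal pure states $\{\psi_a,\psi_{a^\perp}\}$, $\{\psi_b,\psi_{b^\perp}\}$, $\{\psi_c,\psi_{c^\perp}\}$ on the Bloch sphere. Then choose weights $p_a,p_b,p_c$ such that each pairwise mixture reproduces $\rho$. Also choose a second decomposition of $\rho$ as a mixture of the three states $\psi_a,\psi_b,\psi_c$, and a third as a mixture of $\psi_{a^\perp},\psi_{b^\perp},\psi_{c^\perp}$. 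For a non-central $\vec r$ this is possible because $\vec r$ lies in the interior of the Bloch ball. One can therefore tilt a planar hexagon so that its plane contains $\vec r$, placing $\vec r$ inside both triangles and on each chord. This geometric step is where the $|\vec r|\neq 0$ case needs care. I would choose the chords through $\vec r$ in three directions in the plane, and then check that the triangle conditions hold by continuity from the $\vec r=0$ configuration, where they hold exactly.

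Second, impose preparation noncontextuality. Every decomposition above is operationally equivalent to $\rho$, so every one of them must have the same ontic distribution $\mu_\rho$. Writing $\mu_\rho$ as each mixture gives several linear identities among $\mu_{\psi_a},\mu_{\psi_{a^\perp}},\dots$.

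Third, use outcome determinism for sharp measurements, which the paper has already invoked: preparation noncontextuality forces $\{0,1\}$-valued response functions for projective measurements. Since $\psi_a$ and $\psi_{a^\perp}$ are perfectly distinguishable, their supports are disjoint. For any $\lambda$ in the support of $\mu_\rho$, the pairwise decompositions force $\lambda$ to lie in exactly one of the supports of $\mu_{\psi_x}$ and $\mu_{\psi_{x^\perp}}$ for each $x\in\{a,b,c\}$. The triple decompositions then force $\lambda$ into the support of some $\mu_{\psi_x}$, and also into the support of some $\mu_{\psi_{y^\perp}}$.

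The contradiction is derived quantitatively, exactly as in Spekkens' proof. Consider the measurement with effects projecting onto $\psi_{a^\perp},\psi_{b^\perp},\psi_{c^\perp}$, suitably unsharpened and completed to a POVM. Measurement noncontextuality assigns it a response function, and evaluating it on $\mu_\rho$ via the two decompositions gives incompatible values. The main obstacle is the choice of this final measurement for general $\vec r$, since the symmetric trine POVM of the $\vec r=0$ case no longer applies directly. Because the target theorem concerns preparation contextuality only, I would instead follow Banik et al.: derive a preparation-noncontextuality inequality using only the pairwise and triple equivalences and the deterministic sharp responses. I would then show that a suitable qubit measurement violates it whenever $|\vec r|<1$, with the violation shrinking to zero as $|\vec r|\to1$.
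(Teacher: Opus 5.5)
\textbf{There is a genuine gap, and it sits in your first step.} The construction you ask for cannot exist when $\vec r\neq 0$, and everything afterwards depends on it.

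A mixture $p_x\psi_x+(1-p_x)\psi_x^\perp$ of two antipodal (orthogonal) pure states has Bloch vector $(2p_x-1)\vec n_x$, so it lies on a diameter through the origin. A nonzero $\vec r$ lies on exactly one such diameter, the one along $\hat n=\vec r/|\vec r|$. Your three pairs would therefore all collapse to $\{\phi_n,\phi_n^\perp\}$, and the hexagon degenerates. Tilting the plane does not help. Neither does continuity from $\vec r=0$: it preserves the strict condition that $\vec r$ lies inside the triangles, but not the exact condition that $\vec r$ lies on each chord. If you instead use non-antipodal chords through $\vec r$, you lose perfect distinguishability. You then also lose the disjoint ontic supports that your third step relies on.

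\textbf{The missing idea is to split off a common pure component.} This is exactly what the paper does, following Banik et al. With $q=|\vec r|<1$, write $\rho=(1-q)\frac{\mathds{1}}{2}+q\,\phi_n$, and decompose only the maximally mixed part with Spekkens' hexagon in the plane orthogonal to $\hat n$. This gives:
\begin{itemize}
\item $\rho=\frac{1-q}{2}(\psi_x+\psi_x^\perp)+q\phi_n$ for $x=a,b,c$;
\item $\rho=\frac{1-q}{3}(\psi_a+\psi_b+\psi_c)+q\phi_n$ and $\rho=\frac{1-q}{3}(\psi_a^\perp+\psi_b^\perp+\psi_c^\perp)+q\phi_n$;
\item the spectral decomposition $\rho=\frac{1+q}{2}\phi_n+\frac{1-q}{2}\phi_n^\perp$.
\end{itemize}
Once preparation noncontextuality equates the six ontic representations, the contradiction is purely combinatorial and needs no measurement at all:
\begin{itemize}
\item Fix $\lambda$ with $\mu_\rho(\lambda)>0$. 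The $q\mu_{\phi_n}$ terms cancel, so the pair sums $s=\mu_{\psi_x}(\lambda)+\mu_{\psi_x^\perp}(\lambda)$ coincide for all $x$.
\item Disjointness makes each $\mu_{\psi_x}(\lambda)$ equal to $0$ or $s$, so $\sum_x\mu_{\psi_x}(\lambda)\in\{0,s,2s,3s\}$.
\item The triple decompositions force this sum to equal $3s/2$, hence $s=0$ and $\mu_\rho(\lambda)=q\mu_{\phi_n}(\lambda)$.
\item The spectral decomposition gives $\mu_\rho(\lambda)\ge\frac{1+q}{2}\mu_{\phi_n}(\lambda)$. Since $q<1$, this forces $\mu_{\phi_n}(\lambda)=0$, hence $\mu_\rho(\lambda)=0$, a contradiction.
\end{itemize}

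\textbf{Your closing step is both unnecessary and not actually carried out.} A trine-type POVM plus measurement noncontextuality would only rule out models that are simultaneously preparation- and measurement-noncontextual, which is weaker than the theorem. Spekkens' own preparation-contextuality proof is the combinatorial one above, not the unsharp-measurement one. The alternative you gesture at, an inequality ``following Banik et al.'' whose violation shrinks as $|\vec r|\to 1$, is never derived. Outcome determinism is also superfluous: disjoint supports for orthogonal states hold in any ontological model.
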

We prove this in Appendix~\ref{appendix: proof of theorem mixed qubit prep context}. 

\subsubsection{Spekkens Contextuality does not imply Bell-Nonlocality}\label{subsubsec:SpekkensnotimplyBell}

We showed in the previous section that Kochen-Specker state-independent contextuality implies Bell-nonlocality for any two maximally entangled qubit system. 
Here we show that Spekkens contextuality does not necessarily imply Bell-nonlocality for a two qubit system, and this can be shown using the Werner state. A Werner state \cite{Werner1989WernerState} is a particular bipartite quantum state which is invariant under all unitary operators in the form $U\otimes U\otimes...\otimes U$ for any unitary $U$ on a single subsystem. This means a Werner state $\rho _{AB}$ must satisfy
\begin{equation}
    \rho_{AB}=(U\otimes U)\rho_{AB}(U^\dagger\otimes U^\dagger)
\end{equation}
In the case of the two-qubit system it can be written as a convex combination of the maximally mixed state and a Bell state, for example
\begin{equation}
    W_{AB}=\alpha \ket{\Psi^-}\bra{\Psi^-}+\frac{1-\alpha}{4}I_{AB}
\end{equation}
where $-1/3\leq \alpha\leq 1$. Thus the two-qubit Werner states are separable for $\alpha\leq1/3$, and entangled for $\alpha>1/3$. Consider $1/3<\alpha\leq 1/2$. For these values of $\alpha$, the state is entangled, and the two density matrices for the two local qubits (each obtained by tracing out the other qubit) are each $\mathds{1}/2$ - i.e., maximally mixed. This implies by Theorem \ref{theorem: mixed qubit implies prep contextuality} that each of these local qubit states must be Spekkens preparation contextual.
However, the Werner state for these specific values of $\alpha$ can be described by a local hidden variable model~\cite{Werner1989WernerState}. This shows that Spekkens contextuality does not necessarily imply Bell-nonlocality.

\subsubsection{Geometrically proving that Bell-Nonlocality implies Spekkens Contextuality}\label{subsubsec:BellimpliesSpekkens}

Let us now consider the reverse implication: whether Bell-nonlocality implies Spekkens contextuality. 
Above, we saw that both Bell nonlocality and Spekkens contextuality can be treated geometrically. Section~\ref{subsubsec:BellPolytope} showed that Bell nonlocality consists of the set of behaviours which do not belong to the local polytope $\mathcal{L}$, while Section~\ref{subsubsec:simplex} showed that a necessary and sufficient condition for a scenario being Spekkens noncontextual is the embeddability of the Generalised Probabilistic Theory (GPT) representing the scenario in a simplex. In this Section, we link these two concepts to find the formal relations between them.

Since Bell nonlocality manifests in multipartite scenarios, we consider a bipartite system shared between two spacelike separated parties, Alice and Bob. To analyse this scenario in terms of simplex embeddability, we represent it as a GPT. We define the joint preparation of the system by a quantum state $\rho_{AB}$. 

Imagine Alice performs a local measurement $x$ yielding outcome $a$, described by a POVM element $E_{a|x}$. Bob similarly performs a local measurement $y$ yielding outcome $b$, described by a POVM element $E_{b|y}$. The joint measurement is represented by the tensor product effect $E_{ab|xy} = E_{a|x} \otimes E_{b|y}$. The joint probability distribution, or behaviour, is given by the Born rule:
\begin{equation}
    p(a,b|x,y) = \text{Tr}((E_{a|x} \otimes E_{b|y}) \rho_{AB})
\end{equation}

This allows us to define a formal bipartite GPT, $G_{\text{Bell}} = (V, \langle \cdot, \cdot \rangle, \Omega, \mathcal{E})$, where $\Omega$ is the state space containing the abstract state vector $\textbf{s}_P$ corresponding to the joint preparation $\rho_{AB}$, and $\mathcal{E}$ is the effect space containing the joint effects $\textbf{e}_{ab|xy}$ corresponding to the local operational structure $E_{a|x} \otimes E_{b|y}$.

In this GPT, the probability rule translates to the bilinear inner product:
\begin{equation}
    p(a,b|x,y) = \langle \textbf{e}_{ab|xy}, \textbf{s}_P \rangle_V
\end{equation}
The following theorem relates $G_{Bell}$ being simplex-embeddable and $p(a,b|x,y)$ belonging to the Bell-local polytope $\mathcal{L}$:
\begin{theorem}
    \textit{If a generalised probabilistic theory $G=(V,\langle\_,\_\rangle_V,\Omega,\mathcal{E})$ describing a bipartite prepare-measure experiment is simplex-embeddable and it is diagram-preserving in the sense defined in Ref.~\cite{Schmid2024structuretheorem}, then any behaviour $\mathcal{P}$ generated by $G$ belongs to the Bell-local polytope $\mathcal{L}$.}
\end{theorem}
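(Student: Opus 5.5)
The plan is to mirror the proof of the earlier theorem relating simplex embeddability to the Kochen--Specker polytope. The key difference is that the ontic states must now be shown to factorise across Alice and Bob, and this is exactly where diagram preservation enters. The target is Fine's representation, Eq.~\eqref{eq: Fine local behaviors}: I will exhibit $p(a,b|x,y)$ as a convex mixture of products of single-party response functions. Each such product can then be further decomposed into local deterministic vertices of $\mathcal{L}$, as done in Section~\ref{subsubsec:BellPolytope}.

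First I invoke the structure theorem of Ref.~\cite{Schmid2024structuretheorem}. For a diagram-preserving simplex-embeddable GPT, the embedding of a composite system is the tensor product of the embeddings of its parts. That is, $W_{AB}=W_A\otimes W_B$, the simplex is $\Delta_{d_A d_B}$ with vertices $\textbf{v}_{\lambda_A}\otimes\textbf{v}_{\lambda_B}$, and product effects satisfy
\begin{equation}
\kappa_{AB}(\textbf{e}_{a|x}\otimes\textbf{e}_{b|y})=\kappa_A(\textbf{e}_{a|x})\otimes\kappa_B(\textbf{e}_{b|y}),
\end{equation}
with $\kappa_A(\textbf{e}_{a|x})\in\Delta^*_{d_A}$ and $\kappa_B(\textbf{e}_{b|y})\in\Delta^*_{d_B}$. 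The joint ontic space is therefore $\Lambda=\Lambda_A\times\Lambda_B$. I expect this to be the main obstacle. Simplex embeddability alone says nothing about locality: one can always choose a simplex whose vertices are correlated across parties, and then the response functions need not factorise. The whole weight of the argument rests on extracting the precise statement about composites from Ref.~\cite{Schmid2024structuretheorem}. If that paper gives factorisation only for the linear maps and not directly for the simplex vertices, I would first try to derive the vertex decomposition from the fact that $\kappa_A\otimes\kappa_B$ maps into the dual of a product simplex. I would also have to check that this dual is the product of the local hypercubes, at least on the relevant product effects.

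Next, I expand the embedded joint state $\iota_{AB}(\textbf{s}_P)=\sum_{\lambda_A,\lambda_B}q_{\lambda_A\lambda_B}\,\textbf{v}_{\lambda_A}\otimes\textbf{v}_{\lambda_B}$ as a convex combination, which is unique because it is taken over simplex vertices. Using the probability rule and bilinearity, this gives
\begin{equation}
p(a,b|x,y)=\sum_{\lambda_A,\lambda_B}q_{\lambda_A\lambda_B}\,\xi^A_{a|x}(\lambda_A)\,\xi^B_{b|y}(\lambda_B),
\end{equation}
where $\xi^A_{a|x}(\lambda_A)=\langle\kappa_A(\textbf{e}_{a|x}),\textbf{v}_{\lambda_A}\rangle_{W_A}\in[0,1]$, and similarly for $\xi^B_{b|y}$. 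Normalisation over the outcomes $a$ (resp.\ $b$) follows because the local effects of each measurement sum to the unit effect, which the embedding maps to the all-ones vector of the hypercube.

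The response function $\xi^A_{a|x}$ depends only on Alice's setting and outcome, and $\xi^B_{b|y}$ only on Bob's. Setting $\lambda=(\lambda_A,\lambda_B)$, each term in the sum is a Bell-local process in the sense of Def.~\ref{Def:Local}, so $\mathcal{P}$ is a convex combination of such processes and lies in $\mathcal{L}$. The response functions may be indeterministic, since the effects need not be sharp. In that case I absorb the local randomness into an enlarged hidden variable: each factor is written as a mixture of deterministic functions $f_j(x)$ and $g_k(y)$, which recovers exactly the form of Eq.~\eqref{eq: Fine local behaviors} and completes the argument.
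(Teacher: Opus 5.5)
Your proposal is correct and takes essentially the same route as the paper. Both arguments decompose the embedded joint state over simplex vertices and use diagram preservation to factorise each vertex-wise response into an Alice part and a Bob part. Both then refine the result into local deterministic processes via Fine's representation. Your version is more explicit than the paper's. The paper simply asserts $\langle\kappa(\textbf{e}_{ab|xy}),\textbf{v}_\lambda\rangle_W=p_A(a|x,\lambda)\,p_B(b|y,\lambda)$ as a consequence of diagram preservation. You instead derive this factorisation from the product structure $\Lambda=\Lambda_A\times\Lambda_B$ and $\kappa_{AB}=\kappa_A\otimes\kappa_B$, and you also check that the local response functions are normalised.
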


\begin{proof}
    Consider a bipartite Bell scenario where a joint preparation $P$ is shared between two spacelike separated parties. The preparation corresponds to a state vector $\textbf{s}_P \in \Omega$. The parties perform local measurements $x$ and $y$, obtaining outcomes $a$ and $b$ respectively. This corresponds to a joint effect $\textbf{e}_{ab|xy} \in \mathcal{E}$. The behaviour generated by the GPT is given by the probability rule:
    \begin{equation}
        p(a,b|x,y) = \langle \textbf{e}_{ab|xy}, \textbf{s}_P \rangle_V
    \end{equation}

    By hypothesis, the GPT $G$ is simplex-embeddable. Therefore, by definition, there exist linear maps $\iota$ and $\kappa$ such that $\iota(\textbf{s}_P) \in \Delta_d$ and $\kappa(\textbf{e}_{ab|xy}) \in \Delta^*_d$, where $\Delta_d$ is a $(d-1)$-dimensional simplex. A fundamental geometric property of a simplex is that any point within its boundaries can be expressed as a unique convex combination of its $d$ vertices. Each vertex corresponds to an ontic state $\lambda$~\cite{Schmid2021}. Let us denote these vertices as $\textbf{v}_\lambda$, with $\lambda \in \{1, \dots, d\}$. We can therefore decompose the mapped state as:
    \begin{equation}
        \iota(\textbf{s}_P) = \sum_{\lambda=1}^d q_\lambda\textbf{v}_\lambda
    \end{equation}
    where $q_\lambda \geq 0$ and $\sum_\lambda q_\lambda = 1$. 
    Using the simplex-embeddability inner product condition $\langle\textbf{e},\textbf{s}\rangle_V=\langle\kappa(\textbf{e}),\iota(\textbf{s})\rangle_W$, we can rewrite the behaviour as:
    \begin{equation}
    \begin{split}
        p(a,b|x,y) = \langle \kappa(\textbf{e}_{ab|xy}), \iota(\textbf{s}_P) \rangle_W \\= \left\langle \kappa(\textbf{e}_{ab|xy}), \sum_{j=1}^d q_\lambda \textbf{v}_\lambda\right\rangle_W
    \end{split}
    \end{equation}
    By the linearity of the inner product space $W$, this becomes:
    \begin{equation}
        p(a,b|x,y) = \sum_{\lambda=1}^d q_\lambda \langle \kappa(\textbf{e}_{ab|xy}), \textbf{v}_\lambda \rangle_W
    \end{equation}

    The term $\langle \kappa(\textbf{e}_{ab|xy}), \textbf{v}_\lambda \rangle_W$ represents the probability of obtaining outcomes $a,b$ given settings $x,y$ and the ontic state $\lambda$. Because $\kappa(\textbf{e}_{ab|xy})$ resides in the dual hypercube $\Delta^*_d$ and $\textbf{v}_\lambda \in \Delta_d$, the definition of the dual guarantees that this inner product strictly represents a valid probability in $[0,1]$.

    In a standard bipartite Bell scenario, the joint measurements are composed of independent local operations, meaning the joint effect possesses a tensor product structure: $\mathbf{e}_{ab|xy}=\mathbf{e}_{a|x}\otimes\mathbf{e}_{b|y}$. This follows from the hypothesis of diagram preservation. Because the entire GPT is simplex-embeddable, it admits a classical noncontextual ontological model. In such a model, the embedding must preserve the compositional structure of the local operations. Consequently, the evaluation of the mapped joint effect $\kappa(\mathbf{e}_{a|x}\otimes\mathbf{e}_{b|y})$
    must factorise into the product of the marginal evaluations. We can therefore write:
\begin{equation}
\langle \kappa(\textbf{e}_{ab|xy}), \textbf{v}_\lambda \rangle_W = p_A(a|x,\lambda) p_B(b|y,\lambda)
\end{equation}
    where $p_A(a|x,\lambda)$ and $p_B(b|y,\lambda)$
    are valid probabilities in $[0,1]$ representing the local response functions.
    
    As established by Fine's theorem \cite{Fine1982}, any such factorisable local probability distribution can be represented as a convex mixture of local deterministic processes. Thus, for each $\lambda$, we can expand the local response functions into local deterministic behaviours:
    \begin{equation}
    \begin{split}
        p_A&(a|x,\lambda) p_B(b|y,\lambda) =\\
        &\sum_{j,k} P(j,k|\lambda) \delta_{a=f_j(x)} \delta_{b=g_k(y)}
    \end{split}
    \end{equation}
    where $P(j,k|\lambda)$ is a valid probability distribution over the deterministic assignments $j$ and $k$. Substituting this back into the geometric expression for our behaviour yields:
    \begin{equation}
        p(a,b|x,y) = \sum_{\lambda=1}^d q_\lambda \sum_{j,k} P(j,k|\lambda) \delta_{a=f_j(x)} \delta_{b=g_k(y)}
    \end{equation}

    By defining a new aggregate probability distribution $q_{jk} \equiv \sum_{\lambda=1}^d q_\lambda P(j,k|\lambda)$, we obtain:
    \begin{equation}
        p(a,b|x,y) = \sum_{j,k} q_{jk} \delta_{a=f_j(x)} \delta_{b=g_k(y)}
    \end{equation}
    with $\sum_{j,k} q_{jk} = 1$. This exactly matches the deterministic representation of a local behaviour. Therefore, the overall behaviour $\mathcal{P}$ is geometrically constrained to be a convex mixture of local deterministic processes, proving that $\mathcal{P} \in \mathcal{L}$, the Bell local polytope.
\end{proof}

This result proves that in a bipartite system Spekkens noncontextuality implies Bell locality - or equivalently via contraposition, that in a bipartite system, Bell nonlocality implies Spekkens contextuality.

(In Ref.~\cite{Mukherjee2024} an even stronger result is reported. The authors show that simplex embeddability in a bipartite system implies unsteerability, which is a sufficient condition for Bell-locality.)

The result obtained is in agreement with the result obtained in Ref.~\cite{Wright2023InvertibleMap}. In that work the authors show the existence of an invertible map between correlations in a bipartite Bell scenario and Spekkens (non)contextual behaviours in a specifically-constructed family of prepare-and-measure scenarios. The invertibility of the map only holds for contextuality scenarios of a particular structural type (their "remote-preparation" scenarios, satisfying their criteria I/II on preparation equivalences), while this correspondence breaks for more general contextuality scenarios as we expect.

\subsection{Summary of Relations}\label{subsec:RelationSummary}

See Table~\ref{table:Relations} for a summary of the chain of implications which we demonstrated above between (Non)Classicality, Bell-(Non)locality, Kochen-Specker (Non)Contextuality, and Spekkens (Non)Contextuality. Fig.~\ref{fig:twoplots} illustrates these implications using set diagrams. Fig.~\ref{fig.hierarchy} shows how these two sets of implications intersect, and gives different example systems for each tier of intersection.

\renewcommand{\arraystretch}{1.5}
\begin{table*}
\centering
\begin{tabular}{||c|c|c|c|c||}
\hline\hline
& \multicolumn{4}{c||}{\textbf{Property $Y$}}\\
\hline\hline
\textbf{Property $X$} 
& Spekkens Contextuality 
& KS Contextuality 
& Bell Non-Locality 
& Non-Classicality\\
\hline\hline

\multirow{2}{*}{Spekkens Contextuality} 
& $X \Leftrightarrow Y$ 
& $X \Leftarrow Y$ 
& $X \Leftarrow Y$ 
& $\neg X \Leftarrow \neg Y$ \\
& $\neg X \Leftrightarrow \neg Y$ 
& $\neg X \Rightarrow \neg Y$ 
& $\neg X \Rightarrow \neg Y$ 
& $ X \Rightarrow Y$ \\
\hline

\multirow{2}{*}{KS Contextuality} 
& $X \Rightarrow Y$ 
& $X \Leftrightarrow Y$ 
& $X \Leftarrow Y$ 
& $X \Rightarrow Y$ \\
& $\neg X \Leftarrow \neg Y$ 
& $\neg X \Leftrightarrow \neg Y$ 
& $\neg X \Rightarrow \neg Y$ 
& $\neg X \Leftarrow \neg Y$ \\
\hline

\multirow{2}{*}{Bell Non-Locality} 
& $X \Rightarrow Y$ 
& $X \Rightarrow Y$ 
& $X \Leftrightarrow Y$ 
& $X \Rightarrow Y$ \\
& $\neg X \Leftarrow \neg Y$ 
& $\neg X \Leftarrow \neg Y$ 
& $\neg X \Leftrightarrow \neg Y$ 
& $\neg X \Leftarrow \neg Y$ \\
\hline

\multirow{2}{*}{Non-Classicality} 
& $\neg X \Rightarrow \neg Y$ 
& $X \Leftarrow Y$ 
& $X \Leftarrow Y$ 
& $X \Leftrightarrow Y$ \\
& $X \Leftarrow Y$ 
& $\neg X \Rightarrow \neg Y$ 
& $\neg X \Rightarrow \neg Y$ 
& $\neg X \Leftrightarrow \neg Y$ \\
\hline\hline
\end{tabular}
\caption{Restricted summary of relations between Spekkens contextuality, Kochen-Specker contextuality, Bell non-locality, and \textbf{Nonclassicality} (i.e., violation of \textbf{Classical Assumptions~\ref{assumptclass1}, \ref{assumptclass2}, \& \ref{assumptclass3}}). Note here we have a full hierarchy: \textbf{Classicality} $\Rightarrow$ Spekkens Noncontextuality $\Rightarrow$ Kochen Specker Noncontextuality $\Rightarrow$ Bell Locality; or Bell-Nonlocality $\Rightarrow$ Kochen Specker Contextuality $\Rightarrow$ Spekkens Contextuality $\Rightarrow$ \textbf{Nonclassicality}. Obviously the inverse is not true - e.g., Spekkens Contextuality $\nRightarrow$ Kochen-Specker contextuality.}
\label{table:Relations}
\end{table*}

\begin{figure}
    \centering
    \begin{subfigure}{0.22\textwidth}
        \includegraphics[width=\linewidth]{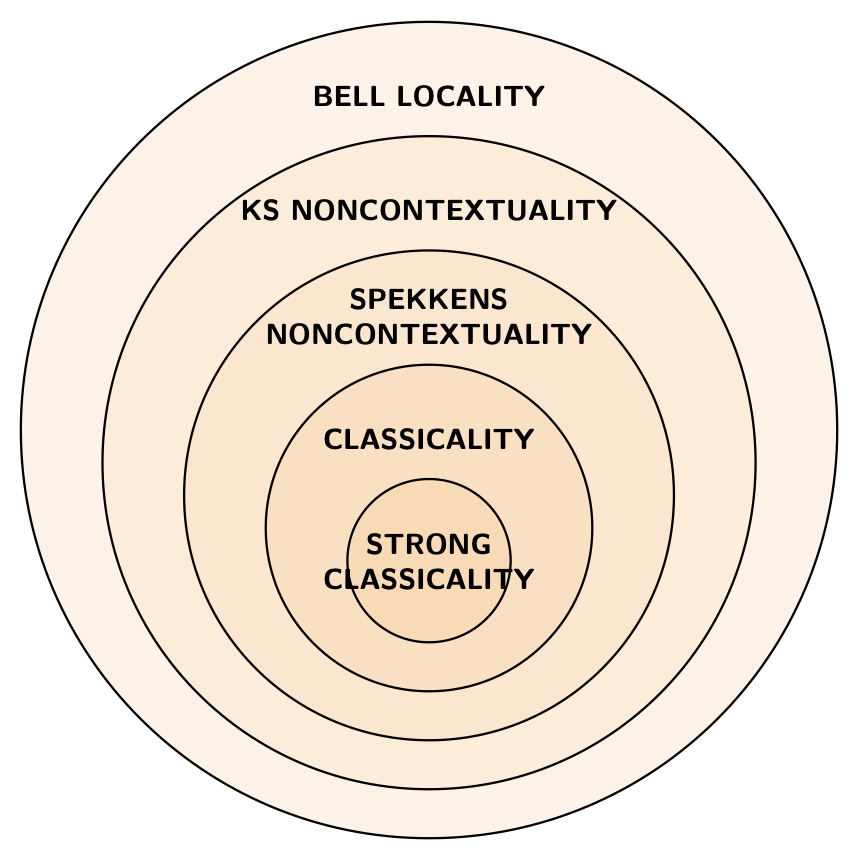}\caption{}
        \label{fig:plot1}
    \end{subfigure}
    \hfill
    \begin{subfigure}{0.22\textwidth}
        \includegraphics[width=\linewidth]{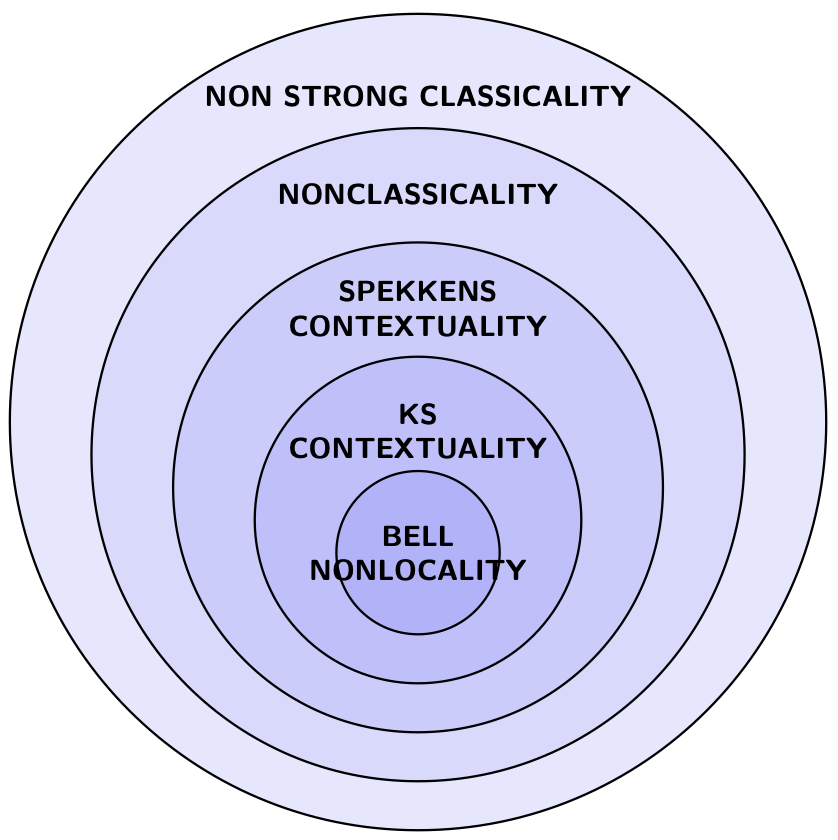}\caption{}
        \label{fig:plot2}
    \end{subfigure}
    
    \caption{Relations between Fig.~\ref{fig:plot1} Bell-Nonlocality, Kochen-Specker contextuality, Spekkens contextuality \textbf{Nonclassicality} (i.e., violation of \textbf{Classical Assumptions~\ref{assumptclass1}, \ref{assumptclass2}, \& \ref{assumptclass3}}), and \textbf{Non Strong Classicality} (i.e., violation of \textbf{Classical Assumptions~\ref{assumptclass1}, \ref{assumptclass4}, \& \ref{assumptclass5}}) of a system, and Fig.~\ref{fig:plot2} \textbf{Strong Classicality} (i.e., \textbf{Classical Assumptions~\ref{assumptclass1}, \ref{assumptclass4}, \& \ref{assumptclass5}}), \textbf{Classicality} (i.e., \textbf{Classical Assumptions~\ref{assumptclass1}, \ref{assumptclass2}, \& \ref{assumptclass3}}), Spekkens noncontextuality, Kochen-Specker noncontextuality and Bell-locality of a system.}
    \label{fig:twoplots}
\end{figure}

\begin{figure}
    \centering
    \includegraphics[width=1.0\linewidth]{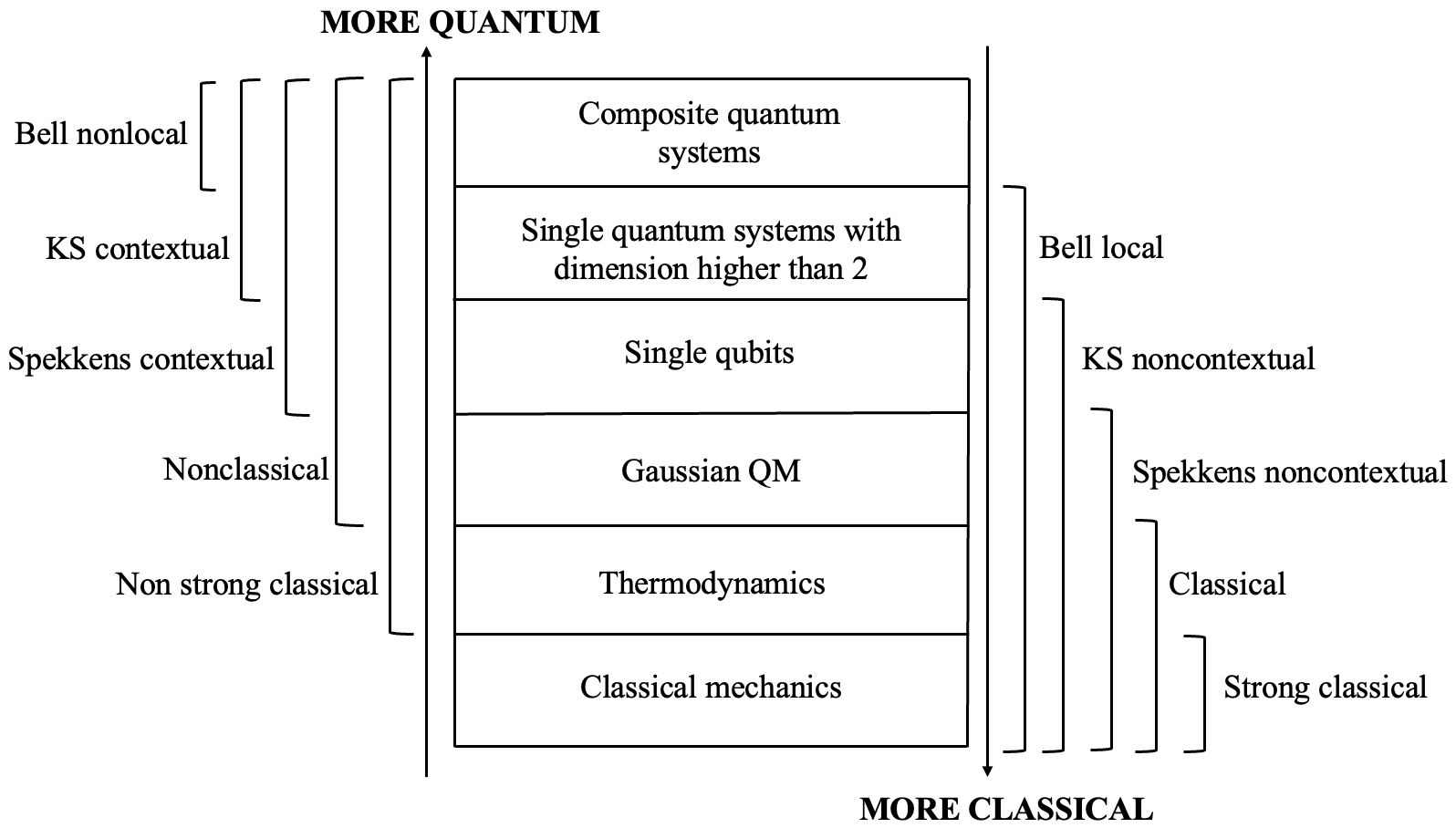}
    \caption{Illustration of the hierarchies shown in Fig.~\ref{fig:twoplots}, with example systems given for each step on this hierarchical ladder.
    }\label{fig.hierarchy}
\end{figure}

\subsection{Necessary vs Sufficient Conditions (or Provably Classical vs Provably Nonclassical)}

The results we obtained above allow us to link the notions of Kochen-Specker and Spekkens (non)contextuality to the idea of a (non)classicality of a system. In the literature there is no a clear and universally accepted distinction between classical and nonclassical systems. We propose that Kochen-Specker contextuality should be viewed as giving a sufficient criterion for nonclassicality (motivated through structural similarity to Bell-nonlocality). Similarly, while a system being Spekkens noncontextual does not necessarily imply that system is \textbf{Classical} in the way we define in Section~\ref{subsec: A classical system must be Spekkens noncontextual}, we would argue Spekkens noncontextuality should be viewed as giving a sufficient criterion for a more general notion of classicality (motivated through structural similarity to the tomographic completeness which underpins classical physics). The two notions - ``Kochen-Specker contextuality''-as-nonclassicality and ``Spekkens noncontextuality''-as-classicality - in our view are not competing; they are complementary. Which one should be used to evaluate a given scenario simply depends whether we’re looking for sufficient proof of classicality or sufficient proof of nonclassicality.  

Further, we showed above that Kochen-Specker contextuality is necessary but not sufficient for Bell-nonlocality. However Bell nonlocality also requires a system be multipartite. Sheaf theory allows us to see though that the underlying structure of Bell-local and Kochen-Specker noncontextual systems is the same - extendability of an empirical model to a global section. This makes both Bell-nonlocality and Kochen-Specker contextuality the inability to extend the empirical model to a global section - the only difference being whether, when embedded in a spacetime, that empirical model is formed of spatially-partitioned elements. Therefore, in this sense Kochen-Specker contextuality serves as a suitable sufficient condition for the same intuitive notion of nonclassicality which Bell-nonlocality appeals to - just one which doesn't require our system to be spatially partitioned. 

An example of this complementary approach, analysing a phenomenon using both notions of (non)contextuality, can be seen in Ref.~\cite{hance2026noncontextualversuscontextualinterferometry}. Here, it is shown that two-path interferometry (the Elitzur-Vaidman Bomb Tester and its extensions) is Spekkens noncontextual, so in some sense fundamentally ``classical''; but Hofmann's three-path interferometer~\cite{Hofmann_2026} is Kochen-Specker contextual, so in some sense fundamentally ``non-classical''. Through this, the phenomenology of interference is shown to have both classical and nonclassical aspects. This shows the efficacy of testing a given physical phenomenon using both notions: Spekkens noncontextuality to show whether something behaves in a fundamentally classical way; and KS contextuality to show whether it behaves in a fundamentally nonclassical way.

\section{Discussion}\label{sec:discussion}

This paper investigated the relation between the two main notions of contextuality present in literature: Kochen-Specker and Spekkens contextuality. After briefly recapping these definitions, we compared them and highlighted their differences. We then proved that Spekkens contextuality is a generalisation of Kochen-Specker contextuality, insofar as every system which is Kochen-Specker contextual is also Spekkens contextual (though, even for sharp projective measurements, not every system which is Spekkens contextual is Kochen-Specker contextual). After that, we linked these two notions to the concept of classicality/nonclassicality of a system. We defined sufficient conditions to identify a system as classical, and proved that a system meeting these conditions must be Spekkens noncontextual. However, not every Spekkens noncontextual system meets these conditions. 

We introduced then the concept of Bell nonlocality, and linked it to the two notions of contextuality. Using Abramsky's sheaf theoretic approach, we showed that Kochen-Specker contextuality is a generalisation of Bell nonlocality - both original from an inability to extend an empirical model to a global section. We also showed that in certain scenarios (with maximally entangled states) Kochen-Specker contextuality implies Bell nonlocality. Finally, we showed that Bell nonlocality implies Spekkens contextuality (but not the inverse - i.e., Spekkens contextuality doesn't necessarily imply Bell-nonlocality).

This allowed us to demonstrate a hierarchy of implication between these concepts - showing that \textbf{Classicality} $\Rightarrow$ \textbf{Spekkens Noncontextuality} $\Rightarrow$ \textbf{Kochen-Specker Noncontextuality} $\Rightarrow$ \textbf{Bell-Locality}; or by contraposition \textbf{Bell-Nonlocality} $\Rightarrow$ \textbf{Kochen-Specker Contextuality} $\Rightarrow$ \textbf{Spekkens Contextuality} $\Rightarrow$ \textbf{Nonclassicality}. Through this hierarchy, and examination of the physical meaning of the assumptions underpinning each notion, we showed that Kochen-Specker contextuality provides a good sufficient condition for the same notion of nonclassicality which underpins Bell-Nonlocality (as illustrated by the similarity of the sheaf-theoretic representation of their conditions); while Spekkens noncontextuality provides a suitable sufficient condition for a generalisation of the idea of classicality (coming from their shared reliance on ideas of tomographic completeness). This allows us to say Spekkens noncontextual system is ``provably classical'', while a Kochen-Specker contextual system is ``provably nonclassical''.

Ref.~\mbox{\cite{Khrennikov_2019}} discussed the idea that Bell nonlocality is a consequence of the property of incompatibility of quantum mechanical observables. Under this view means there is no requirement for any ``spooky action at a distance'' as Einstein thought. The sheaf theoretic framework we discuss in \mbox{\cref{subsec:KSandBell}} follows this idea. Bell nonlocality there arises as the impossibility of consistently extending locally observed statistics to a single global probabilistic model defined over all measurements simultaneously and this comes from the incompatibility of some quantum mechanical observables. Further, by showing the link between Kochen-Specker contextuality (which inherently results from incompatibility) and Bell-nonlocality, the sheaf-theoretic model reinforces the conclusions of Ref.~\mbox{\cite{Khrennikov_2019}}.

Future work will aim to integrate into this hierarchy other notions of nonclassicality which have been proposed in the literature - such as Leggett-Garg macrorealism violation \cite{lgi}, steerability~\cite{Wiseman_2007,Uola2020Steering,Mukherjee2024}, and quasiprobability non-Kolmogorovianity \cite{Ferrie2011Quasiprobabilities,Veitch2012Resource,Gherardini2024QuasiprobsTutorial} (e.g., Wigner negativity~\cite{Kenfack_2004}, Kirkwood-Dirac negativity/imaginarity~\cite{Kirkwood1933KD,Dirac1945KD,Halpern2018KD,Hance2024CFBAIG,Arvidsson-Shukur_2024KD}, and the linked notions of weak value anomalousness~\cite{Pusey2014Anomalous,Hance_2023} and Bargmann Invariant phases~\cite{Wagner_2024Bargmann}).

Further, based on suggestions from the literature \cite{Schmid2024structuretheorem,rossi2025typicalcontextuality}, in future we will investigate whether we can understand systems which fall into the gap between the two notions of contextuality we discuss - i.e., systems which are contextual by Spekkens' notion but noncontextual by Kochen and Specker's notion - as systems which \emph{could} be represented classically, but at the expense of certain arguably-desirable features (linearity and diagram-preservation).

\textit{Acknowledgements -} We thank Martin Pl\'avala, Jan-\AA ke Larsson, John Selby, and David Schmidt, for useful discussions which prompted this paper. JRH acknowledges support from a Royal Society Research Grant (RG/R1/251590), and from their EPSRC Quantum Technologies Career Acceleration Fellowship (UKRI1217).

\appendix
\begin{appendices}

\onecolumngrid
    
\section{Kochen-Specker contextuality for non-perfectly commuting observables}
\label{appendix: KS contextuality for non-perfectly commuting observables}

Following Ref.~\cite{Gunhe2010compatibility}, we here show how it is possible to apply the notion of Kochen-Specker contextuality to non-perfectly commuting observables.

The idea is to estimate the amount of disturbance introduced by the subsequent measurement of two non-``perfectly compatible'' measurements, considering the hypothesis that it is cumulative, which is the only assumption necessary.

We consider a hidden-variable model describing the probabilities of all possible sequences $S_{AB}=\{A,B,AA,AB,BB,...\}$ of two dichotomic observables $A,B$. The outcome probabilities are denoted as $p[\pm,A]$, $p[\pm,B]$ for single measurements and $p[\pm\pm,AB]$ for sequences of measurements. We can include the case of discarded outcomes, represented by $(\bullet)$. If A and B are compatible observables, it must necessarily be that $p[+\bullet -,BAB]=0$ because $A$ must respect the eigenspace of $B$ and then the state of the system after $A$ will remain in the eigenstate of $B$ associated to the positive eigenvalue. At the end we define $p[(+|A)\&(+|B)]$ as the the probability that the first measurement gives the outcome $+1$, in the case both of a measurement of $A$ and of a measurement of $B$.

For this model the following inequality holds:
\begin{equation}
    p[(+|A)\&(+|B)]\leq p[++,AB]+p[(+|A)\&(\bullet-|AB)]
\end{equation}
This is because the probability of measuring $A$ and $B$ both in the positive eigenvalue must be less than the probability that, if measured sequentially, $A$ and $B$ both give positive eigenvalue or the measure of $A$ flips the result of the measurement of B. We can define the correlator between $A$ and $B$ as $\langle AB\rangle=\sum_{ab=\pm1}ab\ p[(a|A)\&(b|B)]$ which is equal to:
\begin{equation}
    \langle AB\rangle=1-2p[(+|A)\&(-|B)]-2p[(-|A)\&(+|B)]
\end{equation}
The correlation measured during a real experiment instead is defined as $\langle A_1B_2\rangle=\sum_{ab=\pm1}ab\ p[(ab|AB)]$ and it will be different from $\langle AB\rangle$ if $A,B$ are incompatible. We can define the probability of flipping of the observable $B$ by the measurement of $A$: 
\begin{equation}
    p^{flip}[AB]:=p[(+|B)\&(\bullet-|AB)]+p[(-|B)\&(\bullet+|AB)]
\end{equation}
Thus we can bound the correlator of $A$ and $B$ as follows:
\begin{equation}
    \langle A_1B_2\rangle-2p^{flip}[AB]\leq \langle AB\rangle\leq\langle A_1B_2\rangle+2p^{flip}[AB]
\end{equation}
which is equal to:
\begin{equation}
    |\langle AB \rangle-\langle A_1 B_2 \rangle|\leq2p^{flip}[AB]
\end{equation}
which tells that the true underlying correlation ($\langle AB \rangle$) cannot be further away from the measured experimental correlation ($\langle A_1B_2 \rangle$) than twice the probability of flipping when the two measurements are performed sequentially.

To obtain only experimentally testable variables we need to bound also $p^{flip}$. We introduce $p^{err}[BAB]:=p[+\bullet -|BAB]+p[-\bullet +|BAB]$ which corresponds to the probability of flipping the value of $B$ in a sequence of three measurements with an intermediate measurement of $A$. This probability is experimentally measurable. We apply then the condition of cumulative noise:
\begin{equation}
  p[(\pm|B)]\land(\bullet \mp|AB)]\leq p[(\pm|B)]\land(\pm\bullet \mp|BAB)]
  =p[\pm\bullet \mp|BAB)
\end{equation}
This implies that $p^{flip}[AB]\leq p^{err}[AB]$. In such a way we have a bound also for $p^{flip}$ through $p^{err}$ which is experimentally testable.

Thanks to this it is possible to obtain experimentally testable inequalities which include also non-perfectly compatible observables.

\section{Relaxing the condition of Tomographic Completeness}
\label{appendix: relaxing Tomography}
Recently Ref.~\cite{pusey2019contextualityaccesstomographicallycomplete} showed that the Tomographic Completeness assumption is not necessary to demonstrate a system is Spekkens noncontextual. This work showed that it is possible to formulate tests of contextuality that still work even if there are a certain number of unknown procedures in the tomographically complete set. The authors demonstrated the following theorem.
\begin{theorem}
    For any $k \in \mathbb{N}$ there exists $2^k$ preparations and measurements, with statistics compatible with a qubit model, that would require $k$ measurements in a tomographically complete set for a preparation noncontextual model.
\end{theorem}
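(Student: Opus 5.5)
The plan is to prove the statement by an explicit construction followed by a counting argument, using Theorem~\ref{theorem: simplex embeddability} and the preparation-contextuality result for mixed qubits (Theorem~\ref{theorem: mixed qubit implies prep contextuality}) as the main tools. Fix $k$ and label $2^k$ preparations $P_{\mathbf b}$ by bit strings $\mathbf b\in\{0,1\}^k$. The idea is to realise them in a larger hidden theory in which the $P_{\mathbf b}$ are distinct, and in which the only accessible measurements reproduce qubit statistics. Concretely, I would give each $P_{\mathbf b}$ a qubit Bloch vector $\vec r_{\mathbf b}=\sum_{j=1}^k(-1)^{b_j}\epsilon_j\vec n_j$, with fixed directions $\vec n_j$ and weights $\epsilon_j>0$ chosen small enough that every $\vec r_{\mathbf b}$ lies in the Bloch ball. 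This makes the operational data exactly that of a qubit. At the same time I attach hidden dichotomic measurements $M_j$ that read off bit $b_j$ deterministically. I would check that the extended model, with ontic space $\{0,1\}^k\times\Lambda_{\rm qubit}$ restricted appropriately, is a consistent ontological model. I would also check that once all of $M_1,\dots,M_k$ are included in the tomographically complete set, every remaining operational equivalence has a preparation noncontextual representation.

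Next I would catalogue the accessible operational equivalences among mixtures of the $P_{\mathbf b}$. Because $\vec r_{\mathbf b}$ is affine in $\mathbf b$, the qubit-level data alone force equivalences of the form
\begin{equation}
\tfrac12 P_{\mathbf b}+\tfrac12 P_{\mathbf b\oplus \mathbf e_j\oplus \mathbf e_l}\sim \tfrac12 P_{\mathbf b\oplus\mathbf e_j}+\tfrac12 P_{\mathbf b\oplus\mathbf e_l}
\end{equation}
when $\vec n_j,\vec n_l$ are not independent of the others. The same holds for the mixture of all $P_{\mathbf b}$ when it equals the maximally mixed state. I would choose the directions (for instance all $\vec n_j$ in a plane, so that only two Bloch coordinates are available) so that these equivalences cannot all hold in any preparation noncontextual model. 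The argument for this step is a reduction to Theorem~\ref{theorem: mixed qubit implies prep contextuality}: if the hidden labels are ignored, the decompositions of a mixed qubit state into the $P_{\mathbf b}$ are exactly the decompositions that this theorem shows to be preparation contextual.

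The main obstacle is the lower bound: showing that fewer than $k$ extra measurements cannot suffice. The plan is a dimension count. Each added measurement effect raises the dimension of the accessible GPT by at most one, so it can break at most one independent linear relation among the vectors $\mathbf s_{P_{\mathbf b}}$. I would then exhibit $k$ linearly independent \enquote{bad} relations, one per hidden bit $j$, each of which on its own already forces preparation contextuality via the same mixed-qubit argument. The linear-algebra fact that $m<k$ effects leave at least one bad relation in their joint kernel then finishes the proof. The delicate part is proving that each single surviving relation, not just the full family of relations, already produces a contradiction. I would first try to arrange this by a symmetry argument over the cube, which lets each relation be treated separately via Theorem~\ref{theorem: mixed qubit implies prep contextuality}. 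If that fails, the fallback is a direct linear-programming witness for each $j$.
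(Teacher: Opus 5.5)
The paper does not prove this statement itself; it quotes it from Ref.~\cite{pusey2019contextualityaccesstomographicallycomplete}. So your plan has to stand on its own, and it has a genuine gap at exactly the step you flag as the main obstacle. Take $m<k$ unknown dichotomic measurements whose statistics $u^{(1)},\dots,u^{(m)}\in[0,1]^{2^k}$ are arbitrary. What survives inside $\mathrm{span}\{R_1,\dots,R_k\}$ is a subspace of dimension at least $k-m$, but the unknown statistics choose which one. Already a single generic $u$ has $u\cdot R_j\neq 0$ for every $j$, so it breaks all $k$ of your relations at once. The survivors are then only combinations $\sum_j c_jR_j$ with $\sum_j c_j\,(u\cdot R_j)=0$, none of which is an $R_j$. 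Hence ``$m<k$ effects leave at least one bad relation in their joint kernel'' is false. What must be shown is that every codimension-$(k-1)$ subspace of the space of qubit-level relations jointly forces contextuality. One way is to exhibit a $k$-dimensional space $W$ of operational equivalences in which every \emph{nonzero element}, on its own and with the known qubit statistics, violates a preparation-noncontextuality inequality. That is the entire content of the theorem. Contextuality of a single equivalence is not a linear property, since positive and negative parts recombine and cancel in $\sum_jc_jR_j$. So neither a symmetry argument nor an LP witness for each $R_j$ separately delivers it. Theorem~\ref{theorem: mixed qubit implies prep contextuality} cannot be used relation by relation either. Its proof needs all six decompositions of $\rho_n$ simultaneously and rests on orthogonal pure states having disjoint ontic supports. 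Your $P_{\mathbf b}$ are generically mixed and non-orthogonal. If the $\epsilon_j$ are small enough that all $|\vec r_{\mathbf b}|\le 1/2$, the planar fragment embeds in a triangle and there is no contextuality at all, even with no unknown measurements.

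The other half of the plan fails too. Both $\vec r_{\mathbf b}$ and the readouts of $M_1,\dots,M_k$ are affine in $\mathbf b$. So your displayed parallelogram equivalences hold unconditionally, not only when directions are dependent, and they survive the addition of all the $M_j$. Indeed the vectors $(1,\vec r_{\mathbf b},\mathbf b)$ span only $k+1$ dimensions. When the $\epsilon_j\vec n_j$ span the plane, the $k$ readouts shrink the $(2^k-3)$-dimensional space of planar relations by only $k-2$ dimensions, so ``one bad relation per hidden bit'' does not match the construction. Deterministic readouts force the $\mu_{P_{\mathbf b}}$ to be supported on pairwise disjoint sets in any ontological model. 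Then $\mu_{P_{\mathbf b}}+\mu_{P_{\mathbf b\oplus\mathbf e_j\oplus\mathbf e_l}}=\mu_{P_{\mathbf b\oplus\mathbf e_j}}+\mu_{P_{\mathbf b\oplus\mathbf e_l}}$ is impossible. The extended scenario is parity-oblivious multiplexing with perfect decoding, which is preparation contextual for every $k\ge 2$. So the check you propose for the extended model would come out negative. If you also want to show that $k$ unknown measurements suffice, they must destroy every contextual equivalence rather than reproduce the cube's affine structure.
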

This means that the dependence of known preparations and measurements on unknown measurements is exponential.
After that the authors provide an algorithm which determines if a noncontextual ontological model exists or not, given a tomographically incomplete set of measurements.

Consider the state space of functions from (known) measurements to outcomes, i.e. deterministic assignments \(\lambda(M)=k\). For a preparation \(P\) we define the assignment polytope \(\Delta_P\) of distributions consistent with \(P\)'s statistics, i.e. distributions \(\mu(\lambda)\) such that for all measurements \(M\):
\begin{equation}
\label{eq: linearity Pusey non tomo completeness}
\sum_{\lambda}\delta_{k\lambda(M)}\,\mu(\lambda)=P^e(k\mid P,M).
\end{equation}

Clearly this is non-empty because, we have e.g. $\mu\in\Delta_P$ where
$\mu(\lambda)=\prod_M P^e(\lambda(M)\mid P,M)$.
\(\Delta_P\) is defined by positivity and the linear equality Eq.~\eqref{eq: linearity Pusey non tomo completeness}. The first step of the algorithm is to convert this polytope into a list of vertices.

The second step is, for every pair of disjoint subsets \(\{P_{i_1},P_{i_2},\ldots\}\) and \(\{P_{j_1},P_{j_2},\ldots\}\), to use a simple linear program (based on the lists of vertices) to check if corresponding convex hulls
$\mathrm{Con}(\{\Delta_{P_{i_1}},\Delta_{P_{i_2}},\ldots\})
\quad\text{and}\quad
\mathrm{Con}(\{\Delta_{P_{j_1}},\Delta_{P_{j_2}},\ldots\})$
intersect.

The authors then proved the following theorem.
\begin{theorem}
    If the above algorithm finds there are no intersections, there is no Spekkens noncontextual model.
\end{theorem}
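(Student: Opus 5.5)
The plan is to prove the contrapositive. We assume a preparation noncontextual ontological model exists for the known preparations and measurements, and show that the algorithm must then find an intersection between some pair of convex hulls.

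First, we reduce any ontological model to one on the assignment space. Given a preparation noncontextual model with ontic space $\Lambda$, distributions $\mu_P$ and response functions $\xi_{M,k}$, we invoke the fact (used earlier, following Ref.~\cite{Budroni2022KSContextuality}) that nondeterministic response functions can be absorbed into a deterministic function of an enlarged hidden variable. Each ontic state $\lambda$ then induces a distribution over deterministic assignments $\lambda'(M)=k$ on the known measurements, via $\prod_M \xi_{M,\lambda'(M)}(\lambda)$. Pushing $\mu_P$ forward gives a distribution $\nu_P$ over assignments. By the law of total probability in Eq.~\eqref{equation ontological procedures representation}, $\nu_P$ satisfies Eq.~\eqref{eq: linearity Pusey non tomo completeness}, so $\nu_P\in\Delta_P$.

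Second, we convert operational equivalences into convex-combination identities. Any nontrivial preparation noncontextuality constraint among the known preparations has the form of two disjoint families with mixing weights, $\sum_a w_a P_{i_a}\sim\sum_b w'_b P_{j_b}$. These are mixtures that agree on all measurements, including unknown ones in the tomographically complete set. Preparation noncontextuality forces $\sum_a w_a\mu_{P_{i_a}}=\sum_b w'_b\mu_{P_{j_b}}$ at the ontic level. The pushforward is linear, so the same identity holds for the $\nu_P$. Hence the point $\sum_a w_a\nu_{P_{i_a}}$ lies in both $\mathrm{Con}(\{\Delta_{P_{i_a}}\})$ and $\mathrm{Con}(\{\Delta_{P_{j_b}}\})$, and the linear program for that pair of subsets would detect an intersection.

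Third, we close the argument. If the scenario has any operational equivalence at all (which is what makes the test meaningful), the existence of a noncontextual model forces an intersection for the corresponding pair of disjoint subsets. Contrapositively, if no pair intersects, no Spekkens noncontextual model exists.

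The main obstacle I expect is the treatment of equivalences that involve unknown procedures. The algorithm only sees the known measurements, yet the relevant equivalence classes are defined with respect to the full (unknown) tomographically complete set. I would handle this by noting that equivalence with respect to the full set implies agreement on the known measurements. The ontic identity then only needs to hold after marginalising onto known-measurement assignments, which is exactly what the pushforward does. A secondary subtlety is ensuring the deterministic-enlargement step preserves preparation noncontextuality. It does, because the enlargement acts only on the measurement side and leaves the $\mu_P$ untouched.
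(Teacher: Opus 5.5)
Your first two steps are sound. The map $\nu_P(\lambda')=\int_\Lambda \mu_P(\lambda)\prod_M \xi_{M,\lambda'(M)}(\lambda)\,d\lambda$ is linear in $\mu_P$ and lands in $\Delta_P$. Any nontrivial equivalence between mixtures of known preparations can, after cancelling common terms, be rewritten as one between mixtures of two disjoint families, and preparation noncontextuality then gives an intersection for that pair.

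The gap is your third step. The clause ``if the scenario has any operational equivalence at all'' is not a side condition that makes the test meaningful. It is the entire content of the theorem, and nothing in your hypotheses supplies it. If the true GPT states of the known preparations are affinely independent, preparation noncontextuality constrains nothing, and your contrapositive produces no intersection. In fact the statement, read literally, is false. Take two preparations whose statistics differ on a single known measurement $M$:
\begin{itemize}
\item The only pair of disjoint subsets is $(\{P_1\},\{P_2\})$.
\item $\Delta_{P_1}\cap\Delta_{P_2}=\emptyset$, because a common element would reproduce both sets of statistics. So the algorithm finds no intersection.
\item Yet $\Lambda=\{1,2\}$, $\mu_{P_i}=\delta_i$, $\xi_{M,k}(i)=p(k|P_i,M)$ is a classical simplex model, noncontextual in every sense.
\end{itemize}
More generally, any data whatsoever can be completed to an $n$-vertex classical simplex in this way. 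It suffices to let the unknown part of the tomographically complete set contain a measurement that reads out $i$.

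The missing ingredient is a hypothesis that forces an equivalence to exist: a bound on the dimension of the true GPT (equivalently, on how many dimensions the unknown measurements can add). You then combine it with Radon's theorem. If the normalised states $\mathbf{s}_{P_1},\dots,\mathbf{s}_{P_n}$ lie in an affine space of dimension $D$ and $n\geq D+2$, there are disjoint index sets $I,J$ and convex weights with $\sum_{i\in I}w_i\mathbf{s}_{P_i}=\sum_{j\in J}w'_j\mathbf{s}_{P_j}$. This is an operational equivalence, with respect to the full unknown set, between two mixtures, which are valid preparations by convex closure. Your second step then yields an intersection for $(I,J)$.

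The correct contrapositive is therefore: if no pair intersects, any preparation-noncontextual model must make the $n$ preparations affinely independent. The true GPT then needs affine dimension at least $n-1$, i.e.\ sufficiently many extra measurements in the tomographically complete set. This fits the preceding theorem in the same appendix, which is phrased in terms of how many measurements a noncontextual model would require. The paper gives no proof of its own beyond citing Ref.~\cite{pusey2019contextualityaccesstomographicallycomplete}, and its wording drops exactly the hypothesis your argument needs.
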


This work allows us to relax the necessity of the condition of Tomographic Completeness when demonstrating Spekkens contextuality. Although determining if two preparations are operationally equivalent still requires access to a tomographically complete set, this result shows that an incomplete set can be enough to determine whether there exist operationally equivalent preparations and so to show whether the system cannot be represented noncontextually. This criterion can then be applied to scenarios in which no tomographically-complete set of measurements is available.

\section{POVMs extension for KS contextuality}\label{sec: appendix POVMs for KS}
Here we provide a brief view of how it is possible to extend the notion of KS contextuality to include also nonprojective measurements.

Positive Operator-Valued Measures (POVMs) are measurements whose elements do not need to be orthogonal or normalised, unlike sharp measurements or PVMs (projection valued measures). 
This means two different outcomes of the same POVM can correspond to eigenstates of two non-commuting observables. If we extend the standard KS definition of contextuality from PVMs, where it normally applies, to POVMs, this implies a single reproducible physical apparatus can yield outcomes belonging to entirely different measurement contexts, which seems peculiar. However to make sense of this, the authors of Ref.~\cite{hance2025externalquantumfluctuationsselect} suggest  applying Neumark's Dilation Theorem \cite{gelfand1943imbedding} which allows us to consider every POVM applied on a physical system as a PVM applied on a composite system comprising the physical system plus the measurement apparatus. Since different outcomes correspond to different macroscopic states in such a Neumark-extended frame, the different outcome states must be orthogonal in this extended Hilbert space. In this way it is possible to generalise the Kochen-Specker notion of contextuality to cover POVMs measurements. 

\section{Contextuality of a Qubit}\label{appendix:qubit}

There is a vast literature on qubits and their properties, including on whether they display contextuality. First Bell~\cite{Bell1966}, and more recently Spekkens~\cite{Spekkens2005}, and Johansson and Larsson~\cite{Johansson2019QSL}, demonstrated that a qubit is always representable using a noncontextual hidden variable model. Specifically, Johansson and Larsson extended Spekkens' toy model \cite{Spekkens2007Evidence} (a classical model for the partial simulation of quantum mechanical systems) to produce Quantum Simulation Logic (QSL), which mimics some key features of quantum computing despite being completely classical (formed of two potentially-random bits).

In QSL, a qubit is represented by a tuple $(x_0,p_0)$ where $x_0$ and $p_0$ are both classical bits. The first one is called the computational (or presence) bit, while the second the phase bit. For example the eigenstates of the $Z$ observable are represented by: $(0,R)\sim \ket{0}$ and $(1,R)\sim \ket{1}$ where $R\in\{0,1\}$ is a random variable. Inverting the assignments of the computational and phase bits gives the eigenstates of the $X$ observable instead. In this model the transition probability from the state $\ket{\psi}$ to $\ket{\phi}$, which in standard quantum mechanics is $|\braket{\phi}{\psi}|^2$, is instead the complement of the Kolmogorov distance between the distributions $P$ and $Q$ which describe the two states:
\begin{equation}
    F^2(P,Q)=1-\delta(P,Q)=1-\frac{1}{2}\sum_{x\in\Omega}|P(x)-Q(x)|
\end{equation}
Unitary transformations are represented by classical reversible logic operations on the bits composing a QSL state. For example the $X$ gate is represented by: $\mathcal{X}(x_0,p_0)=(x_o\oplus 1,p_0)$.

A projective measurement returns the corresponding bit (or a combination of the two if $Y$ measurement is performed). Specifically, a projective measurement $\mathcal{Z}$ returns the computational bit and randomises the phase bit, while a projective measurement $\mathcal{X}$ returns the phase bit and randomises the computational bit. $\mathcal{Y}$ instead returns the parity of the computational and phase bit and then randomises both the computational and phase bits while preserving parity.

This model is able to simulate in a Turing machine any single-qubit quantum circuit while requiring at most a constant overhead in resources. This model is noncontextual by construction since it simultaneously assigns values to all observable quantities, and these values do not change regardless of the measurement or measurement context that we use to retrieve them. This demonstrates then that a qubit can always be represented by a noncontextual hidden-variable model.

Spekkens's toy model, and so quantum simulation logic, are  therefore Kochen-Specker noncontextual, but we can show they are also Spekkens noncontextual. A model is defined as Spekkens noncontextual if operationally equivalent procedures have identical ontological representations. In the QSL model a procedure is represented by classical reversible logic operations on the two bits composing the QSL state, while measurements are operations which return in general a combination of the two bits. Spekkens defines two preparation procedures as being operationally equivalent if they yield identical statistics for all possible measurements. We can define a function $f:\Omega \times \Omega\rightarrow\Omega\times\Omega$ and a function $g:\Omega \times \Omega\rightarrow\Omega$ where $\Omega$ is the set of values that $x_0,p_0$ can assume. The first function represents the action of a transformation on the QSL state, while the second represents the action of a measurement on the QSL state.
Two operationally equivalent procedures $f,f'$ are then two procedures such that:
\begin{equation}
    g(f(x_0,p_0))=g(f'(x_0,p_0))\quad \forall g
\end{equation}
while two measurement outcomes are equivalent if:
\begin{equation}
    g'(f(x_0,p_0))=g(f(x_0,p_0))\quad \forall f
\end{equation}
The ontological representation of a procedure is a distributions over a space of hidden physical states $\lambda\in\Lambda$ and the probability of an outcome is given by Eq.~\eqref{equation ontological procedures representation}.

A preparation procedure $f$ is represented by a probability distribution $\mu_f(\lambda)$ over $\Gamma$, while a measurement $g$ with outcome $k$ is represented by a response function $\xi_{g,k}(\lambda)$. The probability of obtaining outcome $k$ given the preparation $f$ and the measurement $g$ is therefore
\begin{equation}
    p(k|f,g)=\sum_{\lambda\in\Gamma}
    \xi_{g,k}(\lambda)\,\mu_f(\lambda).
\end{equation}

If two preparation procedures $f$ and $f'$ are operationally equivalent,
then by definition
\begin{equation}
    p(k|f,g)=p(k|f',g)
    \quad \forall g,k.
\end{equation}
Hence
\begin{equation}
    \sum_{\lambda}
    \xi_{g,k}(\lambda)\,\mu_f(\lambda)
    =
    \sum_{\lambda}
    \xi_{g,k}(\lambda)\,\mu_{f'}(\lambda)
    \quad \forall g,k.
\end{equation}

In the QSL model (as in Spekkens' toy model) the ontological representation is deterministic, in the sense that the ontic state completely determines the outcome of any measurement, i.e.
$\xi_{g,k}(\lambda)\in\{0,1\}$ and for each $\lambda$ there exists a unique
$k$ such that $\xi_{g,k}(\lambda)=1$.
Since the set of measurements is complete over the ontic state space, equality of all operational statistics implies
\begin{equation}
    \mu_f(\lambda)=\mu_{f'}(\lambda)
    \quad \forall \lambda.
\end{equation}

Therefore operationally equivalent preparation procedures have identical
ontological representations. An analogous argument holds for measurements:
if two measurements are operationally equivalent, equality of the statistics
for all preparations implies equality of the corresponding response
functions,
\begin{equation}
    \xi_{g,k}(\lambda)=\xi_{g',k}(\lambda)
    \quad \forall \lambda,k.
\end{equation}
This shows that Spekkens's toy model and QSL are also Spekkens's noncontextual.

\section{Proof of Theorem \ref{theorem: classical system is univ NC}}
\label{app: proof of theorem NC for classical system}
We prove here that a classical system as we defined it, must be Spekkens noncontextual.

Consider a \textbf{Classical} system (a system which satisfies \textbf{Classical Assumptions~\ref{assumptclass1}, \ref{assumptclass2}, \& \ref{assumptclass3}}). Let $P_1$ and $P_2$ be two preparations that are operationally equivalent, i.e. that together respect Eq.~\eqref{eq: prep equivalence}. Then, for each measurable
$\omega \in \Sigma$, we can insert the definition of the response function from Eq.~\eqref{eq:responseindicatorclassical} into Eq.~\eqref{eq: prep equivalence} for measurement $M_\omega$ and outcome $k=1$ to get
\begin{equation}
    \int_\Lambda \mathbb I_\omega(\lambda)\,\mu_{P_1}(\lambda)d\lambda
    =
    \int_\Lambda \mathbb I_\omega(\lambda)\,\mu_{P_2}(\lambda)d\lambda
    \quad \forall\,\omega\in\Sigma.
\end{equation}
(and similar for $k=0$).
By the definition of integration of indicator functions~\cite{folland1999real}, this is
equivalent to
\begin{equation}
    \mu_{P_1}(\omega) = \mu_{P_2}(\omega)
    \quad \forall\,\omega\in\Sigma.
\end{equation}
Hence the two measures \(\mu_{P_1}\) and \(\mu_{P_2}\) coincide on all
measurable sets, and therefore
\begin{equation}
    \mu_{P_1}(\lambda) = \mu_{P_2}(\lambda)\ \text{Lebesgue a.e.}
\end{equation}

To be precise, the two functions are not perfectly equivalent, but equivalent almost everywhere. This means that equivalence holds for all elements in the set except a subset of measure zero. However, for our purposes ``equivalent almost everywhere'' can be treated as being the same as ``perfectly equivalent'', because we we always prepare our system in a macrostate (i.e. a uniform distribution over a measurable set), which is always a subset of nonzero measure.

We have therefore shown that any two operationally equivalent preparations must be represented by the same ontic probability measure. In other words,
\begin{equation}
    P_1 \sim P_2 \;\Rightarrow\; \mu_{P_1} = \mu_{P_2}
\end{equation}
which is means classical systems obey Spekkens' notion of preparation noncontextuality.

Let us now show that \textbf{Classical} systems must be Spekkens measurement noncontextual. Let $M_1$ and $M_2$ be two measurements that are operationally equivalent, i.e. that satisfy Eq.~\eqref{eq: meas equivalence}. By applying the definition of operational equivalence from Eq.~\eqref{eq: meas equivalence} to the specific class of preparations $P_\omega$ defined through Eq.~\eqref{eq: indicator functions for prep}, we obtain
\begin{equation}
    \int_\Lambda \xi_{M_1,k}(\lambda) \frac{\mathbb{I}_\omega(\lambda)}{V(\omega)} d\lambda
    =
    \int_\Lambda \xi_{M_2,k}(\lambda) \frac{\mathbb{I}_\omega(\lambda)}{V(\omega)} d\lambda,
\end{equation}
$\forall\,\omega \in \Sigma, \forall\,k$.

Since $V(\omega)$ is a constant for a given $\omega$, we can multiply both sides by $V(\omega)$. Using the property that the indicator function restricts the domain of integration, the equality simplifies to
\begin{equation}
    \int_\omega \xi_{M_1,k}(\lambda) d\lambda = \int_\omega \xi_{M_2,k}(\lambda) d\lambda \quad \forall\,\omega \in \Sigma, \forall\,k.
\end{equation}

By a fundamental theorem of measure theory~\cite{folland1999real}, if the integrals of two measurable functions are equal over every measurable set $\omega$, then the functions themselves must be equal almost everywhere. Thus:
\begin{equation}
    \xi_{M_1,k}(\lambda) = \xi_{M_2,k}(\lambda)
    \quad \text{Lebesgue a.e.}, \forall\,k.
\end{equation}
Here, the same argument that presented before for preparations is again valid. Since we assume our ability to perform a dichotomic measurement for a measurable set, which never has zero dimension, we can conclude that ``being equivalent almost everywhere'' can be treated the same as ``being perfectly equivalent''.

This demonstrates that for a classical system where macroscopic volumes of phase space can be prepared, any two operationally equivalent measurements must be represented by identical ontic response functions. This satisfies the definition of measurement noncontextuality
\begin{equation}
    M_1 \sim M_2 \;\Rightarrow\; \xi_{M_1} = \xi_{M_2}
\end{equation}

Finally, let us  show that a classical system is Spekkens transformation noncontextual. Assume the system allows for the same macroscopic preparations and measurements defined previously, i.e. Eqs.~\eqref{eq:responseindicatorclassical} and \eqref{eq: indicator functions for prep} hold.

Under these assumptions, let $T_1$  and $T_2$ be two equivalent transformations, i.e. that satisfy Eq.~\eqref{eq: transf equivalence}. Applying the operational equivalence from Eq.~\eqref{eq: transf equivalence} to a uniform preparation over $\omega$ and a measurement of the set $\omega'$, we obtain
\begin{equation}
\begin{split}
    \int_\Lambda d\lambda' \int_\Lambda d\lambda \, \mathbb{I}_{\omega'}(\lambda') \Gamma_{T_1}(\lambda'|\lambda) \frac{\mathbb{I}_{\omega}(\lambda)}{V(\omega)} = \int_\Lambda d\lambda' \int_\Lambda d\lambda \, \mathbb{I}_{\omega'}(\lambda') \Gamma_{T_2}(\lambda'|\lambda) \frac{\mathbb{I}_{\omega}(\lambda)}{V(\omega)}.
    \end{split}
\end{equation}

Multiplying by $V(\omega)$ and using the properties of indicator functions to restrict the domains of integration, this simplifies to
\begin{equation}
\begin{split}
    \int_{\omega} d\lambda \left( \int_{\omega'} \Gamma_{T_1}(\lambda'|\lambda) d\lambda' \right) = \int_{\omega} d\lambda \left( \int_{\omega'} \Gamma_{T_2}(\lambda'|\lambda) d\lambda' \right)
\end{split}
\end{equation}

Since this equality holds for every measurable set $\omega$, the inner integrals (which are functions of $\lambda$) must be equal almost everywhere:
\begin{equation}
    \int_{\omega'} \Gamma_{T_1}(\lambda'|\lambda) d\lambda' = \int_{\omega'} \Gamma_{T_2}(\lambda'|\lambda) d\lambda',\,\ \text{Lebesgue a.e.}, \;\forall \omega' \in \Sigma.
\end{equation}

Furthermore, since this must hold for every measurable set $\omega'$, the kernels themselves must coincide:
\begin{equation}
    \Gamma_{T_1}(\lambda'|\lambda) = \Gamma_{T_2}(\lambda'|\lambda) \quad \text{Lebesgue a.e.}
\end{equation}
The same arguments about nonperfect equivalence presented before can be applied here. Therefore, for \textbf{Classical} systems, two transformations being operationally equivalence implies those transformations are ontically identical:
\begin{equation}
    T_1 \sim T_2 \;\Rightarrow\; \Gamma_{T_1} = \Gamma_{T_2}
\end{equation}
which is the definition of Spekkens transformation noncontextuality.

This concludes our demonstration: a \textbf{Classical} system (i.e., one which satisfies \textbf{Classical Assumptions~\ref{assumptclass1}, \ref{assumptclass2}, \& \ref{assumptclass3}}) must be Spekkens noncontextual. The same is clearly true for a \textbf{Strongly Classical} system (where \textbf{Classical Assumptions~\ref{assumptclass2} \& \ref{assumptclass3}} are replaced by \textbf{\ref{assumptclass4} \& \ref{assumptclass5}} respectively).

\section{Gaussian Quantum Mechanics violates classical assumptions}
\label{app: Gaussian QM violates classical assumptions}
We prove here that Spekkens noncontextuality does not imply \textbf{Classicality} for a system. To prove this we show that Gaussian Quantum Mechanics (which is Spekkens noncontextual) violates our \textbf{Classical Assumptions}.

In Gaussian Quantum Mechanics (GQM) the ontic state space $\Lambda$ is the classical phase space $\mathbb{R}^{2n}$, so $\lambda = (\vec{r}, \vec{p})$. This means GQM satisfies \textbf{Classical Assumption~\ref{assumptclass1}}.
The preparations $P$ correspond to quantum states (density matrices $\rho$), represented in the ontological model by their Wigner functions $W_\rho(\lambda)$. For Gaussian states, by definition, $W_\rho(\lambda)$ is a non-negative Gaussian probability distribution over $\Lambda$.
Measurements and transformations also map to non-negative Gaussian functions and Gaussian transition kernels respectively.

Firstly we prove that GQM violates \textbf{Classical Assumption~\ref{assumptclass3}}.

\begin{proof}
In GQM, the set of valid preparation procedures $\{P\}$ is restricted strictly to Gaussian states. Therefore, the probability distribution $\mu_P(\lambda)$ associated with any valid preparation must be a Gaussian distribution of the form:
\begin{equation}
    \mu_P(\lambda) = \frac{1}{\sqrt{(2\pi)^{2n} \det(V)}} \exp\left(-\frac{1}{2}(\lambda - d)^T V^{-1} (\lambda - d)\right)
\end{equation}
where $d$ is the displacement vector (mean) and $V$ is the covariance matrix.

\textbf{Classical Assumption~\ref{assumptclass3}} requires that for every finite-measure set $\omega$, there exists a valid preparation with the distribution:
\begin{equation*}
    \mu_{P_\omega}(d\lambda) = \frac{\mathbb{I}_\omega(\lambda)}{V(\omega)} d\lambda
\end{equation*}
where $\mathbb{I}_\omega(\lambda)$ is the indicator function.

Let $\omega$ be a finite hypercube in phase space, e.g., $\omega = [q_0, q_1] \times [p_0, p_1]$. The indicator function $\mathbb{I}_\omega(\lambda)$ is a step function  and it has bounded, compact support. A step function with compact support cannot be written in the form of a Gaussian distribution. Furthermore, a Gaussian distribution has infinite support (it is strictly positive everywhere in $\mathbb{R}^{2n}$, approaching zero only as $|\lambda| \to \infty$), whereas $\mathbb{I}_\omega(\lambda) = 0$ for all $\lambda \notin \omega$.
Therefore, the uniform preparation $P_\omega$ required by \textbf{Classical Assumption~\ref{assumptclass3}} is not a valid preparation in Gaussian Quantum Mechanics. GQM violates \textbf{Classical Assumption~\ref{assumptclass3}}.
\end{proof}

Moreover GQM violates also \textbf{Classical Assumption~\ref{assumptclass2}}.
\begin{proof}
\textbf{Classical Assumption~\ref{assumptclass2}} requires that for \textit{every} measurable set $\omega \in \Sigma$, there exists a measurement with a response function $\xi(\lambda) = \mathbb{I}_\omega(\lambda)$. This represents an infinitely sharp measurement of whether the ontic state $\lambda$ is perfectly inside the arbitrary boundary of $\omega$.

In GQM, valid measurements correspond to Gaussian POVMs. The response functions $\xi_k(\lambda)$ in the ontological model are given by the Wigner representation of the POVM elements $E_k$.

For a POVM element to be Gaussian, its Wigner function must have a Gaussian profile. The indicator function $\mathbb{I}_\omega(\lambda)$ of a region $\omega$ (like our hypercube) is discontinuous at the boundaries of $\omega$. A Gaussian function is $C^\infty$ (smooth and infinitely differentiable everywhere). 

Therefore, there exists no Gaussian POVM element $E$ whose Wigner representation yields the sharp step-function $\mathbb{I}_\omega(\lambda)$. GQM restricts measurements to those with Gaussian, smooth response functions, thereby prohibiting the sharp, dichotomous indicator response functions required by \textbf{Classical Assumption~\ref{assumptclass2}}.
\end{proof}

\section{Derivation of Bell CHSH inequality}
\label{appendix: derivation of CHSH inequality}

We derive here Bell inequality in the CHSH form.

We consider a generic model which allows us to specify, in the most exhaustive way possible, the state of the system, where such a specification uniquely determines the probabilities of different measurement results. 
In our model, we consider two particles whose spin is measured along arbitrary directions, respectively $\vec{a}$ and $\vec{b}$, and whose spin components can only take the values $+1$ or $-1$, representing the possible outcomes of the spin measurement. 
Let $\lambda$ denote the set of all variables (available or hidden) whose specification  in the most exhaustive way possible allows the theory under consideration to determine  the state of the system. 

We denote by $p^{AB}_{\lambda}(\vec{a},\vec{b};\alpha,\beta)$ the probability of obtaining the outcomes  in the measurement of the spin components of the two particles along the directions 
$\vec{a}$ and $\vec{b}$ in the opposite regions of detectors $A$ and $B$,  in a system characterised by the variables $\lambda$. 
The only hypothesis made by Bell is that of locality, called Bell locality,  namely that the probability of obtaining two measurement outcomes at the two extremes of the apparatus is given by the product of the two separate probabilities (as shown in Eq.~\eqref{eq: Bell locality assumption}).

Now consider a function $E_{\lambda}(\vec{a},\vec{b})$ defined as the sum of the probabilities of obtaining concordant outcomes minus the sum of the probabilities  of obtaining discordant outcomes:
\begin{equation}
E_{\lambda}(\vec{a},\vec{b}) =
p^{AB}_{\lambda}(\vec{a},\vec{b};+1,+1)
+
p^{AB}_{\lambda}(\vec{a},\vec{b};-1,-1)
-
p^{AB}_{\lambda}(\vec{a},\vec{b};+1,-1)
-
p^{AB}_{\lambda}(\vec{a},\vec{b};-1,+1)
\end{equation}
It is possible to write $E_{\lambda}(\vec{a},\vec{b})$ using Bell locality:
\begin{equation}
E_{\lambda}(\vec{a},\vec{b}) =
\big[p^{A}_{\lambda}(\vec{a};*,+1) - p^{A}_{\lambda}(\vec{a};*,-1)\big]\cdot
\big[p^{B}_{\lambda}(\vec{b};*,+1) - p^{B}_{\lambda}(\vec{b};*,-1)\big]
\end{equation}
Considering now a third arbitrary direction $\vec{d}$ one can write:
\begin{equation}
\label{eq: Bell factor def}
E_{\lambda}(\vec{a},\vec{b}) - E_{\lambda}(\vec{a},\vec{d})
=
\big[p^{A}_{\lambda}(\vec{a};*,+1) - p^{A}_{\lambda}(\vec{a};*,-1)\big]\cdot
\big[
p^{B}_{\lambda}(\vec{b};*,+1) - p^{B}_{\lambda}(\vec{b};*,-1)
-
p^{B}_{\lambda}(\vec{d};*,+1) + p^{B}_{\lambda}(\vec{d};*,-1)
\big]
\end{equation}
Since the measurement under examination has only two possible outcomes, it holds that:
\begin{equation}
p^{A}_{\lambda}(\vec{a};*,+1) + p^{A}_{\lambda}(\vec{a};*,-1) = 1
\end{equation}
From this the first factor of Eq.~\eqref{eq: Bell factor def} becomes:

\begin{equation}
p^{A}_{\lambda}(\vec{a};*,+1) - p^{A}_{\lambda}(\vec{a};*,-1)
= 1 - 2\,p^{A}_{\lambda}(\vec{a};*,-1)
\end{equation}
Since $p^{A}_{\lambda}(\vec{a};*,-1)$ is a probability, it must have a value between $0$ and $1$, from which it follows that the value of the expression on the right-hand  side of the previous equation is between $-1$ and $1$. Consequently its absolute value will be less than $1$. Hence:

\begin{equation}
\left|E_{\lambda}(\vec{a},\vec{b}) - E_{\lambda}(\vec{a},\vec{d})\right|
\le
\left|
\big[p^{B}_{\lambda}(\vec{b};*,+1) - p^{B}_{\lambda}(\vec{b};*,-1)\big]- 
\big[p^{B}_{\lambda}(\vec{d};*,+1) - p^{B}_{\lambda}(\vec{d};*,-1)\big]
\right|
\end{equation}
The same reasoning can be followed to obtain:
\begin{equation}
\left|E_{\lambda}(\vec{c},\vec{b}) + E_{\lambda}(\vec{c},\vec{d})\right|
\le
\left|
\big[p^{B}_{\lambda}(\vec{b};*,+1) - p^{B}_{\lambda}(\vec{b};*,-1)\big]
+
\big[p^{B}_{\lambda}(\vec{d};*,+1) - p^{B}_{\lambda}(\vec{d};*,-1)\big]
\right|
\end{equation}
Summing the last two equations one obtains:
\begin{equation}
\left|E_{\lambda}(\vec{a},\vec{b}) - E_{\lambda}(\vec{a},\vec{d})\right|
+
\left|E_{\lambda}(\vec{c},\vec{b}) + E_{\lambda}(\vec{c},\vec{d})\right|
\le
|r-s| + |r+s|
\end{equation}
where we set
\begin{equation}
    r = p^{B}_{\lambda}(\vec{b};*,+1) - p^{B}_{\lambda}(\vec{b};*,-1),\;\;\;\;
    s = p^{B}_{\lambda}(\vec{d};*,+1) - p^{B}_{\lambda}(\vec{d};*,-1).
\end{equation}

The expression on the right-hand side of the inequality has maximum value $2$. 
From this it follows that:
\begin{equation}
\left|E_{\lambda}(\vec{a},\vec{b}) - E_{\lambda}(\vec{a},\vec{d})\right|
+
\left|E_{\lambda}(\vec{c},\vec{b}) + E_{\lambda}(\vec{c},\vec{d})\right|
\le 2
\end{equation}

The final step consists in considering not the functions $E_{\lambda}$ but their averages, defined as:

\begin{equation}
E(\vec{m},\vec{n}) =
\int E_{\lambda}(\vec{m},\vec{n})\,p(\lambda)\,d\lambda
\end{equation}

where $p(\lambda)$ is the distribution of the variables that characterise the system. One then obtains:
\begin{equation}
\left|E(\vec{a},\vec{b}) - E(\vec{a},\vec{d})\right|
+
\left|E(\vec{c},\vec{b}) + E(\vec{c},\vec{d})\right|
\le 2
\end{equation}
This form of Bell's inequality proposed here is the one obtained in \cite{Clauser1969Proposed}, called CHSH (from authors' initials), which is equivalent to Bell's original inequality.
This version of Bell's inequality is the one mainly used in literature because it provides a generalisation with respect to the original one \cite{Bell1966}. The original derivation indeed considered only perfectly anti-correlated measurements while the CHSH version takes into account also measurements which are non-perfectly anti-correlated and it is then more suitable for studying nonlocality both theoretically and experimentally.

\section{Proof of Prop. \ref{prop: Sheaf, global section = NC}}
\label{app: Proof of NC = global section}
We prove here that an empirical model can be described by a factorisable hidden-variable model (i.e. NCHV model) if and only if a global section of that model exists.
\begin{proof}
\textbf{(1 $\implies$ 2):} Assume there exists a global section $d \in D_R\mathcal{E}(X)$. We construct a factorisable hidden-variable model as follows:
\begin{enumerate}
    \item Let the set of hidden variables $\Lambda$ be the set of all global assignments $\mathcal{E}(X)$.
    \item Let the distribution over hidden variables $h_\Lambda \in D_R(\Lambda)$ be the global section itself: $h_\Lambda(\lambda) = d(\lambda)$.
    \item For each $\lambda \in \Lambda$ and context $C \in \mathcal{M}$, define the local distribution $h_C^\lambda \in D_R\mathcal{E}(C)$ as the deterministic Dirac distribution:
    \begin{equation}
        h_C^\lambda(s) = \delta_{\lambda|_C}(s) = 
        \begin{cases} 
        1 & \text{if } \lambda|_C = s \\
        0 & \text{otherwise}
        \end{cases}
    \end{equation}
\end{enumerate}

To show factorisability, note that for any $s \in \mathcal{E}(C)$:
\begin{equation}
    h_C^\lambda(s) = \delta_{\lambda|_C}(s) = \prod_{x \in C} \delta_{\lambda|_{\{x\}}}(s|_{\{x\}})
    = \prod_{x \in C} h_{\{x\}}^\lambda(s|_{\{x\}})
\end{equation}
To show realisation, we average over $\Lambda$:
\begin{equation}
    \sum_{\lambda \in \Lambda} h_C^\lambda(s) h_\Lambda(\lambda) = \sum_{\lambda \in \mathcal{E}(X)} \delta_{\lambda|_C}(s) d(\lambda)
    = \sum_{\substack{\lambda \in \mathcal{E}(X) \lambda|_C = s}} d(\lambda) = d|_C(s) = e_C(s)
\end{equation}
Thus, the global section induces a factorisable HV model.

\vspace{1em}
\textbf{(2 $\implies$ 1):} Assume $e$ is realised by a factorisable HV model $(\Lambda, h_\Lambda, \{h_C^\lambda\})$. By factorisability, for each $\lambda$, the distribution $h_C^\lambda$ is determined by its marginals on singletons: $h_C^\lambda(s) = \prod_{x \in C} h_{\{x\}}^\lambda(s|_{\{x\}})$.

We define a global distribution $d \in D_R\mathcal{E}(X)$ by:
\begin{equation}
    d(g) = \sum_{\lambda \in \Lambda} \left( \prod_{x \in X} h_{\{x\}}^\lambda(g|_{\{x\}}) \right) h_\Lambda(\lambda) \quad \forall g \in \mathcal{E}(X)
\end{equation}
We verify that $d$ restricts to $e_C$ for any $C \in \mathcal{M}$. For $s \in \mathcal{E}(C)$:
\begin{equation}
\begin{split}
    d|_C(s) = \sum_{\substack{g \in \mathcal{E}(X) \\ g|_C = s}} d(g)
    = \sum_{\lambda \in \Lambda} h_\Lambda(\lambda) \sum_{\substack{g \in \mathcal{E}(X) \\ g|_C = s}} \left( \prod_{x \in X} h_{\{x\}}^\lambda(g|_{\{x\}}) \right)
\end{split}
\end{equation}
Splitting the product into $x \in C$ and $x \in X \setminus C$:
\begin{equation}
\begin{aligned}
    d|_C(s) = \sum_{\lambda \in \Lambda} h_\Lambda(\lambda) \left( \prod_{x \in C} h_{\{x\}}^\lambda(s|_{\{x\}})\right) \left[ \sum_{g' \in \mathcal{E}(X \setminus C)} \prod_{y \in X \setminus C} h_{\{y\}}^\lambda(g'|_{\{y\}}) \right]
\end{aligned}
\end{equation}
The term in the square brackets is a sum over all possible outcomes for measurements in $X \setminus C$, which marginalises to 1. Using the factorisability of the HV model:
\begin{equation}
    d|_C(s) = \sum_{\lambda \in \Lambda} h_\Lambda(\lambda) h_C^\lambda(s) = e_C(s)
\end{equation}
Thus, $d$ is a global section for the empirical model $e$.
\end{proof}

\section{Proof of Prop. \ref{prop: state dep context implies Bell NL}}
\label{Appendix: Cabello state dep context implies Bell NL}
We prove here that any set of measurements exhibiting State-Dependent Contextuality can be mapped to a Bell inequality that is violated by a maximally entangled state.

To prove this we need the following result \cite{Cabello1996}. In $d\geq3$, given any two non-orthogonal rank-one projectors $\Pi_A$ and $\Pi_B$, there is a set of projectors $E$ such that, for any KS assignment $f$, $f(\Pi_A)+f(\Pi_B)\leq 1$. The set $\Pi_A\cup E\cup \Pi_B$ is called a true-implies-false set (TIFS) \cite{Cabello2018TIFS}. \\
Since any quantum contextual behaviour can be produced by a set of rank-one projectors we consider the set of rank one projectors $S=\{\Pi_1,...,\Pi_n\}$. In this set the contexts are the subsets containing mutually commuting projectors.
To construct the critical SI-C set we consider the graph $G$ of orthogonality of $S$, i.e. the graph which connects with the same edge vertices corresponding to mutually orthogonal projectors. Let $N$ be the minimum number of disjoint bases that cover all the vertices of $G$. Since $S$ allows for SD-C, then $N\geq 3$ \cite{CabelloSeveriniWinter2014}. If $N<d+1$, then we add disjoint bases until the total of number of disjoint bases is $N+1$.
\begin{figure}[H]
    \centering
    \includegraphics[width=0.5\linewidth]{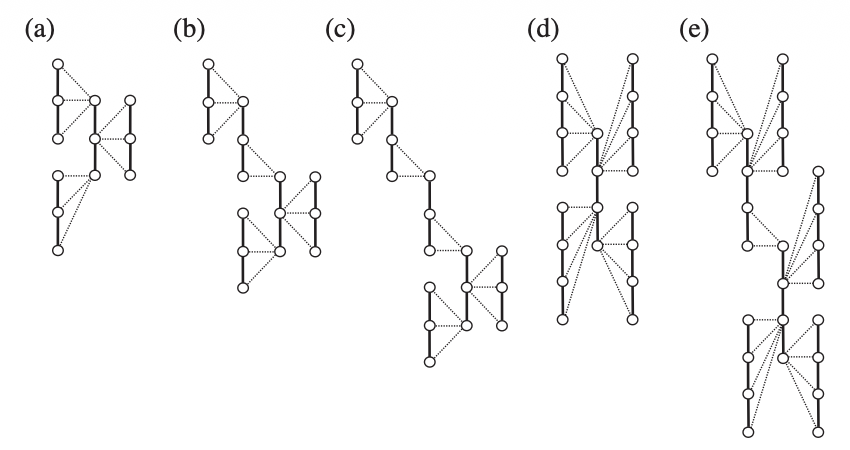}
    \caption{Construction of SD-C set for $d=3$ and $d=4$, image taken from \cite{Cabello2021ConvertingCtoNL}}
    \label{fig:CabelloSDC}
\end{figure}
Fig.~\ref{fig:CabelloSDC} shows the process of constructing the critical SI-C set. Here, every node represents a rank-1 projector. A continuous vertical line between $d\geq 3$ nodes indicates that they are mutually orthogonal. Hence, in dimension $d$, in any KS assignment, one of them has to be assigned 1. A dashed line between two nodes indicates that there is a TIFS between (and including) them. Hence, in any KS assignment, both of them cannot be assigned 1. (a) shows the case of $d=3$ and $N=d+1$, (b) $d=3$ and $N=d+2$, (c) $d=3$ and $N=d+3$, (d) $d=4$ and $N=d+1$, (e) $d=4$ and $N=d+2$. The construction works similarly for any $d \geq 3$ and $N\geq d+1$. In all cases, it is impossible to assign to the depicted nodes the values 0 or 1 satisfying that one of the $d$ nodes in each continuous vertical line must be 1, while nodes connected by a dashed line cannot both be 1. However, such an assignment is possible whenever we remove any of the depicted nodes. In all cases, the resulting set is a critical KS set in dimension $d$. However this does not imply that the set is a SI-C set. This however can be solved suitably choosing the extra nodes used for the TIFSs \cite{Budroni2022KSContextuality}.
This construction allows us to create a SI-C set from a SD-C set.
In \cite{Cabello2015NecessaryAndSufficient} it is proved that a set of projectors $S''=\{\Pi_1,...,\Pi_n\}$ is a SI-C set if and only if there are non-negative numbers $w=(w_1,...,w_n)$ and a number $0\leq y<1 $ such that $\sum_{j\in \mathcal{I}}w_j\leq y$ for all $\mathcal{I}$ where $\mathcal{I}$ is the independent set of the graph $G$ and $\sum_iw_i\Pi_i\geq I$. An independent set of a graph is a set of vertices in a graph, no two of which are adjacent.
Based on this condition we can write the following inequality, which is valid for a noncontextual hidden variable model~\cite{CabelloSeveriniWinter2014}:
\begin{equation}
\label{eq: NC inequality CabelloSevWint}
\sum_{i \in V(G)} w_i P(\Pi_i = 1) - \sum_{(i,j) \in E(G)} \max(w_i, w_j)\cdot p(\Pi_i = 1, \Pi_j = 1) \myeq \alpha(G,w),
\end{equation}
where $p(\Pi_i = 1, \Pi_j = 1)$ is the probability of obtaining outcome 1 in the measurement associated to $\Pi_i$ and also in the measurement associated to $\Pi_j$, $V(G)$ and $E(G)$ are the sets of vertices and edges of $G$, respectively, $\alpha(G,w)$ is the independence number of $(G,w)$ [i.e., the graph in which weight $w_i$ is assigned to each $i \in V(G)$]. The independence number of a (weighted) graph is the cardinality of its largest set of vertices (taking their weights into account) such that no two are adjacent.

The next step is the following: in each run of the experiment, we prepare a pair of particles in the two-qudit maximally entangled state:
\begin{equation}
\label{eq: max entangled qudit}
|\Psi\rangle = \frac{1}{\sqrt{d}} \sum_{k=0}^{d-1} |kk\rangle,
\end{equation}
distribute one particle to Alice and the other to Bob, and allow Alice (Bob) to freely and independently choose and perform one measurement from $\mathcal{S}''$ (from the set obtained by taking the complex conjugate of the elements in $\mathcal{S}''$).
The behaviour produced by this state and these measurements violate the following Bell inequality:
\begin{equation}
\begin{aligned}
\label{eq: Bell inequality from SIC}
\sum_{i \in V(G)} w_i P(\Pi_i^A = 1, \Pi_i^B = 1)
- \sum_{(i,j) \in E(G)} \frac{\max(w_i, w_j)}{2}\cdot
\left[
P(\Pi_i^A = 1, \Pi_j^B = 1) + P(\Pi_j^A = 1, \Pi_i^B = 1)
\right]
\myeqq \alpha(G,w),
\end{aligned}
\end{equation}
where $P(\Pi_i^A = 1, \Pi_j^B = 1)$ is the probability that Alice obtains outcome 1 for measurement $\Pi_i$ on her particle and Bob obtains outcome 1 for measurement $\Pi_j$ on his particle. 

That Eq.~\eqref{eq: Bell inequality from SIC} is a Bell inequality follows from the fact that, for Local Hidden Variable Theories, the maximum of the left-hand side of Eq.~\eqref{eq: Bell inequality from SIC}  is always attained by a deterministic assignment for the outcomes of the elements of $\mathcal{S}''$ in Alice's particle and a deterministic assignment for the outcomes of the elements of the complex conjugate of $\mathcal{S}''$ in Bob's particle. To maximise the left-hand side of Eq.~\eqref{eq: Bell inequality from SIC}, we need to maximise (taking into account the weights) the number of projectors $\Pi_i$ to which outcome 1 is assigned both in Alice's and Bob's particles, while minimising the number of adjacent $\Pi_j$ to which outcome 1 is assigned, which is exactly the definition of independence number of a (weighted) graph $(G,w)$. The quantum violation of Eq.~\eqref{eq: NC inequality CabelloSevWint} for the maximally mixed state using S'' is equal to the quantum violation of the Bell inequality \ref{eq: Bell inequality from SIC} for the
maximally entangled state \ref{eq: max entangled qudit} and using $\mathcal{S}''$ in Alice’s side and the complex conjugate of $\mathcal{S}''$ in Bob’s side.

With this argument we demonstrated that noncontextuality inequalities of the form of Eq.~\eqref{eq: NC inequality CabelloSevWint} are in one-to-one correspondence with Bell inequalities of the form of Eq.~\eqref{eq: Bell inequality from SIC}. This concludes the demonstration of Prop.~\ref{prop: state dep context implies Bell NL}.

\section{Proof of Theorem \texorpdfstring{\ref{theorem: mixed qubit implies prep contextuality}}{6}}
\label{appendix: proof of theorem mixed qubit prep context}
We prove here that every mixed state of a qubit is preparation contextual.

A qubit in any mixed (i.e., non-pure) state can be represented by the density matrix $\rho_n=\frac{1}{2}(\mathds{1}+r_x\sigma_x+r_y\sigma_y+r_z\sigma_z)$ where $\mathds{1}$ is the identity matrix, $\vec{r}=(r_x,r_y,r_z)$ is the Bloch vector with $0\leq|\vec{r}|<1$ and $\sigma_{x,y,z}$ are Pauli matrices. 
This state can have multiple different decompositions, where we can associate each decomposition to a different preparation procedure. Consider the following 6 decompositions:
\begin{equation}
\begin{aligned}
\label{eq:6 decompositions of rho}
\rho_n =& \frac{1 - q}{2} \ket{\phi_n^\perp}\bra{\phi_n^\perp} + \frac{1 + q}{2} \ket{\phi_n}\bra{\phi_n} 
= \frac{1 - q}{2} \left( \ket{\psi_a}\bra{\psi_a} + \ket{\psi_a^\perp}\bra{\psi_a^\perp} \right) + q \ket{\phi_n}\bra{\phi_n}  \\
=& \frac{1 - q}{2} \left( \ket{\psi_b}\bra{\psi_b} + \ket{\psi_b^\perp}\bra{\psi_b^\perp} \right) + q \ket{\phi_n}\bra{\phi_n}  
= \frac{1 - q}{2} \left( \ket{\psi_c}\bra{\psi_c} + \ket{\psi_c^\perp}\bra{\psi_c^\perp} \right) + q \ket{\phi_n}\bra{\phi_n}  \\
=& \frac{1 - q}{3} \left( \ket{\psi_a}\bra{\psi_a} + \ket{\psi_b}\bra{\psi_b} + \ket{\psi_c}\bra{\psi_c} \right)+ q\ket{\phi_n}\bra{\phi_n}\\  
=& \frac{1 - q}{3} \left( \ket{\psi_a^\perp}\bra{\psi_a^\perp} + \ket{\psi_b^\perp}\bra{\psi_b^\perp} + \ket{\psi_c^\perp}\bra{\psi_c^\perp} \right) + q \ket{\phi_n}\bra{\phi_n} 
\end{aligned}
\end{equation}

\noindent
where $\ket{\phi_n}\bra{\phi_n} = \frac{1}{2}(\mathds{1}+\hat{n}\cdot\sigma)$ where $\hat{n}=\vec{r}/|\vec{r}|$ and the vectors 
$\ket{\psi_a}, \ket{\psi_b}, \ket{\psi_c}$ are chosen from the equatorial plane of the Bloch sphere
perpendicular to $\hat{n}$ such that the (diameter) line joining $\ket{\psi_a}$ and $\ket{\psi_a^\perp}$ makes a $\pi/3$ angle with both the (diameter) line joining $\ket{\psi_b}$ and $\ket{\psi_b^\perp}$, and the (diameter) line joining $\ket{\psi_c}$ and $\ket{\psi_c^\perp}$.
Whenever two density operators are orthogonal, the associated preparation procedures can be distinguished with certainty in a single shot measurement. Further, whenever two preparation procedures are distinguishable with certainty in a single shot measurement, their associated probability distribution must be non-overlapping \cite{Spekkens2005}. 

Due to this,
\begin{equation}
\label{eq: orthogonal ontic ditributions}
\begin{aligned}
\mu(\lambda|\phi_n)\mu(\lambda|\phi_n^\perp) &= 0,\;\;\;\;
\mu(\lambda|\psi_a)\mu(\lambda|\psi_a^\perp)= 0,\;\;\;\;
\mu(\lambda|\psi_b)\mu(\lambda|\psi_b^\perp) = 0,\;\;\;\;
\mu(\lambda|\psi_c)\mu(\lambda|\psi_c^\perp) = 0. 
\end{aligned}
\end{equation}
We can associate the six decompositions in Eq.~\eqref{eq:6 decompositions of rho} of $\rho_n$ with six different preparation procedures: 
$C_{\phi_n^\perp \phi_n}, C_{\psi_a \psi_a^\perp \phi_n}, \ldots, C_{\psi_b^\perp \psi_c^\perp \phi_n}$ respectively.
In an ontological model, a convex combination of preparation procedures is represented by a convex sum of associated probability distributions \cite{Spekkens2005}, meaning 
\begin{equation}
\label{eq: distributions convex sums}
    \begin{aligned}
&\mu(\lambda|\rho_n, \mathcal{C}_{\phi_n^\perp \phi_n}) = \frac{1-q}{2}\mu(\lambda|\phi_n^\perp) + \frac{1+q}{2}\mu(\lambda|\phi_n)  \\&
\mu(\lambda|\rho_n, \mathcal{C}_{\psi_a \psi_a^\perp \phi_n}) = \frac{1-q}{2}[\mu(\lambda|\psi_a) + \mu(\lambda|\psi_a^\perp)] + q\mu(\lambda|\phi_n)  \\&
\mu(\lambda|\rho_n, \mathcal{C}_{\psi_b \psi_b^\perp \phi_n}) = \frac{1-q}{2}[\mu(\lambda|\psi_b) + \mu(\lambda|\psi_b^\perp)] + q\mu(\lambda|\phi_n) \\&
\mu(\lambda|\rho_n, \mathcal{C}_{\psi_c \psi_c^\perp \phi_n}) = \frac{1-q}{2}[\mu(\lambda|\psi_c) + \mu(\lambda|\psi_c^\perp)] + q\mu(\lambda|\phi_n) \\&
\mu(\lambda|\rho_n, \mathcal{C}_{\psi_a \psi_b \psi_c \phi_n}) = \frac{1-q}{3}[\mu(\lambda|\psi_a) + \mu(\lambda|\psi_b) + \mu(\lambda|\psi_c)]+ q\mu(\lambda|\phi_n) \\&
\mu(\lambda|\rho_n, \mathcal{C}_{\psi_a^\perp \psi_b^\perp \psi_c^\perp \phi_n}) = \frac{1-q}{3}[\mu(\lambda|\psi_a^\perp) + \mu(\lambda|\psi_b^\perp) + \mu(\lambda|\psi_c^\perp)]+ q\mu(\lambda|\phi_n)
\end{aligned}
\end{equation}

Let us denote the support of $\mu(\lambda|\rho_n)$ by $\Lambda_{\rho_n}$, i.e., $\Lambda_{\rho_n} = \{\lambda \in \Lambda \mid \mu(\lambda|\rho_n) > 0\}$. As mentioned before, preparation noncontextuality requires the distribution over ontic states $\lambda$ associated with a preparation procedure to depend only upon the density matrix $\rho_n$ (i.e., that all preparations which give the same density matrix should give the same distribution $\mu(\lambda|\rho_n)$). This means if preparation noncontextuality is satisfied, the following equations should also hold:
\begin{equation}
\begin{aligned}
\label{eq: 6 ontic distributions}
\mu(\lambda|\rho_n) =& \frac{1-q}{2}\mu(\lambda|\phi_n^\perp) + \frac{1+q}{2}\mu(\lambda|\phi_n)  
=\frac{1-q}{2}[\mu(\lambda|\psi_a) + \mu(\lambda|\psi_a^\perp)] + q\mu(\lambda|\phi_n)  \\
=& \frac{1-q}{2}[\mu(\lambda|\psi_b) + \mu(\lambda|\psi_b^\perp)] + q\mu(\lambda|\phi_n) 
= \frac{1-q}{2}[\mu(\lambda|\psi_c) + \mu(\lambda|\psi_c^\perp)] + q\mu(\lambda|\phi_n) \\
=& \frac{1-q}{3}[\mu(\lambda|\psi_a) + \mu(\lambda|\psi_b) + \mu(\lambda|\psi_c)] + q\mu(\lambda|\phi_n)  
= \frac{1-q}{3}[\mu(\lambda|\psi_a^\perp) + \mu(\lambda|\psi_b^\perp) + \mu(\lambda|\psi_c^\perp)] + q\mu(\lambda|\phi_n) 
\end{aligned}
\end{equation}

However, there is no distribution which is compatible with both Eq.~\eqref{eq:6 decompositions of rho} and Eq.~\eqref{eq: 6 ontic distributions}. Indeed to satisfy Eq.~\eqref{eq: orthogonal ontic ditributions}, at any given $\lambda$ either $\mu(\lambda|\phi_n)$ or $\mu(\lambda|\phi_n^\perp)$ must be zero.
The same is true for the pairs $\{\mu(\lambda|\psi_a), \mu(\lambda|\psi_a^\perp)\}$, $\{\mu(\lambda|\psi_b), \mu(\lambda|\psi_b^\perp)\}$ and $\{\mu(\lambda|\psi_c), \mu(\lambda|\psi_c^\perp)\}$. Thus, we have sixteen different situations in total, but all of them lead to a contradiction, as shown in Ref.~\cite{Banik2014}. The above argument holds for any $\lambda\in \Lambda_{\rho_n}$. Therefore, a preparation-noncontextual assignment of ontic state $\lambda$ for the non-pure qubit state $\rho_n$ is not possible.

\end{appendices}
\end{document}